\documentclass[aps,prl,twocolumn,groupedaddress,amsmath,superscriptaddress]{revtex4-2}

\usepackage{amsfonts}
\usepackage[dvips]{graphicx}
\usepackage{color}
\usepackage{booktabs}
\usepackage{graphicx}
\usepackage{amssymb}
\usepackage{hyperref}
\usepackage{color}
\usepackage[caption=false]{subfig}
\usepackage{mathrsfs}
\usepackage{cleveref}
\usepackage{lipsum}
\usepackage{isomath}
\usepackage{amsmath} % 数学符号支持
\usepackage{mathtools} % provides \mathclap for aligned annotations
\usepackage{amsfonts} % 支持 \mathbb
\usepackage{algorithm} % 算法环境
\usepackage{algorithmic} % 算法命令支持
\usepackage{amsthm}
\usepackage{epstopdf}
\usepackage{txfonts}
\usepackage{dsfont}
\usepackage{placeins}
\usepackage{setspace}
\usepackage{multirow}
\usepackage{tikz}
\usetikzlibrary{arrows.meta,shapes.geometric}
\usepackage{tabularx}
\usepackage[T1]{fontenc}
\usepackage{colortbl}
\usepackage{hhline}
\usepackage{outlines}
\usepackage{xcolor}
\usepackage[bottom]{footmisc}
\usepackage[normalem]{ulem}
\allowdisplaybreaks[4]

\makeatletter
\definecolor{nblue}{rgb}{0.3,0.3,1.0}%229
\definecolor{ngreen}{rgb}{0.2,0.7,0.2}%161
\definecolor{nred}{rgb}{0.9,0.1,0}%711&900
\definecolor{nblack}{rgb}{0,0,0}
\definecolor{nyellow}{rgb}{1.0,0.75,0.0}

\newcommand{\beq}{\begin{equation}}
\newcommand{\eeq}{\end{equation}}

\newtheorem{proposition}{Proposition}

\newtheorem{lemma}{Lemma}
\newtheorem{corollary}{Corollary}

\newtheorem{theorem}{Theorem}

\makeatletter

\usepackage{babel}

\begin{document}
 
\title{Bounding Fidelity of Non-Stabilizer States from Local Pauli Measurements}
\author{Yi Li}
\affiliation{State Key Laboratory of Artificial Microstructure and Mesoscopic Physics, School of Physics, Peking University, Beijing 100871, China}
\affiliation{Naturwissenschaftlich-Technische Fakult\"{a}t, Universit\"{a}t Siegen, Walter-Flex-Stra{\ss}e 3, 57068 Siegen, Germany}
\author{Otfried G\"uhne}
\affiliation{Naturwissenschaftlich-Technische Fakult\"{a}t, Universit\"{a}t Siegen, Walter-Flex-Stra{\ss}e 3, 57068 Siegen, Germany}
\author{Qiongyi He}
\email{qiongyihe@pku.edu.cn}
\affiliation{State Key Laboratory of Artificial Microstructure and Mesoscopic Physics, School of Physics, Peking University, Beijing 100871, China}
\affiliation{Frontiers Science Center for Nano-optoelectronics \& Collaborative Innovation Center of Quantum Matter, Peking University, Beijing 100871, China}
\affiliation{Collaborative Innovation Center of Extreme Optics, Shanxi University, Taiyuan 030006, China}
\affiliation{Hefei National Laboratory, Hefei 230088, China}

\begin{abstract}
Nonstabilizerness, commonly referred to as magic, is essential for universal quantum computation. 
Yet, for pure $n$-qubit states this property is directly linked 
to the hardness of direct fidelity estimation. 
We propose a scheme to bound the fidelity on non-stabilizer states by choosing operators with low measurement overhead and using parallel local Pauli measurements to estimate compatible observables. 
We selectively measure operators that share a permutational symmetry without imposing any assumption on the physical system. 
This selection reduces the optimization to a linear program with only $O(n)$ variables for bounding fidelity. 
Our methods can be extended to bound nonlinear quantities
like the purity via convex quadratic programs. 
For hypergraph states that can exhibit extensive nonstabilizerness, we demonstrate for up to $n=500$ qubits that our framework substantially reduces the measurement overhead under both local and global depolarizing noise. 
These results show that symmetry induced by measurement selection reduces both measurement and computational costs, enabling scalable characterization of magic resources in quantum computing and quantum networks.

\end{abstract}

\maketitle

%%%%%%%%%%%%%%%%%%%%%%%%%%%%%%
\textit{Introduction---}
Multipartite entangled states are central resources for quantum networks, distributed sensing, and quantum computing~\cite{rmp_2023_networks,nature_2008_quantumnetworks,science_2018_quantumnetworks3, prl_2018_sensing,qst_2021_review_sensing,prl_2001_graphstate,nature_2024_Luckin}. Recent experimental advances have enabled the preparation and distribution of such states at increasingly large scales~\cite{nature_2023_51qubits,nature_2024_60atom,arxiv_2025_120qubits,np_2026_zuchongzhi,nc_2024_haohuawang}. 
Alongside this progress, reliable state characterization is indispensable for validating their experimental realization~\cite{nrp_2020_review}. 
The hardness of quantum state tomography has shifted attention toward the direct estimation of task-relevant quantities~\cite{ieee_2017_hardness_qst}. 
Among these quantities, fidelity to a target state serves as a central figure of merit for benchmarking state preparation and witnessing genuine multipartite entanglement (GME)~\cite{nature_2023_51qubits,nature_2024_60atom,arxiv_2025_120qubits,np_2026_zuchongzhi,nc_2024_haohuawang,prl_2005_GMEwitness,prl_2026_learningmixedstates,prl_2024_versatileFid}. 
Existing approaches to its estimation include shadow tomography~\cite{shadow_2018_aaronson,shadow_2020_hyh,NRP_2022_Huang,prl_2022_optimizedshadow} and direct fidelity estimation~(DFE)~\cite{prl_2011_directfidelity,prl_2011_dfe2}. 
These schemes are particularly efficient for stabilizer states, which are characterized as eigenstates of a certain set of tensor products of Pauli matrices. 
This simple description gives a natural choice of the measurements for fidelity estimation, but it also implies that stabilizer states have low complexity and can be efficiently simulated classically~\cite{pra_2004_DG_SA}.

Going beyond, non-stabilizer states provide magic required for universal quantum computation~\cite{njp_2014_nonsta_advantage,pra_2005_nonsta_advantage,arXiv_1998_DG_nonstab_advantage,pra_2004_DG_SA,nature_2017_review_nonstab_advantage}, yet the nonstabilizerness is closely linked to the difficulty of DFE based on local Pauli measurements~\cite{pra_2023_Leone}. Meanwhile, sample-efficient schemes typically rely on nonlocal measurements~\cite{shadow_2020_hyh,prl_2026_huangjun_fidelity,arXiv_2025_huangjunzhu_fidelity,NRP_2023_randomizedmeasurement}. 
These challenges have been exemplified by hypergraph states (also known as locally maximally entangleable states), which form a representative family of magic resources~\cite{prapplied_2019_hypergraph,pra_2009_LME,JPA_2014_Otfried,njp_2013_marcushuber,PRXQuantum_2022_ziwenliu}. 
They generalize graph states including the well-known Greenberger-Horne-Zeilinger states and cluster states~\cite{pra_2004_graphstate}. 
Unlike graph states that are stabilizer states and
have no magic, or many random states that have nearly maximal magic but cannot provide nontrivial speedups over classical computers~\cite{PRXQuantum_2022_ziwenliu}, hypergraph states can exhibit extensive magic and serve as resources for quantum error correction~\cite{JPA_2017_hypergraph_QECC} and Pauli-universal measurement-based quantum computation~\cite{sr_2019_hypergraph,pra_2019_hypergraph_otfried,npj_2016_hypergraph_MBQC}. 
They also occur in quantum many-body physics~\cite{npj_2016_hypergraph_MBQC,pra_2018_spto,PRB_2012_hypergraph_SPTO}. 
% However, they generally have broader Pauli support than stabilizer states, making DFE costly~\cite{prapplied_2019_hypergraph}. 

\begin{figure}[t]
\vspace{-0.2cm}
    \centering
\includegraphics[width=1\linewidth]{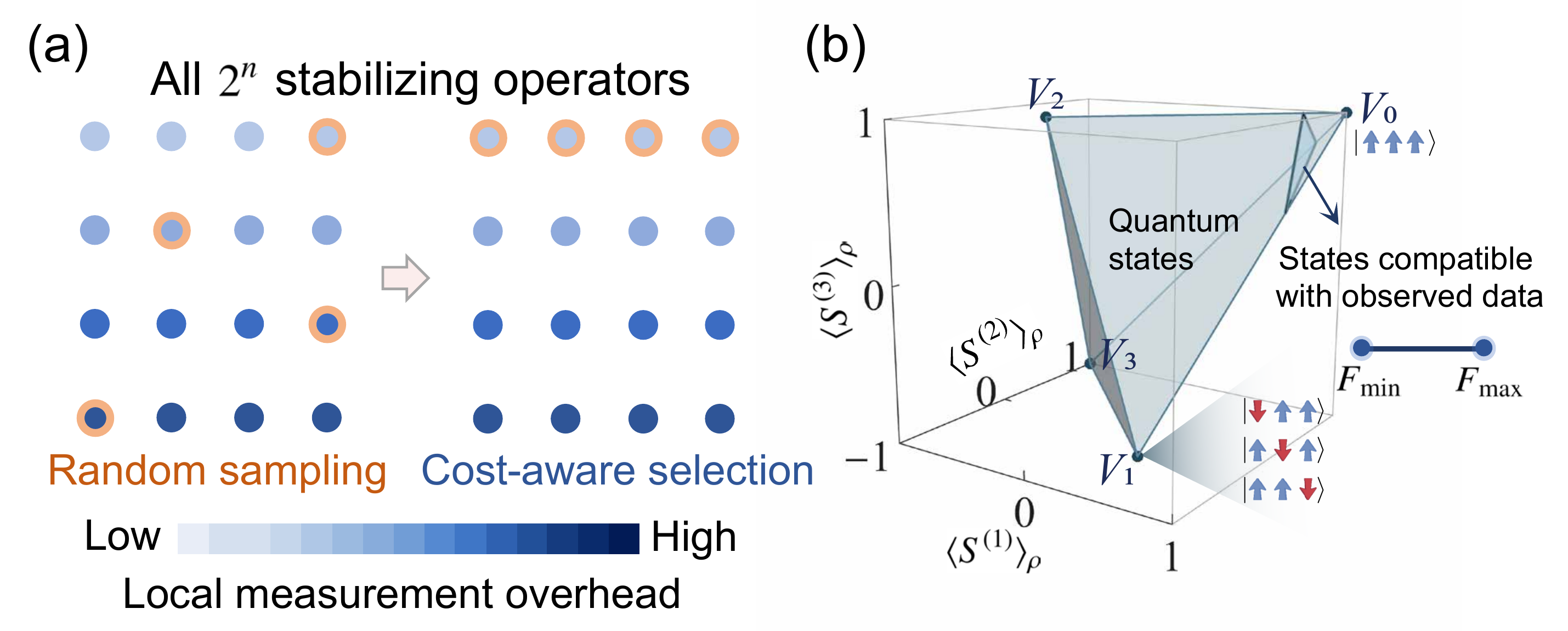}
    \caption{\label{fig:framework}
    Measurement selection and a geometric fidelity-bounding framework. 
    (a)~The fidelity to any $n$-qubit pure target state can be estimated by measuring operators randomly selected from its nonlocal $2^n$ stabilizing operators. Since the measurement overhead under parallel local Pauli measurements can vary widely across these operators, we select operators with low measurement overhead. 
    (b)~Joint numerical range of the stabilizer moments $\{S^{(1)},S^{(2)},S^{(3)}\}$ defined in Eq.~\eqref{eq:fidelity_spinmoment}, forming a tetrahedron with all quantum states mapped inside for $n = 3$. 
    All $2^3$ common eigenstates illustrated in the spin picture are mapped to the four vertices, and the vertex $V_0$ can only be achieved by the target state. 
    Suppose $\langle S^{(1)}\rangle = 0.8$ is observed, all compatible states map to the corresponding slice, from which the fidelity interval $[F_{\min},F_{\max}]$ can be obtained.} 
    \vspace{-0.5cm}
\end{figure}

The heavy measurement overhead motivates bounding fidelity from limited data, as, from a practical perspective, many certification tasks require only an informative fidelity bound rather than an exact estimation. This naturally relates to a fundamental question~\cite{pra_1999_fakeentanglement,njp_2006_plenio,prl_2007_otfried,njp_2007_jens,prl_2004_maciej,pra_2020_BoundingFidelity_matteo,pra_2019_Kalev_BoundingFidelity,pra_2015_boundingfidelity,pra_2006_boundingfidelity,jmo_2009_boundingfidelity}: \textit{Given a desired property, what is the least value it can take among all quantum states compatible with the available measurement data?} 
Such a task has been formulated as a minimization over all physical states consistent with the observed data, allowing all unmeasured degrees of freedom to vary in the least favorable way. % permitted by quantum mechanics 
Nonetheless, the underlying state space generally grows exponentially with system size. 
% and analytic solutions are available only in specific cases~\cite{pra_2019_Kalev_BoundingFidelity,pra_2015_boundingfidelity,pra_2006_boundingfidelity}. 
Similar optimizations and the associated numerical bottlenecks also arise broadly in quantum cryptography~\cite{rmp_2024_armin,np_2026_maximalrandomness}.

In this work, we develop a scalable framework for bounding fidelity to non-stabilizer states from local Pauli measurement data. 
The key idea comes from the fact that {\it every} $n$-qubit pure state is the unique common eigenstate of $2^n$ 
{\it nonlocal} stabilizing operators~\cite{pra_2021_generalizedstabilizerstates,prl_2019_sciara,pra_2023_Leone}. 
As shown in Fig.~\ref{fig:framework}~(a), the measurement overhead varies widely across these operators for certain target states. 
Rather than randomly sampling them in the spirit of DFE and selecting operators that may require exponentially many Pauli measurements, our protocol bounds the fidelity by selectively measuring operators with low measurement overhead. 
The selected set can be systematically enlarged to further tighten the bounds, while the associated state space projection described by the joint numerical range (JNR)~\cite{prl_2020_numericalrange,JNR_2013_karol} remains a simple polytope, see Fig.~\ref{fig:framework}~(b).

In this way, our protocol not only reduces the measurement cost but also turns the worst-case optimization into a linear program~(LP) with only $(n+1)$ variables. 
Our approach also extends to bounding nonlinear quantities such as purity, without increasing the number of optimization variables. 
This framework is well suited to hypergraph states, for which our framework significantly reduces the sample complexity compared to the standard DFE. We demonstrate this advantage under both global and local depolarizing noise for up to $n=500$ qubits. 
Moreover, this framework allows the same data to simultaneously bound fidelities of different states \cite{prl_2024_otfried}. 
Finally, our formulation easily extends existing analytic fidelity bounds, thereby strengthening fidelity-based GME witnesses~\cite{prl_2005_GMEwitness}.

\textit{Framework for bounding fidelity using stabilizing operators---}Consider an $n$-qubit pure state $|\psi\rangle$ in the Hilbert space $\mathcal{H}=(\mathbb{C}^2)^{\otimes n}$ of dimension $d=2^n$, and let $[n]\coloneqq\{1,\ldots,n\}$. 
Any pure state can be prepared by a unitary quantum circuit $U$ as $|\psi\rangle=U|0\rangle^{\otimes n}$, which is called a \textit{non-stabilizer state} if $U$ cannot be chosen as a Clifford circuit. 
In this work, we define the generators $g_i\coloneqq UZ_iU^\dagger$ for $i\in[n]$, where $Z_i$ denotes the Pauli $Z$ matrix on the $i$-th qubit. 
These generators are mutually commuting, Hermitian, unitary, traceless, and independent, but generally nonlocal. 
By construction, $|\psi\rangle$ is their unique common $+1$ eigenstate~\cite{pra_2021_generalizedstabilizerstates,prl_2019_sciara,pra_2023_Leone}.

These generators generate an abelian group $\mathcal{S} \coloneqq  \{s_{\boldsymbol{x}} = \prod_{i = 1}^n g_i^{x_i}\}_{\boldsymbol{x} \in \mathbb{Z}_2^n }$, whose elements are called stabilizing operators. 
The state is also the unique common $+1$ eigenstate of this group, and can therefore be written as $|\psi \rangle\langle \psi | = \prod_{i\in[n]} {\mathbb{I} + g_i \over 2}$, which expands into the uniform sum over all products of the generators. 
Grouping these products by their order, we define the \textit{order-$k$ stabilizer moment} and obtain the following expansion of the target-state projector:
\begin{equation}
\label{eq:fidelity_spinmoment}
    \begin{aligned}
        S^{(k)} \coloneqq {1\over L_k}\sum_{I\subseteq[n],|I| = k} \prod_{i\in I} g_i, \quad
        |\psi \rangle \langle \psi| = {1\over 2^n}\left(\mathbb{I} +\sum_{k = 1}^n L_kS^{(k)}\right),
    \end{aligned}
\end{equation}
where $L_k=\binom{n}{k}$ and $k\in[n]$. 
%For the stabilizer states, the stabilizer set $G$ consists of $n$-qubit Pauli operators. 
The stabilizing operators admit an orthonormal eigenbasis $\{|\boldsymbol{\lambda}\rangle\}_{\boldsymbol{\lambda}\in\mathbb{Z}_2^n}$, labeled by bit strings $\boldsymbol{\lambda}=(\lambda_1,\cdots,\lambda_n)$ such that $g_i|\boldsymbol{\lambda}\rangle = (-1)^{\lambda_i}|\boldsymbol{\lambda}\rangle$. 
Here, $\lambda_i$ labels the generator $g_i$ rather than a physical qubit, so $|\boldsymbol{\lambda}\rangle$ need not be a product state and can be highly entangled.

To estimate the fidelity between a laboratory-prepared state $\sigma$ and a target state $|\psi\rangle$, Eq.~\eqref{eq:fidelity_spinmoment} suggests a direct route by randomly sampling elements in $\mathcal{S}$ as in DFE~\cite{prl_2011_directfidelity,pra_2023_Leone}. 
In fact, for several families of target states, different stabilizing operators have different Pauli measurement overheads. 
The Pauli expansion of each stabilizing operator $s_{\boldsymbol{x}}$ contributing to $S^{(k)}$ contains a number of terms bounded solely by its Hamming weight $k\equiv|\boldsymbol{x}|$. 
As detailed in Appendix~\ref{app:one-step-tfim}, this number is at most: (1) $O(3^k)$ terms for Clifford-enhanced product states~\cite{arxiv_2026_sater_cps_certification,PRXQuantum_2025_cps}; (2) $2^{O(k)}$ terms for shallow circuits composed of arbitrary two-qubit gates~\cite{stoc_2024_huang_shallow_circuits}; and (3) $O(2^{k})$ terms for one-step Trotterized transverse-field Ising states~\cite{prl_2026_learningmixedstates,nature_2023_kim}. 
When the parallel measurements are taken into account, the number of Pauli settings required to measure $s_{\boldsymbol{x}}$ is also bounded by: (1) $O(2^{k-1})$ for hypergraph states~\cite{njp_2013_marcushuber,JPA_2014_Otfried}; and (2) $O(2^{k})$ for weighted graph states~\cite{review_2006_graphstate,prl_2006_weightedgraph}. 
These bounds are given in the worst case, which can be further reduced when the structure of the target state is specified.

Therefore, we propose to bound the fidelity by selectively measuring stabilizing operators with low measurement overhead. 
This naturally leads to a fundamental problem~\cite{pra_1999_fakeentanglement,njp_2006_plenio,jmo_2009_boundingfidelity,prl_2007_otfried,njp_2007_jens,prl_2004_maciej,pra_2020_BoundingFidelity_matteo,pra_2019_Kalev_BoundingFidelity,pra_2015_boundingfidelity,pra_2006_boundingfidelity}: % of determining the least fidelity compatible with limited data
finding the least fidelity consistent with observed expectation values $\{ \langle s_{\boldsymbol{x}}\rangle_\sigma \}_{\boldsymbol{x}\in\mathcal{M}}$ for a chosen index set $\mathcal{M}$, %the problem has been formulated as follows
\begin{equation}
\label{eq:boundingfidelity}
    F_{\min}(\mathcal{M})
    =
    \min_{\rho\in\mathcal{D}(\mathcal{H})}
    \left\{
    \langle\psi|\rho|\psi\rangle:
    {\rm Tr}(\rho s_{\boldsymbol{x}})
    =
    \langle s_{\boldsymbol{x}}\rangle_\sigma,\ 
    \boldsymbol{x}\in\mathcal{M}
    \right\}.
\end{equation}
Here, $\mathcal{D}(\mathcal{H})$ denotes the set of all $n$-qubit quantum states, namely positive semidefinite operators with unit trace. 
Such an optimization is a semidefinite program (SDP) over all quantum states, which can be reduced to an LP over $2^n$ variables~\cite{rmp_2024_armin,arxic_2025_lisa} (see also Appendix~\ref{app:A}). 
% both the objective function and the constraints are invariant under the action $\rho \mapsto s \rho s^{\dagger}$ for all $s \in \mathcal{S}$. 
% Consequently, we can assume $\rho$ is diagonal in the common eigenbasis $\{ |\boldsymbol{\lambda}\rangle \}$ without loss of generality, yielding an LP over its $2^n$ diagonal entries (see Appendix~\ref{app:A}). 
However, this LP remains exponentially large and analytical solutions are known only for specific choices of measured operators or target states~\cite{jmo_2009_boundingfidelity,pra_2006_boundingfidelity,pra_2015_boundingfidelity,pra_2019_Kalev_BoundingFidelity}. 
In what follows, we show that the %order-$k$ 
stabilizer moments play two central roles in our protocol. They guide the selection of stabilizing operators with low measurement overhead and thus reduce the measurement cost, while their simple geometry in the projected state space reduces the exponentially large LP to a linear-size LP.

% \textit{Joint Numerical range of the stabilizing operators---}
\textit{Geometric characterization---}The JNR of $\kappa$ Hermitian operators $O_1,O_2,\cdots,O_\kappa$ acting on the Hilbert space $\mathcal{H}$ is defined by \cite{prl_2020_numericalrange,JNR_2013_karol}
    \begin{equation}
        W(O_1,O_2,\cdots,O_\kappa) \coloneqq \{(\langle O_1 \rangle_\rho, \cdots, \langle O_\kappa \rangle_\rho)^{\mathsf T}: \rho \in \mathcal{D}(\mathcal{H})\},
    \end{equation}
    which is a compact convex subset of $\mathbb{R}^{\kappa}$. 
    For example, the JNR of the Pauli operators $X$, $Y$, $Z$ forms the well-known Bloch ball, with all single qubit states mapping inside. 

We now show that the JNR of set $\{S^{(k)}\}_{k\in[n]}$ is a polytope with at most $n + 1$ vertices. 
Since these operators are diagonal in their common eigenbasis $|\boldsymbol{\lambda}\rangle$, for any state $\rho$ and each $k\in[n]$, we have
\begin{equation}
    \label{eq:4}
    \operatorname{Tr}(\rho S^{(k)})=\sum_{\boldsymbol{\lambda} \in \mathbb{Z}_2^n}\langle\boldsymbol{\lambda}|\rho S^{(k)}|\boldsymbol{\lambda}\rangle=\sum_{\boldsymbol{\lambda}\in \mathbb{Z}_2^n}\underbrace{\langle\boldsymbol{\lambda}|\rho|\boldsymbol{\lambda}\rangle}_{p_{\boldsymbol{\lambda}}}\langle\boldsymbol{\lambda}|S^{(k)}|\boldsymbol{\lambda}\rangle,
\end{equation}
where $p_{\boldsymbol{\lambda}}\geq0$ and $\sum_{\boldsymbol{\lambda}}p_{\boldsymbol{\lambda}}=1$. 
Therefore, the JNR is a polytope with at most $2^n$ vertices. To identify these vertices, we regard $\boldsymbol{\lambda}$ as a spin configuration with $\lambda_i=0$ ($1$) corresponding to spin-up (spin-down). Let $n_{-}\coloneqq\#\{i:\lambda_i=1\}$ denotes the number of spin-down components. Since each eigenvalue $\langle\boldsymbol{\lambda}|S^{(k)}|\boldsymbol{\lambda}\rangle$ is invariant under spin permutations, it depends only on $n_-$. The JNR is then the convex hull of $n+1$ distinct vertices $V_{n_-}\coloneqq (\langle\boldsymbol{\lambda}|S^{(1)}|\boldsymbol{\lambda}\rangle,\ldots,\langle\boldsymbol{\lambda}|S^{(n)}|\boldsymbol{\lambda}\rangle)^{\mathsf T}$, as illustrated in Fig.~\ref{fig:framework}~(b). 
% Hence, the $2^n$ common eigenstates give rise to only $n+1$ distinct vertices, as illustrated in Fig.~\ref{fig:framework}~(b). 
% Since the eigenvalue $\langle\boldsymbol{\lambda}|S^{(k)}|\boldsymbol{\lambda}\rangle$ is invariant under the spin permutations, it depends only on the number of spin-down components $n_{-}\coloneqq\#\{i:\lambda_i=1\}$. 
% Hence, all eigenstates with the same $n_{-}$ map to a single vertex, denoted by $V_{n_-}$. % by $V_{n_-}\coloneqq\bigl(\langle\boldsymbol{\lambda}|S^{(1)}|\boldsymbol{\lambda}\rangle,\ldots,\langle\boldsymbol{\lambda}|S^{(n)}|\boldsymbol{\lambda}\rangle\bigr)^{\mathsf T}$. 
% Using the spin picture, we can easily derive an analytic expression for each component of $V_{n_{-}}$. 
For $k=1$, the eigenvalue %$\langle\boldsymbol{\lambda}|S^{(1)}|\boldsymbol{\lambda}\rangle$ 
corresponds to the total spin, i.e., $V_{n_-}(1)=({n-2n_-})/{n}$. %For $k = n$, it is $V_{n_-}(n)=(-1)^{n_-}$. 
More generally, we have 
\begin{equation}
    \begin{aligned}
    V_{n_{-}}(k) = {1\over L_k} \sum_{|I| = k} (-1)^{\sum_{i\in I}\lambda_i} = {1\over L_k}\sum_{j=0}^k (-1)^j \binom{n_{-}}{j}\binom{n_{+}}{k-j},
    \end{aligned}
\end{equation}
where $n_{+} = n-n_{-}$ and terms with $j>n_-$ vanish. It coincides with the $k$-body correlators of spin states~\cite{prl_2023_jiajie}. Here, we choose $k$ spins from the total of $n$. If $j$ of them are spin-down and the remaining $(k - j)$ are spin-up, the product of their spin values contributes a factor of $(-1)^j$. There are $\binom{n_{-}}{j}\binom{n_{+}}{k-j}$ such combinations. 
% Based on Eq.~\eqref{eq:4}, any point in the JNR admits a convex decomposition of these vertices, i.e., ${\rm tr}\left(\rho S^{(k)}\right) = \sum_{n_- = 0}^n p_{n_-}V_{n_-}(k)$ with $p_{n_{-}}\geq 0$ and $\sum_{n_{-} = 0}^n p_{n_{-}} = 1$. 

\textit{Scalable fidelity bounds from incomplete data---}The simple geometry enables fidelity to be bounded efficiently using the expectation values of only a subset $J\subseteq\{S^{(k)}\}_{k\in[n]}$. 
When these values are estimated from a finite number $N_{\rm tot}$ of single-shot measurements, we replace the exact equality constraints in Eq.~\eqref{eq:boundingfidelity} with inequalities that account for statistical fluctuations~\cite{quantum_2024_acin,prl_2026_acin}. 
Specifically, suppose the measurement statistics yield empirical estimates $\tilde{\mu}_k$ with additive errors at most $\epsilon$ for all $S^{(k)}\in J$ at a confidence level of $1-\delta$. 
We can bound the true fidelity with the same confidence using the following LP with only $n + 1$ variables %The following LP bounds the true fidelity with the same confidence, involving only $n+1$ variables
\begin{equation}
    \begin{aligned}
    \label{eq:main}
       &F_{\min/\max}(J) =  \underset{\{p_{n_-}\}_{n_-=0}^{n}}{\min/\max} ~~~p_0\\
        &~~~~~~{\rm s.t.}~~~ \left|\sum_{n_{-}=0}^n p_{n_{-}}V_{n_{-}}(k) - \tilde{\mu}_k \right| \leq \epsilon,~~~~\forall S^{(k)}\in J,\\
        &~~~~~~~~~~~~~~~p_{n_{-}}\geq 0,~~\sum_{n_{-} = 0}^n p_{n_{-}} = 1,~~\forall n_{-}.
    \end{aligned}
\end{equation}
Here, the objective is the fidelity with the target state $|\psi\rangle=|\boldsymbol{\lambda}=\boldsymbol{0}\rangle$. The constraints use Eq.~\eqref{eq:4} to ensure agreement with the measured observed data within $\epsilon$. 
% Here, the objective function is $p_0$, as the target state is the unique common $+1$ eigenstate, i.e., $|\psi\rangle = |\boldsymbol{\lambda} = \boldsymbol{0}\rangle$. 

% \yl{If measured statistics ensure that the true state satisfies all constraints simultaneously with probability at least $1-\delta$, then its fidelity lies in $[F_{\min}(J),F_{\max}(J)]$ with the same confidence.} 

From the viewpoint of symmetry reduction of an SDP~\cite{rmp_2024_armin}, we reduce the exponential SDP Eq.~\eqref{eq:boundingfidelity} to the linear-size LP Eq.~\eqref{eq:main} by grouping the stabilizing operators. This choice makes both the constraints and the objective invariant under $\rho\mapsto U_\pi\rho U_\pi^\dagger$, where $U_\pi|\boldsymbol{\lambda}\rangle = |\pi\boldsymbol{\lambda}\rangle$ and $\pi\in S_n$ permutes the eigenbasis labels rather than physical qubits. Symmetrizing any feasible state as $\rho_{\rm sym}={1\over n!}\sum_{\pi\in S_n}U_\pi\rho U_\pi^\dagger$ therefore reduces the optimization to the $n+1$ variables.

\textit{Extensions---}Our approach can be extended to certify magic using fidelity-based witnesses~\cite{prxquantum_2026_liu_triangle,arxiv_2026_tang_witness_expansion,npjqi_2026_haug_mixed_magic} and to bound nonlinear quantities. 
In fact, the probabilities $p_{n_-}$ correspond to projections onto $n+1$ mutually orthogonal subspaces of respective dimensions $L_{n_-}$, which constrain the spectrum of $\rho$. 
Thus, replacing the objective function in Eq.~\eqref{eq:main} with $\sum_{n_-}p_{n_-}^2/L_{n_-}$ yields a lower bound on the purity and hence an upper bound on the second-order R\'enyi entropy~\cite{qip_2025_generalizedentropy}.
It turns the LP into a convex quadratic program without changing the number of variables, which can still be solved efficiently.

In addition, our framework naturally extends to different groupings of generators. 
% for some target states with spatially local generators, their generators can be partitioned into $\chi$ groups $A_1, A_2,\cdots ,A_\chi$ such that all products within each group $A_c$ are simultaneously accessible from parallel measurements~\cite{prapplied_2019_hypergraph}. 
For some target states with spatially local generators, the generators can be partitioned into $\chi$ groups such that all products within each group are efficiently accessible~\cite{prapplied_2019_hypergraph}. 
Defining stabilizer moments within each group yields a similar polytope description, with vertices labeled by the number of down spins in each group. We show in Appendix~\ref{app:binary-group-fidelity} that the resulting LP for fidelity bounds can be reduced to $2^\chi$ variables. 
% We can similarly define the grouped stabilizer moments $S_c^{(k)}\coloneqq\frac{1}{\binom{n_c}{k}}\sum_{\substack{I\subseteq A_c\\|I|=k}}\prod_{i\in I}g_i$ for $c\in[\chi]$ and $k\in[n_c]$. 
% With the same grouping of spins, the vertices of the associated JNR depend only on the number of down spins in each group, specified by $\boldsymbol{n}_- = (n_{-,1},\cdots,n_{-,\chi})^{\mathsf T}$. The fidelity can therefore be bounded by solving an LP with $\prod_{c=1}^{\chi}(n_c+1)$ variables. In Appendix~\ref{app:binary-group-fidelity}, we further reduce the number of variables to $2^\chi$.

\textit{Applications for hypergraph states---}A hypergraph state associated with a hypergraph $H=(V,E)$ is defined by $|H\rangle=\left(\prod_{e\in E}C_e\right)|+\rangle^{\otimes n}$, where $C_e\coloneqq\mathbb{I}-2|1\cdots1\rangle_e\langle1\cdots1|$ is the generalized controlled-$Z$ gate. 
It applies a phase flip when all qubits in $e$ are in $|1\rangle$. 
% Here, $V$ labels the qubits and $E$ denotes the hyperedge set. 
Then, its generator is given by~\cite{JPA_2014_Otfried}
\begin{equation}
    \label{eq:generator}
    g_i=\left(\prod_{e\in E}C_e\right)X_i\left(\prod_{e\in E}C_e\right)^\dagger
    =X_i\otimes \prod_{e\in \{e\in E: i \in e\}  }C_{e\setminus\{i\}}. 
\end{equation}
A graph state is the special case in which every hyperedge contains two vertices and is therefore a stabilizer state.

The Pauli expansion of these stabilizing operators may contain exponentially many terms, which results in the difficulty of DFE for hypergraph states~\cite{prapplied_2019_hypergraph}. 
In fact, many of them can be estimated simultaneously from a single local Pauli measurement setting. 
To show this, we rewrite the generalized controlled-Z gate as $C_e=\mathbb{I}-2^{1-|e|}\otimes_{j\in e}(\mathbb{I}_j-Z_j)$, which is a linear combination of Pauli-$Z$ strings on the qubits in $e$. Thus, the generator Eq.~\eqref{eq:generator} can be estimated simultaneously using a single parallel setting such as $X_1\otimes Z_2\otimes \cdots \otimes Z_n$. 
It follows that $g_1g_2$ can be measured from only two settings 
\begin{equation}
\label{eq:common-pauli-settings}
\langle g_1 g_2 \rangle_\sigma \ \Leftarrow\ 
\left\{
\begin{array}{l}
X_1\otimes X_2\\
Y_1\otimes Y_2
\end{array}
\right\}\otimes Z_3\otimes \cdots \otimes Z_n,
\end{equation}
where terms containing an odd number of Pauli $Y=iXZ$ operators vanish due to Hermiticity. 
More generally, $S^{(k)}$ contains $L_k$ stabilizing operators, each of which can be estimated using at most $2^{k-1}$ such local Pauli measurement settings. 
The specific measurements of a given hypergraph structure are detailed in Appendix~\ref{app:large-deviation bounds}. 
% In addition, such a grouping strategy has been shown to improve the estimation accuracy~\cite{quantum_2025_groupingDFE}. 

Since the stabilizer moments $S^{(k)}$ with smaller $k$ have lower measurement overhead, we propose to bound the fidelity by measuring $J = \{S^{(1)}, \cdots, S^{(K)}\}$. 
Here, $K$ is a chosen integer, which can be increased to tighten the fidelity bounds. 
% For each $k$, we estimate $S^{(k)}$ by uniform sampling rather than enumerating all $L_k$ operators, making the sample complexity independent of $n$ at fixed $k$. For each $k=1,\ldots,K$, we independently and uniformly sample $l$ stabilizing operators $s_{\boldsymbol{x}}$ satisfying $|\boldsymbol{x}|=k$. 
For each $k\leq K$, we estimate $S^{(k)}$ by independently and uniformly sampling $l$ of its constituent stabilizing operators. 
This avoids enumerating all $L_k$ terms and makes the sample complexity independent of $n$ at fixed $k$. 
In the worst case, each of these operators requires $2^{k-1}$ settings. %every stabilizing operator contributing to $S^{(k)}$
We then use $M$ copies of states for each setting, and postprocess the outcomes to construct the unbiased estimator $\tilde{\mu}_k$. % of $\langle S^{(k)}\rangle_{\sigma}$
In Appendix~\ref{app:large-deviation bounds} we show using Hoeffding's inequality that
choosing $l=\left\lceil\frac{8\ln(4K/\delta)}{\epsilon^2}\right\rceil$ and $M=\left\lceil\frac{8(1+\epsilon/3)\ln(4K/\delta)}{l\epsilon^2}\right\rceil$ guarantees
% \begin{equation}
%     \Pr\left[\bigcup_{k = 1}^K\left|\tilde{\mu}_k-\langle S^{(k)}\rangle_\sigma\right|\geq\epsilon\right]\leq\delta.
% \label{eq:moment-sampling-guarantee}
% \end{equation}
\begin{equation}
    \Pr\left[\bigcap_{k = 1}^K\left|\tilde{\mu}_k-\langle S^{(k)}\rangle_\sigma\right|\leq\epsilon\right]\geq 1 - \delta.
\label{eq:moment-sampling-guarantee}
\end{equation}
The corresponding total number of single-shot measurements satisfies $\mathbb{E}\left(N_{\rm tot}\right)=O\!\left(2^K\ln(K/\delta)/\epsilon^2\right)$, independent of $n$. 
This sample complexity can be further reduced by exploiting the structure of target states. 
After solving Eq.~\eqref{eq:main}, we obtain the fidelity interval $[F_{\min}(J),F_{\max}(J)]$. Let $\bar{F}$ and $\upDelta$ denote its midpoint and half-width respectively, this scheme also guarantees that the true fidelity lies in $[\bar{F}-\upDelta,\bar{F}+\upDelta]$ with probability at least $1-\delta$.

\textit{Numerical examples---}Intuitively, a small $K$ may suffice to obtain informative fidelity bounds at low noise levels, as suggested by the state verification and certification protocols~\cite{prapplied_2019_hypergraph,np_2025_hyh}. 
% For example, the exact value $\langle S^{(1)}\rangle_\sigma=1$ alone implies unit fidelity. 
Under stronger noise, however, the compatible state space may include states with zero fidelity, yielding trivial bounds (Appendix~\ref{app:binary-group-fidelity}). 
Therefore, the tightness depends on the empirical estimates $\{\tilde{\mu}_k\}_{k = 1}^K$ and the tolerance $\epsilon$. 
To illustrate a practical recipe, we simulate $\tilde{\mu}_k$ under global and local depolarizing noise. 
In Fig.~\ref{fig:measurement_cost}, we solve Eq.~\eqref{eq:main} for each $\epsilon\in\{10^{-2},10^{-3},10^{-4}\}$ to determine the smallest $K$ and the corresponding measurement cost needed to achieve each target precision $\upDelta$. 
% \yls{In Fig.~\ref{fig:measurement_cost}, For each $\epsilon \in \{10^{-2}, 10^{-3}, 10^{-4}\}$, we determine the $K$ required to achieve a desired precision $\upDelta$. }
% We simulate these estimates under global and local depolarizing noise. 
% For each $\epsilon \in \{10^{-2}, 10^{-3}, 10^{-4}\}$, we determine the $K$ required to achieve a desired precision $\upDelta$. 
% Suppose an experiment targets the pure state $|\psi\rangle$ and prepares a state $\sigma$ affected by either noise process,
Here, we consider a noisy preparation $\sigma$ of the pure target state $|\psi\rangle$ under either noise process
\begin{equation}
\begin{aligned}
    \sigma_{g}(\eta)&=(1-\eta)|\psi\rangle\langle\psi|+\frac{\eta}{2^n}\mathbb{I},\quad 
    \sigma_{l}(p)&=\mathcal{D}_1^{\otimes n}\!\left(|\psi\rangle\langle\psi|\right),
\end{aligned}
\label{eq:noise-models}
\end{equation}
where $\mathcal{D}_1(\rho)=(1-p)\rho+\frac{p}{3}\sum_{P\in\{X,Y,Z\}}P\rho P$ is a single qubit depolarizing noise channel.

Under global depolarizing noise, the empirical estimates are independent of the target state and satisfy $\tilde{\mu}_k=1-\eta$ for all $k\geq1$. 
In the inset of Fig.~\ref{fig:measurement_cost}~(a), we observe that a small $K$ suffices to restrict the fidelity to a small interval even when the true fidelity is as low as $0.5$. 
In the main panel, we choose $\eta$ to fix the true fidelity at $0.6$. 
For arbitrary hypergraph target states, the upper bound on the expected measurement cost $\mathbb{E}(N_{\rm tot})$ increases only slowly with $n$ for all precisions $\upDelta$ considered. 
For comparison, we consider a worst-case example for standard DFE, which is given by a hypergraph state with a single hyperedge containing all $n$ qubits. To achieve the same fidelity interval $\upDelta$ and confidence level $1-\delta$, standard DFE requires $\mathbb{E}(N_{\rm DFE}) \approx 1 + {8\over \upDelta^2 \delta} + {4(2^n - 1)\over \upDelta^2}\ln({4\over \delta})$ single-shot measurements~\cite{prapplied_2019_hypergraph}.

\begin{figure}[t]
\vspace{-0.2cm}
    \centering
\includegraphics[width=0.9\linewidth]{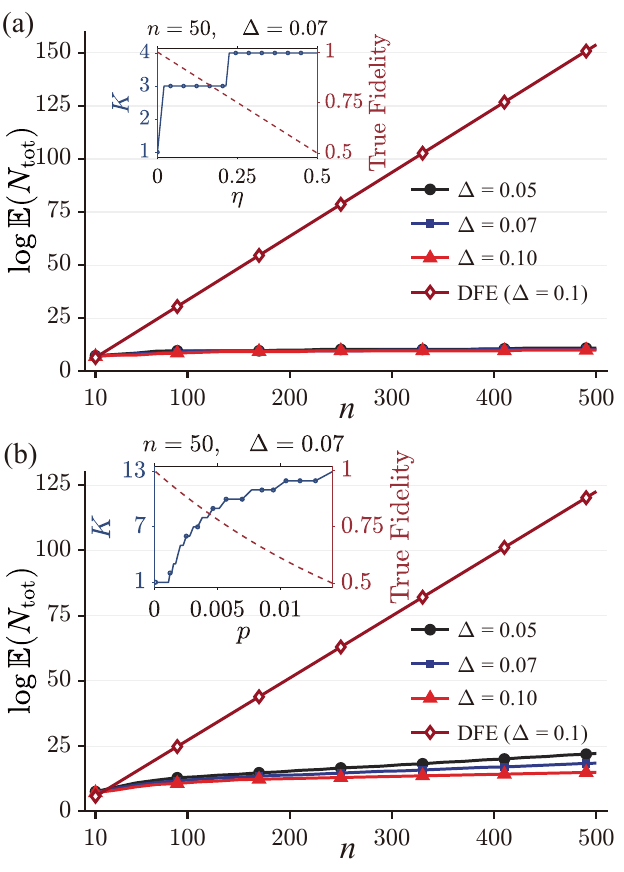}
    \caption{\label{fig:measurement_cost}
    Sample complexity for fidelity bounding under (a) global depolarizing noise for arbitrary hypergraph target states and (b) local depolarizing noise for the 1D order-3 cluster state. Based on statistics $\tilde{\mu}_k$ simulated under the noise model, the main plots show the sample complexity $\mathbb{E}(N_{\rm tot})$ required to obtain a fidelity interval of $2\upDelta$. In these plots, the noise levels $\eta$ and $p$ are chosen to give a true fidelity of $F=0.6$, and the failure probability is fixed at $\delta=0.01$. 
    The insets show the number of constraints in Eq.~\eqref{eq:main} required to achieve a precision of $\upDelta=0.07$ at different noise levels, with $n=50$. They also show the corresponding true fidelity. 
    All logarithms in this work are base $10$. 
    See Appendix~\ref{app:fig2-procedure} for the numerical procedure.}
    \vspace{-0.5cm}
\end{figure}

Under local depolarizing noise, the empirical estimates depend on $|\psi\rangle$. 
In Fig.~\ref{fig:measurement_cost}~(b), we consider a one-dimensional (1D) order-3 cluster state $|C\rangle \coloneqq \left(\prod_{i=1}^{n-2}C_{\{i,i+1,i+2\}}\right)|+\rangle^{\otimes n}$. 
We simulate the statistics $\tilde{\mu}_k$ and evaluate the sample complexity of standard DFE for this state, see Appendices~\ref{sec:supp-local-depolarizing} for details.
The inset shows that stronger noise requires larger $K$ to achieve the same precision. 
The main panel shows that our expected measurement cost is significantly below that of the standard DFE. 
In our scheme, the sample complexity satisfying Eq.~\eqref{eq:moment-sampling-guarantee} is reduced to $\mathbb{E}\left(N_{\rm tot}\right)\leq Ml\min\left\{2^K-1,\frac{n}{2K}\left[(1+\frac{2K}{n})^K-1\right]\right\}$ for this specific state (Appendix~\ref{app:collision-adaptive-moments}). 

This advantage partly stems from estimating many compatible Pauli observables by parallel measurements, a benefit also demonstrated in grouping-based DFE~\cite{quantum_2025_groupingDFE}. 
A further saving is that, we only need to estimate a subset of stabilizer moments to obtain sufficiently tight fidelity bounds. 
This is because the positivity imposes a powerful constraint~\cite{npjqi_2015_kalev_positivity} that is reflected in the JNR geometry, allowing the measured stabilizer moments to bound the unmeasured ones.

\begin{figure}[t]
    \centering
\includegraphics[width=1\linewidth]{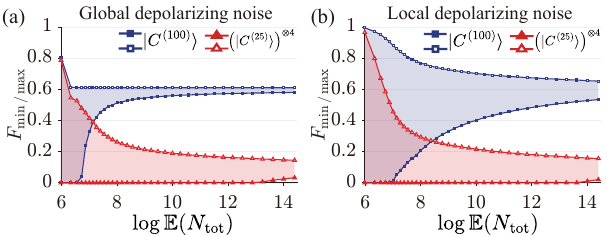}
\caption{\label{fig:topology_certification}Distinguishing different hypergraph topologies using fidelity intervals obtained from the same local Pauli measurement data. 
Suppose the prepared state $\sigma$ is obtained from the connected $100$-qubit 1D order-3 cluster state $|C^{(100)}\rangle$ under (a) global or (b) local depolarizing noise. The noise strength is chosen such that the true fidelity is $\langle C^{(100)}|\sigma|C^{(100)}\rangle=0.6$.
% A noisy preparation $\sigma$ of the connected 100-qubit 1D order-3 cluster state $|C^{(100)}\rangle$ is evaluated with respect to 
By choosing $\epsilon=\delta=0.01$ and $K=1,\ldots,35$, we show the fidelity interval for two different targets: the connected state itself (blue squares) and the disconnected candidate $\bigl(|C^{(25)}\rangle\bigr)^{\otimes4}$ (red triangles). 
% Under both (a) global and (b) local depolarizing noise, the noise strength is chosen such that the true fidelity with the connected target is $F=\langle C^{(100)}|\sigma|C^{(100)}\rangle=0.6$. 
Filled and open markers denote the lower and upper fidelity bounds, respectively, and the shaded regions show the corresponding intervals. 
} 
    \vspace{-0.5cm}
\end{figure}

Finally, since the parallel measurement settings for estimating $S^{(k)}$ are independent of the hypergraph structure, we can simultaneously bound the fidelities of multiple target states from the same measurement like Eq.~\eqref{eq:common-pauli-settings}. 
We illustrate this with two target states: the connected $100$-qubit 1D order-$3$ cluster state $|C^{(100)}\rangle$, and the product of four disconnected $25$-qubit 1D order-$3$ cluster state $\bigl(|C^{(25)}\rangle\bigr)^{\otimes4}$. 
In Fig.~\ref{fig:topology_certification}, we simulate the measurement statistics $\tilde{\mu}_k$ of a noisy preparation of $|C^{(100)}\rangle$ under (a) global and (b) local depolarizing noise (Appendix~\ref{sec:supp-local-depolarizing}). 
% Including higher-order stabilizer moments gives tighter fidelity bounds $F_{\min / \max}$ for both targets. 
We find that even when the fidelity intervals remain broad, the bounds already indicate a preference for one candidate target over the other.

\textit{Analytic fidelity bounds---}The polytope formulation also provides a unified framework for deriving analytic fidelity bounds from limited stabilizing data. 
This is because the equality-constrained version of Eq.~\eqref{eq:main} has an optimal solution with at most $|J|+1$ nonzero probabilities. 
For the choices $J=\{S^{(1)}\}$ and $J=\{S^{(1)},S^{(n)}\}$, the analytical solutions generalize the known fidelity bounds for stabilizer states~\cite{pra_2006_boundingfidelity,pra_2019_Kalev_BoundingFidelity,jmo_2009_boundingfidelity,pra_2015_boundingfidelity} to arbitrary pure states. 
We further provide new bounds for $J=\{S^{(1)},S^{(n-1)},S^{(n)}\}$,
\begin{equation}
\label{eq:analytic-three-moment-bound}
F_{\min}(J)\geq {1\over 8}
\left\{(4-n)\!\left[\langle S^{(n)}\rangle_\sigma+1\right]
+n\!\left[\langle S^{(1)}\rangle_\sigma+\langle S^{(n-1)}\rangle_\sigma\right] \right\}. 
\end{equation}
For stabilizer states, this bound requires only $2n+1$ Pauli measurement settings. See Appendix~\ref{app:dual} for details.

\textit{Conclusion---}We have developed a framework for bounding the fidelity to non-stabilizer states from incomplete data. 
We group the stabilizing operators into the stabilizer moments $S^{(k)}$ to introduce a permutation symmetry of the common eigenbasis labels. This symmetry yields simplifications typically associated with permutationally invariant systems, without requiring any assumptions about either the target or the experimental state. 
With this we obtain tight fidelity bounds for hypergraph states with hundreds of qubits. We demonstrate a substantial reduction in measurement cost compared with standard DFE through numerical examples under global and local depolarizing noise. 
% In addition, the same measurement record also allows candidate hypergraph topologies to be distinguished through separated fidelity intervals. 
In addition, our framework allows different hypergraph topologies to be distinguished using shared measurement data. 
Beyond numerical bounds, the polytope formulation unifies analytic fidelity inequalities for specific state families and extends them to arbitrary pure targets, enabling the systematic derivation of new inequalities and entanglement witnesses. 
% Beyond numerical bounds, the simple polytope formulation unifies analytic fidelity inequalities previously derived for specific state families and extends them to arbitrary pure target states, enabling the systematic derivation of new bounds and fidelity-based entanglement witnesses. 

Our work further shows that the physical constraint of positivity provides a powerful tool for certification from incomplete data. 
Future work could explore alternative groupings of operators and the symmetries they induce to address a broader range of optimization tasks, including those arising in non-stabilizer resource characterization and quantum cryptography. 
Furthermore, replacing the linear fidelity objective with nonlinear functions may extend our framework to bounding nonlinear magic witnesses. 
Our results suggest a broader role for positivity and symmetry in the design of efficient quantum certification protocols. 

% \yls{Positivity imposes powerful constraints that allow measured stabilizer moments to bound unmeasured ones.} 
% The associated measurement overhead depends on the Pauli expansion of the chosen stabilizing operators. 
% Future work could combine Pauli propagation with circuit symmetries to reduce this overhead~\cite{arxiv_2025_teng_symmetry_merging}. 
% Under local noise models, the contribution of high-weight Pauli terms may allow the Pauli expansion to be truncated with a rigorous error bound~\cite{prxquantum_2026_angrisani_local_noise}. 
% Furthermore, exploring alternative groupings of stabilizing operators and the symmetries they induce may lead to scalable methods for characterizing non-stabilizer resources. 
% Combining the certified purity bounds with additional measurements could also support nonlinear magic resource witnesses~\cite{npjqi_2026_haug_mixed_magic,arxiv_2026_tang_witness_expansion}.

% \begin{acknowledgments}
\textit{Acknowledgments---}This work is supported by National Natural Science Foundation of China (No.~12125402, No.~12534016, No.~12605001, No.~123B2073), Beijing Natural Science Foundation (Grant No.~Z240007), Quantum Science and Technology-National Science and Technology Major Project (Grant No.~2024ZD0302401 and No.~2021ZD0301500), 
the Deutsche Forschungsgemeinschaft (DFG, German Research Foundation, project number 563437167), the Sino-German Center for Research Promotion (Project M-0294), and the German Federal Ministry of Research, Technology and Space (Project BeRyQC, Grant No.~13N17292). 
Y. L. acknowledges the China National Postdoctoral Program for Innovative Talents (No.~BX2026036), and the China Scholarship Council (Grant No.~202306010353). 
% \end{acknowledgments}

\bibliographystyle{apsrev4-2}
% REVTeX normally writes the style only at \end{document}.  Write it here so
% BibTeX can run immediately, while retaining REVTeX's APS control records.
\begingroup
\makeatletter
\@booleanfalse\bibliographystyle@sw
\write@bibliographystyle
\endgroup
\bibliography{Refs}

\clearpage
\onecolumngrid

\appendix

\setcounter{secnumdepth}{2}

% Show only appendix sections and subsections in this table of contents.
\begingroup
\makeatletter
\renewcommand{\tocname}{Contents}
\let\appendixcontentsline\contentsline
\renewcommand{\contentsline}[4]{}
\renewcommand{\appendix@toc}{\let\contentsline\appendixcontentsline}
\let\l@subsubsection\@gobbletwo
\tableofcontents
\endgroup

\section{Local Pauli measurement overhead}\label{app:one-step-tfim}

\subsection{Pauli propagation for circuit families}
\label{app:pauli-propagation-circuit-families}

We first consider several circuit families for which the Pauli support of each order-$k$ stabilizing operator admits a bound independent of $n$. For a target state $|\psi\rangle=U|0\rangle^{\otimes n}$ with generators $g_i=UZ_iU^\dagger$, we write
\begin{equation}
g_I\coloneqq\prod_{i\in I}g_i=UZ_IU^\dagger,
\qquad Z_I\coloneqq\prod_{i\in I}Z_i,
\qquad |I|=k.
\label{eq:general-circuit-stabilizer}
\end{equation}
Let $N_{\mathrm P}(A)$ denote the number of nonzero coefficients in the expansion of $A$ in the Pauli basis $\{\mathbb{I},X,Y,Z\}^{\otimes n}$. Suppose $A$ acts nontrivially on at most $m$ qubits, then we directly give an upper bound of $N_{\mathrm P}(A)\leq4^m$.

\textbf{Clifford-enhanced product states.} The Clifford-enhanced product state ($\mathcal{C}$PS) family~\cite{arxiv_2026_sater_cps_certification,PRXQuantum_2025_cps} consists of states of the form
\begin{equation}
|\psi\rangle=C\left(\bigotimes_{j=1}^n u_j|0\rangle\right),
\qquad U=C\left(\bigotimes_{j=1}^n u_j\right),
\label{eq:cps-target-circuit}
\end{equation}
where the $u_j$ are arbitrary single-qubit unitaries and $C$ is an arbitrary Clifford circuit. Such states are obtained by applying a Clifford circuit to a product of arbitrary single-qubit states, which includes magic states. 
Since $u_jZ_ju_j^\dagger=a_jX_j+b_jY_j+c_jZ_j$, with real coefficients satisfying $a_j^2+b_j^2+c_j^2=1$, we obtain
\begin{equation}
g_I=C\left[\prod_{j\in I}(a_jX_j+b_jY_j+c_jZ_j)\right]C^\dagger.
\label{eq:cps-stabilizer-expansion}
\end{equation}
The product contains at most $3^k$ Pauli strings. Clifford gates preserve this number. Consequently, we have
\begin{equation}
N_{\mathrm P}(g_I)\leq3^k,
\label{eq:pauli-support-cps}
\end{equation}
which is independent of $n$ and the depth of $C$.

\textbf{Arbitrary two-qubit circuits with circuit depth $d_{\mathrm c}$.} Suppose $U$ has $d_{\mathrm c}$ layers of disjoint one- or two-qubit gates. Definition~$16$ and Fact~$5$ of Ref.~\cite{stoc_2024_huang_shallow_circuits} give a backward-lightcone size of at most $2^{d_{\mathrm c}}$ for a single qubit. Applying this statement to $U^\dagger$ bounds the support of $g_i=UZ_iU^\dagger$. Equivalently, each layer can at most double the number of qubits on which the operator $g_i$ acts nontrivially. Taking the union of the supports of the $k$ generators therefore yields
\begin{equation}
N_{\mathrm P}(g_I)\leq4^{\min\{n,\,k2^{d_{\mathrm c}}\}}.
\label{eq:pauli-support-arbitrary-circuit}
\end{equation}
For shallow circuits, the depth is $d_{\mathrm c}=O(1)$ and we have $N_{\mathrm P}(g_I)\leq 2^{O(k)}$.

\textbf{Arbitrary two-qubit nearest-neighbor circuits with circuit depth $d_{\mathrm c}$.} Consider a depth-$d_{\mathrm c}$ circuit on a chain in which two-qubit gates act only on neighboring qubits. The support of $g_i$ is therefore confined to the sites $\{i-d_{\mathrm c},\ldots,i,\ldots,i+d_{\mathrm c}\}$ that lie on the chain, acting on at most $2d_{\mathrm c}+1$ qubits. Taking the union of these supports for the $k$ generators in $g_I$ gives at most $\min\{n,(2d_{\mathrm c}+1)k\}$ qubits, yielding
\begin{equation}
N_{\mathrm P}(g_I)\leq4^{\min\{n,\,k(2d_{\mathrm c}+1)\}}.
\label{eq:pauli-support-one-dimensional}
\end{equation}
Thus, the exponent in this Pauli-count bound grows linearly with depth. 

% \yl{\textbf{Paired Bardeen--Cooper--Schrieffer (BCS) states.} Consider an unprojected BCS pair product~\cite{qst_2021_khamoshi_agp} with real amplitudes in the occupation-number encoding of $n/2$ disjoint mode pairs $(a_j,b_j)$,
% \begin{equation}
% |\psi\rangle=\bigotimes_{j=1}^{n/2}\left(\cos\frac{\theta_j}{2}|00\rangle_{a_jb_j}+\sin\frac{\theta_j}{2}|11\rangle_{a_jb_j}\right).
% \label{eq:bcs-pair-product}
% \end{equation}
% Each pair is prepared by $W_{ab}=e^{-i\theta Y_aX_b/2}$. Conjugating its two Pauli-$Z$ generators gives
% \begin{equation}
% g_a=\cos\theta\,Z_a+\sin\theta\,X_aX_b,\qquad g_b=\cos\theta\,Z_b-\sin\theta\,Y_aY_b,\qquad g_ag_b=Z_aZ_b.
% \label{eq:bcs-pair-generators}
% \end{equation}
% Thus, only pairs with exactly one generator selected in $I$ can contribute two Pauli terms. Denoting the number of such pairs by $s(I)$, we obtain
% \begin{equation}
% N_{\mathrm P}(g_I)\leq2^{s(I)}\leq2^{\min\{k,n-k\}},
% \label{eq:pauli-support-bcs}
% \end{equation}
% since each such pair contains one selected and one unselected qubit. The same Pauli-count bound holds after applying an arbitrary Clifford circuit to the target state.}

These results give general upper bounds on the number of local Pauli settings. They do not include further reductions from specific gate structures, cancellations during Pauli propagation, or parallel measurements of compatible terms. 
Below, we provide tighter bounds for hypergraph and weighted graph states, as well as Trotterized Ising states. 
In fact, circuit symmetries can also reduce such a number through orbit merging~\cite{arxiv_2025_teng_symmetry_merging}, and high-weight Pauli terms can be suppressed by noise~\cite{prxquantum_2026_angrisani_local_noise}.

\subsection{Parallel measurements for the stabilizing operators of hypergraph states}
\label{app:hypergraph-pauli-settings}

For a hypergraph state $|H\rangle=U_H|+\rangle^{\otimes n}$, where $U_H=\prod_{e\in E}C_e$, each generator takes the form
\begin{equation}
g_i=U_HX_iU_H^\dagger
=X_i\otimes\prod_{\substack{e\in E\\i\in e}}C_{e\setminus\{i\}} = X_i\otimes\left( \text{a linear combination of Pauli-$Z$ strings}\right).
\label{eq:hypergraph-generator-diagonal}
\end{equation}
All these terms can be estimated from a single parallel setting, with $X$ measured on qubit $i$ and $Z$ on all other qubits. 

It follows that $g_1g_2$ can be measured from only two parallel settings:
\begin{equation}
\label{eq:hypergraph-two-generator-settings}
\langle g_1 g_2 \rangle_\sigma \ \Leftarrow\
\left\{
\begin{array}{l}
X_1\otimes X_2\\
Y_1\otimes Y_2
\end{array}
\right\}\otimes Z_3\otimes \cdots \otimes Z_n,
\end{equation}
where terms containing an odd number of Pauli $Y=iXZ$ operators vanish because $g_1g_2$ is real and Hermitian.

More generally, an order-$k$ stabilizing operator is $g_I=\prod_{i\in I}g_i$, where $I\subseteq[n]$ and $|I|=k\geq1$.
Multiplying the expansions of the $g_i$ gives one $X$ factor on each qubit in $I$, together with possible $Z$ factors. Each resulting Pauli term therefore contains $X$ or $Y$ on $I$ and $\mathbb{I}$ or $Z$ elsewhere. Still, since $g_I$ is real and Hermitian, at most $2^{k-1}$ parallel local Pauli settings suffice to estimate each order-$k$ stabilizing operator. This bound is independent of $n$ and the hypergraph structure, and the hypergraph structure enters only the classical processing of the outcomes. In the following, Lemma~\ref{lem:setting_term_parseval} gives the full decomposition.

\subsection{Parallel measurements for weighted graph states}
\label{app:weighted-graph-pauli-settings}

The same support argument extends to weighted graph states~\cite{review_2006_graphstate}. The states have the form $|\psi_D\rangle=D|+\rangle^{\otimes n},\qquad D=\sum_{\boldsymbol z\in\{0,1\}^n}e^{i\phi(\boldsymbol z)}|\boldsymbol z\rangle\langle\boldsymbol z|$, where $\phi$ is real, and the stabilizing operators are $g_I=DX_ID^\dagger$. It only flips the qubits in $I$
\begin{equation}
g_I|\boldsymbol z\rangle=e^{i[\phi(\boldsymbol z\oplus\boldsymbol 1_I)-\phi(\boldsymbol z)]}|\boldsymbol z\oplus\boldsymbol 1_I\rangle,
\end{equation}
where $\boldsymbol 1_I$ is the indicator bit string of $I$. Then, for any local Pauli string $P = P_1\otimes \cdots \otimes P_n$ with $P_{j \in I}\in \{\mathbb{I},Z\}$, we have ${\rm tr}(P g_I) = 0$. 
Thus, only strings with $X$ or $Y$ on $I$ and $\mathbb{I}$ or $Z$ on $\bar I=[n]\setminus I$ contribute. Grouping them by the subset $A\subseteq I$ carrying $Y$ gives
\begin{equation}
g_I=\sum_{A\subseteq I}\underbrace{Y\otimes\cdots\otimes Y}_{\text{qubits in }A}\otimes\underbrace{X\otimes\cdots\otimes X}_{\text{qubits in }I\setminus A}\otimes \underbrace{\left(\text{a linear combination of Pauli-$Z$ strings}\right)}_{\text{qubits in }[n]\setminus I},
\label{eq:diagonal-stabilizer-decomposition}
\end{equation}
where the coefficients are real because $g_I$ is Hermitian. Thus, at most $2^k$ parallel local Pauli settings suffice, independently of circuit depth or interaction geometry, even when the Pauli support is large.

\subsection{Pauli support of the stabilizing operators of trotterized transverse-field Ising states}
\label{app:tfim-pauli-overhead}

We now specialize to target states generated by Trotterized transverse-field Ising dynamics, which has been implemented as a many-body benchmark~\cite{nature_2023_kim,prl_2026_learningmixedstates}.
In particular, for the time dynamics of the Hamiltonian
\begin{equation}
H_{\mathrm{TFIM}}=H_{ZZ}+H_X,
\qquad
H_{ZZ}=-J\sum_{j=1}^{n-1}Z_jZ_{j+1},
\qquad 
H_X=h\sum_{j=1}^{n}X_j,
\end{equation}
where $J>0$ is  the coupling of nearest-neighbour spins and $h$ is  global transverse field. 
Let $\delta t$ denote the time step in the Trotter decomposition. 
For $J\delta t=\pi/4$ and $h\delta t=\pi/8$, spin dynamics from an initial state can be simulated by means of the Trotter decomposition of the time-evolution operator,
\begin{equation}
\begin{aligned}
E \coloneqq \prod_{j=1}^{n-1}e^{i\pi Z_jZ_{j+1}/4}, \qquad 
R \coloneqq \prod_{j=1}^{n}e^{-i\pi X_j/8}.
\end{aligned}
\end{equation}
The state can be simulated using gates $R_X(\theta)=e^{-i\theta X/2}$ and $R_{ZZ}(\theta)=e^{-i\theta Z\otimes Z/2}$ with angles of $\theta_h=\pi/4$ and $\theta_J=-\pi/2$, respectively. 
For $D \in \mathbb{Z}$, let the target state be
\begin{equation}
|\psi_D\rangle\coloneqq (ER)^D|0\rangle^{\otimes n} = \left(\prod_{j=1}^{n-1}e^{i\pi Z_jZ_{j+1}/4}\prod_{j=1}^{n}e^{-i\pi X_j/8}\right)^D|0\rangle^{\otimes n}.
\end{equation}
The corresponding generalized generators are $g_i^{[D]}\coloneqq(ER)^D Z_i\left[(ER)^D\right]^\dagger$. 

\begin{proposition}
\label{thm:one-step-tfim-pauli}
For $D=1$, consider any subset $I\subseteq[n]$ with $|I|=k\geq1$ and define $g_I^{[1]}\coloneqq\prod_{i\in I}g_i^{[1]}$.
The Pauli expansion of $g_I^{[1]}$ contains at most $2^k$ nonzero terms.
\end{proposition}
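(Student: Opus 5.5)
The plan is to factor the conjugation by the one Trotter step $ER$ into two stages, $g_I^{[1]}=E\,(R Z_I R^\dagger)\,E^\dagger$. The point is that $R$ is a product of single-qubit rotations, so it can at most double the Pauli count on each qubit of $I$, while $E$ is Clifford, so it keeps the count fixed. This is the same mechanism used for Clifford-enhanced product states in Eq.~\eqref{eq:cps-stabilizer-expansion}, with the order of the Clifford and non-Clifford layers as in that family.

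First, I would compute the action of $R$. Since $R$ is a tensor product of single-qubit factors, $RZ_IR^\dagger=\prod_{j\in I}e^{-i\pi X_j/8}Z_je^{i\pi X_j/8}$, and the factors on qubits outside $I$ cancel. Conjugation by $e^{-i\theta X/2}$ with $\theta=\pi/4$ rotates $Z$ about the $X$ axis. This gives
\begin{equation}
e^{-i\pi X/8}Ze^{i\pi X/8}=\cos\tfrac{\pi}{4}\,Z+\sin\tfrac{\pi}{4}\,Y=\tfrac{1}{\sqrt2}(Z+Y),
\end{equation}
where I would check the sign of the $Y$ term directly, although it does not affect the count. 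Hence $RZ_IR^\dagger=2^{-k/2}\prod_{j\in I}(Z_j+Y_j)$, which is a sum of exactly $2^k$ distinct Pauli strings, each supported on $I$.

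Second, I would show that $E$ is a Clifford unitary. Each factor $e^{i\pi Z_jZ_{j+1}/4}$ commutes with $Z$'s and with Paulis that anticommute with $Z_jZ_{j+1}$ an even number of times. For an anticommuting Pauli $P$ one has $e^{i\pi ZZ/4}Pe^{-i\pi ZZ/4}=e^{i\pi ZZ/2}P=iZ_jZ_{j+1}P$. For example, $X_j\mapsto -Y_jZ_{j+1}$ up to sign, which is again a Pauli string. So conjugation by $E$ maps every Pauli string to $\pm$ a Pauli string. Because conjugation is a bijection on operators, it acts injectively on Pauli strings. Therefore $E(RZ_IR^\dagger)E^\dagger$ is a linear combination of at most $2^k$ distinct Pauli strings, with coefficients $\pm2^{-k/2}$. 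Possible coincidences cannot increase the count, and here there are none.

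The only step needing care is verifying the Clifford property of $e^{i\pi ZZ/4}$: the angle $\pi/4$ in the exponent, that is $R_{ZZ}(-\pi/2)$, is exactly what makes $e^{i\pi ZZ/2}=iZZ$ a Pauli. I expect this to be a short computation rather than a real obstacle. Since all factors of $E$ commute, applying them sequentially preserves the one-to-one map of Pauli strings, which concludes the bound $N_{\mathrm P}(g_I^{[1]})\le 2^k$.
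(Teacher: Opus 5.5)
Your proof is correct and follows essentially the same route as the paper: write $RZ_IR^\dagger=2^{-k/2}\prod_{i\in I}(Z_i-Y_i)$ as a sum of $2^k$ Pauli strings, then use that conjugation by the Clifford unitary $E$ sends distinct Pauli strings to distinct signed Pauli strings. The only slip is the sign you already flagged, since $e^{-i\pi X/8}Ze^{i\pi X/8}=(Z-Y)/\sqrt{2}$ rather than $(Z+Y)/\sqrt{2}$, and this does not affect the count.
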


\begin{proof}
Writing $Z_I\coloneqq\prod_{i\in I}Z_i$ and using $R_X(\theta)= \cos {(\theta/2)} \mathbb{I} - i\sin (\theta/2)X$, we have 
$RZ_iR^\dagger=(Z_i-Y_i)/\sqrt{2}$, and
\begin{equation}
\begin{aligned}
g_I^{[1]}
&=ERZ_IR^\dagger E^\dagger
 = E \left( \prod_{i \in I} {Z_i - Y_i \over \sqrt{2}} \right)E^\dagger=\frac{1}{2^{k/2}}
\sum_{S\subseteq I}(-1)^{|S|}
E\left(
\prod_{i\in I\setminus S}Z_i
\prod_{i\in S}Y_i
\right)E^\dagger.
\end{aligned}
\label{eq:one-step-tfim-pauli-expansion}
\end{equation}
Since $E$ is a Clifford unitary, the Pauli expansion of $g_I^{[1]}$ contains at most $2^k$ distinct strings.
\end{proof}

\begin{corollary}
\label{cor:two-step-tfim-pauli}
For $D=2$, consider any subset $I\subseteq[n]$ with $|I|=k\geq1$ and define $g_I^{[2]}\coloneqq\prod_{i\in I}g_i^{[2]}$.
The Pauli expansion of $g_I^{[2]}$ contains at most $10^k$ nonzero terms. 
\end{corollary}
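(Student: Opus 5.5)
The plan is to exploit the fact that $g_I^{[2]}$ factorizes over the generators, so it suffices to bound the Pauli support of a single generator $g_i^{[2]}$ by $10$ and then use submultiplicativity. Since $g_i^{[2]}=(ER)^2Z_i[(ER)^2]^\dagger$ and conjugation is multiplicative, we have $g_I^{[2]}=\prod_{i\in I}g_i^{[2]}$. For any operators $A,B$ we also have $N_{\mathrm P}(AB)\leq N_{\mathrm P}(A)N_{\mathrm P}(B)$: the product of two Pauli strings is, up to a phase, a single Pauli string, so expanding the product produces at most $N_{\mathrm P}(A)N_{\mathrm P}(B)$ strings, and cancellations only reduce the count. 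Hence $N_{\mathrm P}(g_I^{[2]})\leq\prod_{i\in I}N_{\mathrm P}(g_i^{[2]})$, and the corollary follows once $N_{\mathrm P}(g_i^{[2]})\leq 10$ holds for every $i$.

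To bound a single generator, I would write $g_i^{[2]}=ER\,g_i^{[1]}\,R^\dagger E^\dagger$ and reuse the computation in the proof of Proposition~\ref{thm:one-step-tfim-pauli} with $k=1$:
\[
g_i^{[1]}=\tfrac{1}{\sqrt2}\bigl(EZ_iE^\dagger-EY_iE^\dagger\bigr).
\]
Since $E$ is diagonal, $EZ_iE^\dagger=Z_i$. The string $Y_i$ anticommutes with $Z_{i-1}Z_i$ and $Z_iZ_{i+1}$ and commutes with every other factor of $E$. Each anticommuting factor $e^{i\pi Z_jZ_{j+1}/4}$ multiplies $Y_i$ by $Z_jZ_{j+1}$ up to a phase. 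Therefore $EY_iE^\dagger$ is, up to a sign, a single Pauli string supported on $\{i-1,i,i+1\}\cap[n]$: it carries $X_i$ together with $Z$ on the neighbors (only one neighbor at the boundary). So $g_i^{[1]}$ is a sum of two Pauli strings, $Z_i$ and $\pm Z_{i-1}X_iZ_{i+1}$.

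Next, I would apply $R$ site by site, using $RZ_jR^\dagger=(Z_j-Y_j)/\sqrt2$, $RX_jR^\dagger=X_j$, and the fact that $RY_jR^\dagger$ is a combination of $Y_j$ and $Z_j$. Each single-qubit factor then expands into at most two Pauli operators. Thus $RZ_iR^\dagger$ gives at most $2$ strings. The three-site string $RZ_{i-1}X_iZ_{i+1}R^\dagger$ gives at most $2\cdot1\cdot2=4$ strings; if one only uses the crude fact that each of its three factors becomes at most two terms, it gives at most $2^3=8$. The final conjugation by the Clifford $E$ maps Pauli strings to Pauli strings bijectively up to phases, so it preserves the count. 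Hence $N_{\mathrm P}(g_i^{[2]})\leq 2+8=10$, and in fact $\leq6$.

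The only delicate step is the explicit conjugation $EY_iE^\dagger$, including boundary sites and signs. I would avoid any dependence on the exact form: it is enough that $EY_iE^\dagger$ is a single Pauli string supported on at most three consecutive qubits. That fact alone gives the robust bound of $10$ per generator, and therefore $N_{\mathrm P}(g_I^{[2]})\leq10^k$.
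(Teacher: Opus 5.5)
Your proof is correct and follows the paper's route: write $g_i^{[2]}=ER\,g_i^{[1]}R^\dagger E^\dagger$ with $g_i^{[1]}=(Z_i-EY_iE^\dagger)/\sqrt2$, use that $EY_iE^\dagger$ is a single Pauli string on at most three consecutive sites to get at most $2+2^3=10$ strings after conjugation by $R$ (preserved by the Clifford $E$), and multiply the $k$ generators. One correction to your aside: for interior sites $2\leq i\leq n-1$ the two anticommuting factors give $EY_iE^\dagger=-Z_{i-1}Y_iZ_{i+1}$, with $Y_i$ rather than $X_i$ on site $i$ ($X_i$ appears only at the boundary, e.g.\ $EY_1E^\dagger=X_1Z_2$), so $RY_iR^\dagger=(Y_i+Z_i)/\sqrt2$ contributes two terms and $g_i^{[2]}$ has exactly $10$ terms, not at most $6$; your count of $10$ is therefore sharp, and the $\leq 6$ refinement should be dropped.
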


\begin{proof}
Since $E$ commutes with $Z_i$, we have $g_i^{[1]}=\frac{1}{\sqrt{2}}\left(Z_i-EY_iE^\dagger\right)$. It follows that
\begin{equation}
g_i^{[2]}=ERg_i^{[1]}R^\dagger E^\dagger=\frac{1}{\sqrt{2}}E\left[RZ_iR^\dagger-R\left(EY_iE^\dagger\right)R^\dagger\right]E^\dagger.
\label{eq:two-step-generator-expansion}
\end{equation}
All factors in $E$ commute, and those not incident on site $i$ commute with $Y_i$ and cancel in the conjugation. Using $e^{i\pi P/4}Qe^{-i\pi P/4}=iPQ$ for anticommuting Pauli operators $P$ and $Q$, we obtain, for $2\leq i\leq n-1$,
\begin{equation}
\begin{aligned}
EY_iE^\dagger
=e^{i\pi Z_{i-1}Z_i/4}(e^{i\pi Z_iZ_{i+1}/4}Y_i e^{-i\pi Z_iZ_{i+1}/4})e^{-i\pi Z_{i-1}Z_i/4}=e^{i\pi Z_{i-1}Z_i/4}\left(X_iZ_{i+1}\right)e^{-i\pi Z_{i-1}Z_i/4} =-Z_{i-1}Y_iZ_{i+1}.
\end{aligned}
\end{equation}
Thus, $EY_iE^\dagger$ is a single Pauli string of Pauli weight at most three.
Using the fact that $R_X(\theta)$ is a single-qubit rotation, we have
\begin{equation}
RX_iR^\dagger=X_i,
\qquad
RY_iR^\dagger=\frac{Y_i+Z_i}{\sqrt{2}},
\qquad
RZ_iR^\dagger=\frac{Z_i-Y_i}{\sqrt{2}}.
\end{equation}
Consequently, $\left[RZ_iR^\dagger-R\left(EY_iE^\dagger\right)R^\dagger\right]$ in Eq.~\eqref{eq:two-step-generator-expansion} has at most $2+2^3=10$ nonzero terms in its Pauli expansion, and $g_i^{[2]}$ also contains at most ten strings. Expanding the product of $k$ such generators yields at most $10^k$ Pauli strings.
\end{proof}

\section{Reduction of the fidelity-bounding SDP to an LP\label{app:A}}
Since for every $s\in\mathcal S$, the action of $\rho \rightarrow s\rho s^\dagger$ preserves both the constraints and the objective in Eq.~\eqref{eq:boundingfidelity} in the main text. 
Hence, every feasible state $\rho$ can be replaced by the twirled state $\bar{\rho}=|\mathcal{S}|^{-1}\sum_{s\in\mathcal{S}}s\rho s^{\dagger}$ without changing its feasibility or objective value.
Since $s_{\boldsymbol{x}}|\boldsymbol{\lambda}\rangle=(-1)^{\boldsymbol{x}\cdot\boldsymbol{\lambda}}|\boldsymbol{\lambda}\rangle$, twirling a matrix element gives
\begin{equation}
    |\mathcal{S}|^{-1}  \sum_{\boldsymbol{x}}s_{\boldsymbol{x}}|\boldsymbol{\lambda}\rangle \langle \boldsymbol{\mu}|s_{\boldsymbol{x}}^\dagger= 2^{-n}\sum_{\boldsymbol{x}\in\mathbb{Z}_2^n}(-1)^{\boldsymbol{x}\cdot(\boldsymbol{\lambda}+\boldsymbol{\mu})}|\boldsymbol{\lambda}\rangle\langle\boldsymbol{\mu}|=\delta_{\boldsymbol{\lambda},\boldsymbol{\mu}}|\boldsymbol{\lambda}\rangle\langle\boldsymbol{\mu}|
\end{equation}
Thus, the twirled state $\bar{\rho}=\sum_{\boldsymbol{\lambda}\in\mathbb{Z}_2^n}p_{\boldsymbol{\lambda}}|\boldsymbol{\lambda}\rangle\langle\boldsymbol{\lambda}|$ is diagonal, where $p_{\boldsymbol{\lambda}}=\langle\boldsymbol{\lambda}|\rho|\boldsymbol{\lambda}\rangle$, $p_{\boldsymbol{\lambda}}\geq 0$, and $\sum_{\boldsymbol{\lambda}}p_{\boldsymbol{\lambda}}=1$.
Since the target state corresponds to $|\boldsymbol{0}\rangle$ in this basis, the SDP reduces exactly to an LP:
\begin{equation}
    \begin{aligned}
    \label{eq:LP1}
    &F_{\min} \left(\mathcal{M}\right) = \min_{\{p_{\boldsymbol{\lambda}}\}\in \mathcal{P}_{2^n} } \left\{p_{\boldsymbol{0}} : \sum_{\boldsymbol{\lambda}\in \mathbb{Z}_2^n} (-1)^{\boldsymbol{x}\cdot\boldsymbol{\lambda}} p_{\boldsymbol{\lambda}} = \langle s_{\boldsymbol{x}}\rangle_\sigma ,~\boldsymbol{x}\in\mathcal{M}\right\},
    \end{aligned}
\end{equation}
where $\mathcal{P}_{2^n} \coloneqq \left\{ \boldsymbol{p}\in\mathbb{R}^{2^n} : p_{\boldsymbol{\lambda}}\ge 0,\ \sum_{\boldsymbol{\lambda}\in\mathbb{Z}_2^n} p_{\boldsymbol{\lambda}} = 1 \right\}$.

\section{Fidelity bounds from grouped stabilizer moments}
\label{app:binary-group-fidelity}

When all generator products within each group can be measured simultaneously, we can bound the fidelity using the grouped stabilizer moments. Here we illustrate this extension for three groups, and further reduce the LP from $(n_1+1)(n_2+1)(n_3+1)$ variables to $2^3$ variables. The intuition is that, since fidelity is the all-up probability, we retain only whether each group is all-up. The full group distributions can still be recovered within these outcomes without changing their probabilities, so the optimal fidelity bounds remain unchanged.

Let $A_1,A_2,A_3$ partition $[n]$, with $n_c=|A_c|>0$, such that all generator products within each group are accessible from the same local Pauli data. As in Eq.~\eqref{eq:fidelity_spinmoment}, define
\begin{equation}
S_c^{(k)}=\frac{1}{\binom{n_c}{k}}\sum_{\substack{I\subseteq A_c\\|I|=k}}\prod_{i\in I}g_i,\qquad c=1,2,3,\quad k=1,\ldots,n_c.
\label{eq:grouped-moment-definition}
\end{equation}
Here $S_c^{(0)}=\mathbb I$. Write $q_{\boldsymbol n_-}=q_{n_{-,1},n_{-,2},n_{-,3}}$ for the joint probability of the three down-spin counts $\boldsymbol n_-=(n_{-,1},n_{-,2},n_{-,3})$, with $0\leq n_{-,c}\leq n_c$. 
In the spin picture, the vertices $V_{\boldsymbol n_-}$ of the associated JNR can be defined in a similar way as in the main text. The fidelity bounds are given by
\begin{equation}
\begin{aligned}
F_{\min/\max}&=\underset{\{q_{\boldsymbol n_-}\geq0,\ \sum_{\boldsymbol n_-}q_{\boldsymbol n_-}=1\}}{\min/\max}\ q_{0,0,0},\\
\mathrm{s.t.}\quad
% &\sum_{\boldsymbol n_-}q_{\boldsymbol n_-}V_{\boldsymbol n_-}=\bigl(\langle S_1^{(1)}\rangle_\sigma,\ldots,\langle S_1^{(n_1)}\rangle_\sigma,\langle S_2^{(1)}\rangle_\sigma,\ldots,\langle S_2^{(n_2)}\rangle_\sigma,\langle S_3^{(1)}\rangle_\sigma,\ldots,\langle S_3^{(n_3)}\rangle_\sigma\bigr)^{\mathsf T}.
&\sum_{\boldsymbol n_-}q_{\boldsymbol n_-}v_c^{(k)}(n_{-,c})=\langle S_c^{(k)}\rangle_\sigma,~~~~~c=1,\ldots,\chi,\quad k=1,\ldots,n_c.
\end{aligned}
\label{eq:grouped-moment-full-lp}
\end{equation}
Here, $v_c^{(k)}(m)=\langle\boldsymbol{\lambda}|S_c^{(k)}|\boldsymbol{\lambda}\rangle$ is the eigenvalue of a common eigenstate $|\boldsymbol{\lambda}\rangle$ with $m$ down spins in group $c$. As shown in the manuscript, it is given by the Kravchuk polynomials $v_c^{(k)}(m)={\mathcal K_k(m;n_c,2)}/{\binom{n_c}{k}}$. 
To interpret this LP, let $X_c\in\{0,\ldots,n_c\}$ be the number of down spins in group $c$ under the candidate joint distribution. Then
\begin{equation}
    q_{\boldsymbol n_-}=\Pr(X_1=n_{-,1},X_2=n_{-,2},X_3=n_{-,3}),\qquad F=q_{0,0,0}=\Pr(X_1=0,X_2=0,X_3=0).
\end{equation}
Each moment constraint in Eq.~\eqref{eq:grouped-moment-full-lp} therefore reads $\sum_{\boldsymbol n_-}q_{\boldsymbol n_-}v_c^{(k)}(n_{-,c})=\sum_{m=0}^{n_c}v_c^{(k)}(m)\underbrace{\textstyle\sum_{\boldsymbol n_-:\,n_{-,c}=m}q_{\boldsymbol n_-}}_{\Pr(X_c=m)}=\langle S_c^{(k)}\rangle_\sigma$. 
The discrete orthogonality of Kravchuk polynomials give the inverse relation 
\begin{equation}
\Pr(X_c=m)=\frac{\binom{n_c}{m}}{2^{n_c}}\sum_{k=0}^{n_c}\binom{n_c}{k}v_c^{(k)}(m)\langle S_c^{(k)}\rangle_\sigma\equiv q_c(m),\qquad m=0,\ldots,n_c.
\end{equation}
Thus, Eq.~\eqref{eq:grouped-moment-full-lp} is equivalent to 
% The measured moments $\langle S_c^{(k)}\rangle_\sigma$, for $k=0,\ldots,n_c$, uniquely fix the marginal distribution of $X_c$, denoted by $q_c(n_{-,c})$. Thus, Eq.~\eqref{eq:grouped-moment-full-lp} can be written equivalently as
\begin{equation}
\begin{aligned}
F_{\min/\max}&=\underset{\{q_{\boldsymbol n_-}\geq0,\ \sum_{\boldsymbol n_-}q_{\boldsymbol n_-}=1\}}{\min/\max}\ \Pr(X_1=0,X_2=0,X_3=0),\\
\mathrm{s.t.}\quad
&\Pr(X_1=n_{-,1})=q_1(n_{-,1}),\qquad n_{-,1}=0,\ldots,n_1,\\
&\Pr(X_2=n_{-,2})=q_2(n_{-,2}),\qquad n_{-,2}=0,\ldots,n_2,\\
&\Pr(X_3=n_{-,3})=q_3(n_{-,3}),\qquad n_{-,3}=0,\ldots,n_3.
\end{aligned}
\label{eq:grouped-marginal-lp}
\end{equation}
This optimization fixes the three marginal distributions and varies the joint distribution to minimize or maximize $F=\Pr(X_1=0,X_2=0,X_3=0)$. The same bounds can be obtained by retaining only each group's zero-count probability $\Pr(X_c = 0)$. Then, we define $\widetilde X_c=0$ when $X_c=0$ and $\widetilde X_c=1$ when $X_c>0$, the fidelity bounds are therefore given by
\begin{equation}
\begin{aligned}
F_{\min/\max}&={\min/\max}\ \Pr(\widetilde X_1=0,\widetilde X_2=0,\widetilde X_3=0),\\
\mathrm{s.t.}\quad
&\Pr(\widetilde X_1=0)=q_1(0)=\frac{1}{2^{n_1}}\sum_{k=0}^{n_1}\binom{n_1}{k}\langle S_1^{(k)}\rangle_\sigma,\\
&\Pr(\widetilde X_2=0)=q_2(0)=\frac{1}{2^{n_2}}\sum_{k=0}^{n_2}\binom{n_2}{k}\langle S_2^{(k)}\rangle_\sigma,\\
&\Pr(\widetilde X_3=0)=q_3(0)=\frac{1}{2^{n_3}}\sum_{k=0}^{n_3}\binom{n_3}{k}\langle S_3^{(k)}\rangle_\sigma,
\end{aligned}
\label{eq:grouped-reduced-lp}
\end{equation}
where the right-hand side of each constraint is given by the expectation value of the projector $ \prod_{i\in A_{c}}{\mathbb{I} + g_i\over 2}$. 
For $\chi$ groups, this reduces $\prod_{c=1}^{\chi}(n_c+1)$ variables to $2^\chi$ without changing either optimal bound.

For hypergraph states, the coloring protocol of Ref.~\cite{prapplied_2019_hypergraph} uses the same group projectors: each test measures $X$ on a color class $A_c$ and $Z$ on its complement, with passing probability $q_c(0)$. 
This coloring protocol is designed to efficiently verify that the prepared state is close to the target state, and it also provides a lower bound on the fidelity. However, this bound can be loose when the prepared state deviates from the target. 
For instance, consider the equally weighted tests, its fidelity bound is
\begin{equation}
F \geq \max\!\left\{0,\sum_{c=1}^{\chi}q_c(0)-\chi+1\right\}.
\label{eq:coloring-fidelity-bounds}
\end{equation}
Under global depolarizing noise, $q_c(0)=1-\eta+\eta 2^{-n_c}$. For the $n=10$-qubit 1D order-$3$ cluster state with $(n_1,n_2,n_3)=(4,3,3)$ and $\eta=0.4$, Eq.~\eqref{eq:coloring-fidelity-bounds} gives a trivial lower bound of zero, whereas the true fidelity is $F=0.6+0.4\times2^{-10}\simeq0.6004$.

\section{Sample complexity of hypergraph states\label{app:large-deviation bounds}}

In practice, stabilizer moments are estimated from finitely many sampled operators and measurement shots, so finite-sample effects must be taken into account. 
In this section, we analyze the sample complexity of estimating stabilizer moments for hypergraph states. We first derive a general bound for general hypergraph states and then specialize it to the 1D order-3 cluster state, showing how its finite interaction range reduces the measurement cost. 

\subsection{Sample complexity for general hypergraph states}

\begin{lemma}
\label{lem:setting_term_parseval}
Given a hypergraph state $|H\rangle=U_H|+\rangle^{\otimes n}$ with stabilizer generators $\{g_i\}_{i=1}^n$, where $U_H\coloneqq\prod_{e\in E}C_e$,
let $I=\{i_1,\ldots,i_k\}\subseteq[n]$ and $g_I\coloneqq\prod_{i\in I}g_i$. Order the computational basis as $|\boldsymbol{x},\boldsymbol{y}\rangle$, with $\boldsymbol{x}$ labeling the qubits in $I$ and $\boldsymbol{y}$ in its complement $I^c = [n]\backslash I$. 
Then
\begin{equation}
\begin{aligned}
    g_I = \sum_{\substack{\boldsymbol{r}\in\mathbb{Z}_2^k\\
    |\boldsymbol{r}|\,\mathrm{even}}}g_{I,\boldsymbol{r}}, \qquad \text{with } ~~
    g_{I,\boldsymbol{r}}\coloneqq \left[
    X_{i_1}Z_{i_1}^{r_1} \otimes \cdots \otimes X_{i_k}Z_{i_k}^{r_k} \right]_I
    \otimes
    \left[\sum_{\boldsymbol{y}\in\mathbb{Z}_2^{n-k}}
    w_{\boldsymbol{r}}(\boldsymbol{y})
    |\boldsymbol{y}\rangle\langle\boldsymbol{y}|\right]_{I^c},
\end{aligned}
\end{equation}
where $w_{\boldsymbol{r}}(\boldsymbol{y}) \coloneqq2^{-k}\sum_{\boldsymbol{x}\in\mathbb{Z}_2^k}h(\boldsymbol{x},\boldsymbol{y})h(\boldsymbol{x}+\boldsymbol{1},\boldsymbol{y})(-1)^{\boldsymbol{r}\cdot\boldsymbol{x}}$, and the function $h(\boldsymbol{x},\boldsymbol{y})\in\{\pm1\}$ is defined by $U_H|\boldsymbol{x},\boldsymbol{y}\rangle=h(\boldsymbol{x},\boldsymbol{y})|\boldsymbol{x},\boldsymbol{y}\rangle$. 
Therefore, each $g_{I,\boldsymbol{r}}$ with even $|\boldsymbol{r}|$ is Hermitian and accessible from one local Pauli setting, in which all qubits in $I^c$ are measured in the $Z$-basis. 
For a single shot measurement, let $p_{\boldsymbol{r}}\in\{\pm1\}$ be the product of these local $X$- and $Y$-measurement outcomes on $I$ and let $\boldsymbol{y}$ be the outcome bit-string from qubits in $I^c$. The estimator 
\begin{equation}
    A_{I,\boldsymbol{r}}
    \coloneqq(-1)^{|\boldsymbol{r}|/2}
    p_{\boldsymbol{r}}w_{\boldsymbol{r}}(\boldsymbol{y})\in[-1,1],
\label{eq:single-shot-setting-estimator}
\end{equation}
is unbiased, i.e., $\mathbb{E}[A_{I,\boldsymbol{r}}]=\langle g_{I,\boldsymbol{r}}\rangle_\sigma$. Moreover, $\sum_{\substack{\boldsymbol{r}\in\mathbb{Z}_2^k\\|\boldsymbol{r}|\,\mathrm{even}}}g_{I,\boldsymbol{r}}^{2}=\mathbb{I}$.
\end{lemma}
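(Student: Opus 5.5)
Since $U_H$ is diagonal with $U_H|\boldsymbol{x},\boldsymbol{y}\rangle=h(\boldsymbol{x},\boldsymbol{y})|\boldsymbol{x},\boldsymbol{y}\rangle$, $h\in\{\pm1\}$, the plan is to compute $g_I=U_HX_IU_H^\dagger$ directly in the computational basis and Fourier-expand the result in the $I$ register. Because $X_I$ flips exactly the bits in $I$, we get
\begin{equation}
g_I=\sum_{\boldsymbol{x},\boldsymbol{y}}h(\boldsymbol{x}+\boldsymbol{1},\boldsymbol{y})h(\boldsymbol{x},\boldsymbol{y})\,|\boldsymbol{x}+\boldsymbol{1},\boldsymbol{y}\rangle\langle\boldsymbol{x},\boldsymbol{y}|
=\sum_{\boldsymbol{y}}\Bigl[X_I D_{\boldsymbol{y}}\Bigr]\otimes|\boldsymbol{y}\rangle\langle\boldsymbol{y}|,
\end{equation}
where $D_{\boldsymbol{y}}=\sum_{\boldsymbol{x}}f_{\boldsymbol{y}}(\boldsymbol{x})|\boldsymbol{x}\rangle\langle\boldsymbol{x}|$ is diagonal on $I$ and $f_{\boldsymbol{y}}(\boldsymbol{x})=h(\boldsymbol{x},\boldsymbol{y})h(\boldsymbol{x}+\boldsymbol{1},\boldsymbol{y})$. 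Expand $D_{\boldsymbol{y}}$ in $Z$-strings: $D_{\boldsymbol{y}}=\sum_{\boldsymbol{r}}w_{\boldsymbol{r}}(\boldsymbol{y})Z^{\boldsymbol{r}}$. Inverting with the characters $(-1)^{\boldsymbol{r}\cdot\boldsymbol{x}}$ gives exactly the stated $w_{\boldsymbol{r}}(\boldsymbol{y})$. This yields $g_I=\sum_{\boldsymbol{r}}g_{I,\boldsymbol{r}}$ over all $\boldsymbol{r}$.

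Next we eliminate odd $|\boldsymbol{r}|$ using the symmetry $f_{\boldsymbol{y}}(\boldsymbol{x}+\boldsymbol{1})=f_{\boldsymbol{y}}(\boldsymbol{x})$. Substituting $\boldsymbol{x}\to\boldsymbol{x}+\boldsymbol{1}$ in the sum defining $w_{\boldsymbol{r}}$ gives $w_{\boldsymbol{r}}=(-1)^{|\boldsymbol{r}|}w_{\boldsymbol{r}}$, so $w_{\boldsymbol{r}}=0$ whenever $|\boldsymbol{r}|$ is odd. The $w_{\boldsymbol{r}}$ are real because $h$ is real.

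For Hermiticity, write $XZ=-iY$, so $\prod_j X_{i_j}Z_{i_j}^{r_j}=(-i)^{|\boldsymbol{r}|}\,(X/Y\text{ string})$. For even $|\boldsymbol{r}|$ the prefactor is the real sign $(-1)^{|\boldsymbol{r}|/2}$. Hence $g_{I,\boldsymbol{r}}=(-1)^{|\boldsymbol{r}|/2}P_{\boldsymbol{r}}\otimes W_{\boldsymbol{r}}$, where $P_{\boldsymbol{r}}$ is a Pauli string with $Y$ where $r_j=1$ and $X$ elsewhere on $I$, and $W_{\boldsymbol{r}}$ is real diagonal on $I^c$. This is Hermitian, and it is measurable by one local setting: $P_{\boldsymbol{r}}$ on $I$ and $Z$ on $I^c$.

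\emph{Unbiasedness.} The joint outcome distribution of this setting gives $\mathbb{E}[p_{\boldsymbol{r}}\,\mathbb{1}_{\boldsymbol{y}}]=\mathrm{Tr}[\sigma(P_{\boldsymbol{r}}\otimes|\boldsymbol{y}\rangle\langle\boldsymbol{y}|)]$. Summing against $w_{\boldsymbol{r}}(\boldsymbol{y})$ yields $\mathbb{E}[A_{I,\boldsymbol{r}}]=\langle g_{I,\boldsymbol{r}}\rangle_\sigma$.

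\emph{Boundedness.} The bound $|A_{I,\boldsymbol{r}}|\le1$ reduces to $|w_{\boldsymbol{r}}(\boldsymbol{y})|\le1$, which holds because $w_{\boldsymbol{r}}(\boldsymbol{y})$ is an average of $\pm1$ values.

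\emph{Parseval identity.} Since $P_{\boldsymbol{r}}^2=\mathbb{I}$, we have $g_{I,\boldsymbol{r}}^2=\mathbb{I}_I\otimes\sum_{\boldsymbol{y}}w_{\boldsymbol{r}}(\boldsymbol{y})^2|\boldsymbol{y}\rangle\langle\boldsymbol{y}|$. Parseval for the Boolean function $f_{\boldsymbol{y}}$ gives $\sum_{\boldsymbol{r}}w_{\boldsymbol{r}}(\boldsymbol{y})^2=2^{-k}\sum_{\boldsymbol{x}}f_{\boldsymbol{y}}(\boldsymbol{x})^2=1$, and the sum may be restricted to even $\boldsymbol{r}$ because the odd coefficients vanish. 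This proves $\sum g_{I,\boldsymbol{r}}^2=\mathbb{I}$.

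The main obstacle is bookkeeping rather than any deep step. The sign and ordering conventions must be kept consistent: whether $X_I$ or $Z^{\boldsymbol{r}}$ sits on the left in $g_{I,\boldsymbol{r}}$, and how the $(-1)^{|\boldsymbol{r}|/2}$ prefactor matches the $Y$-measurement outcomes. I would verify these conventions against the $k=2$ example in Eq.~\eqref{eq:hypergraph-two-generator-settings}, where $\boldsymbol{r}\in\{00,11\}$ gives exactly the settings $XX$ and $YY$.
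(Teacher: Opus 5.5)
Your proposal is correct and follows the paper's proof essentially step for step. Like the paper, you write $g_I$ as $X_I$ times the diagonal operator with entries $h(\boldsymbol{x},\boldsymbol{y})h(\boldsymbol{x}+\boldsymbol{1},\boldsymbol{y})$, Walsh--Fourier expand that diagonal in $Z$-strings, eliminate odd $|\boldsymbol{r}|$ via invariance under the global flip $\boldsymbol{x}\to\boldsymbol{x}+\boldsymbol{1}$, bound $|w_{\boldsymbol{r}}(\boldsymbol{y})|\le 1$, and apply Parseval; your explicit use of $XZ=-iY$ additionally justifies the Hermiticity, the sign $(-1)^{|\boldsymbol{r}|/2}$ in the estimator, and $g_{I,\boldsymbol{r}}^2=\mathbb{I}_I\otimes W_{\boldsymbol{r}}^2$, details the paper leaves implicit.
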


\begin{proof}
For the sake of clarity, we relabel the qubits so that $I=\{1,\ldots,k\}$, and the general case follows identically. With $U_H$ and $h$ as defined in the lemma, we have $g_I=U_H(X_1\otimes\cdots\otimes X_k)U_H^\dagger$. Since $X_1\otimes\cdots\otimes X_k$ maps $|\boldsymbol{x},\boldsymbol{y}\rangle$ to $|\boldsymbol{x}+\boldsymbol{1},\boldsymbol{y}\rangle$, we have $g_I|\boldsymbol{x},\boldsymbol{y}\rangle=h(\boldsymbol{x},\boldsymbol{y})h(\boldsymbol{x}+\boldsymbol{1},\boldsymbol{y})|\boldsymbol{x}+\boldsymbol{1},\boldsymbol{y}\rangle$. Then, we can first decompose $g_I$ in the computational basis as
\begin{equation}
    \label{eq:B3}
    g_I =\sum_{\boldsymbol{x},\boldsymbol{y}}
    d_I(\boldsymbol{x},\boldsymbol{y})
    |\boldsymbol{x}+\boldsymbol{1},\boldsymbol{y}\rangle\langle\boldsymbol{x},\boldsymbol{y}|
    =(X_1\otimes \cdots \otimes X_k) \sum_{\boldsymbol{x},\boldsymbol{y}}
    d_I(\boldsymbol{x},\boldsymbol{y})
    |\boldsymbol{x},\boldsymbol{y}\rangle\langle\boldsymbol{x},\boldsymbol{y}|,
    \qquad \text{with} \quad 
    d_I(\boldsymbol{x},\boldsymbol{y})
    \coloneqq h(\boldsymbol{x},\boldsymbol{y})h(\boldsymbol{x}+\boldsymbol{1},\boldsymbol{y})\in\{\pm1\}.
\end{equation}
For fixed $\boldsymbol{y}$, $d_I(\boldsymbol{x},\boldsymbol{y})$ is actually the Fourier transform of $w_{\boldsymbol r}(\boldsymbol y)$ defined in the lemma, i.e.,
\begin{equation}
    d_I(\boldsymbol{x},\boldsymbol{y})
    =\sum_{\boldsymbol{r}\in\mathbb{Z}_2^k}w_{\boldsymbol{r}}(\boldsymbol{y}) (-1)^{\boldsymbol{r}\cdot\boldsymbol{x}},
    \qquad w_{\boldsymbol{r}}(\boldsymbol{y}) = 2^{-k}\sum_{\boldsymbol{x}\in\mathbb{Z}_2^k}d_I(\boldsymbol{x},\boldsymbol{y})(-1)^{\boldsymbol{r}\cdot\boldsymbol{x}}. 
\end{equation}
Substituting such a Fourier expansion into Eq.~\eqref{eq:B3} and using $\sum_{\boldsymbol{z}}(-1)^{\boldsymbol{r}\cdot\boldsymbol{z}}|\boldsymbol{z}\rangle\langle\boldsymbol{z}|=Z_1^{r_1}\otimes\cdots\otimes Z_k^{r_k}$ gives
\begin{equation}
    \begin{aligned}
    g_I
    = (X_1\otimes \cdots \otimes X_k) \sum_{\boldsymbol{y},\boldsymbol{r}}
    w_{\boldsymbol{r}}(\boldsymbol{y}) \sum_{\boldsymbol{x}}(-1)^{\boldsymbol{r}\cdot \boldsymbol{x}}|\boldsymbol{x},\boldsymbol{y}\rangle\langle\boldsymbol{x},\boldsymbol{y}| =\sum_{\boldsymbol{r}\in\mathbb{Z}_2^k}
    \left[(X_1Z_1^{r_1})\otimes\cdots\otimes
    (X_kZ_k^{r_k})\right]
    \otimes
    \left[\sum_{\boldsymbol{y}\in \mathbb{Z}_2^{n-k}}w_{\boldsymbol{r}}(\boldsymbol{y})
    |\boldsymbol{y}\rangle\langle\boldsymbol{y}|\right] =\sum_{\substack{\boldsymbol{r}\in\mathbb{Z}_2^k\\
    |\boldsymbol{r}|\,\mathrm{even}}}g_{I,\boldsymbol{r}}.
    \end{aligned}
\end{equation}
The requirement on even $|\boldsymbol r|$ in the third equality follows from the fact that $d_I(\boldsymbol{x}+\boldsymbol{1},\boldsymbol{y})=d_I(\boldsymbol{x},\boldsymbol{y})$, which implies $w_{\boldsymbol{r}}(\boldsymbol{y})=(-1)^{|\boldsymbol{r}|}w_{\boldsymbol{r}}(\boldsymbol{y})$ and hence removes all odd-weight terms. 
Measuring the corresponding $X$ or $Y$ observables on $I$ and $Z$ on $I^c$ produces the single-shot estimator $A_{I,\boldsymbol{r}}$ defined in the lemma. Since $|w_{\boldsymbol{r}}(\boldsymbol{y})|\leq2^{-k}\sum_{\boldsymbol{x}}|d_I(\boldsymbol{x},\boldsymbol{y})|=1$ , we have $A_{I,\boldsymbol{r}}\in[-1,1]$. 

Finally, Parseval's identity and $d_I(\boldsymbol{x},\boldsymbol{y})^2=1$ give $\sum_{\boldsymbol{r}}w_{\boldsymbol{r}}(\boldsymbol{y})^2 = 2^{-k}\sum_{\boldsymbol{x}\in\mathbb{Z}_2^k} d_I(\boldsymbol{x},\boldsymbol{y})^2=1$ for every $\boldsymbol{y}$. 
For even $|\boldsymbol{r}|$, we obtain
\begin{equation}
\begin{aligned}
    \sum_{\substack{\boldsymbol{r}\in\mathbb{Z}_2^k\\
    |\boldsymbol{r}|\,\mathrm{even}}}
    g_{I,\boldsymbol{r}}^{2}
    &=\mathbb{I}_I\otimes
    \sum_{\boldsymbol{y}}
    \sum_{\substack{\boldsymbol{r}\in\mathbb{Z}_2^k\\
    |\boldsymbol{r}|\,\mathrm{even}}}
    w_{\boldsymbol{r}}(\boldsymbol{y})^2
    |\boldsymbol{y}\rangle\langle\boldsymbol{y}| = \mathbb{I}_I\otimes
    \sum_{\boldsymbol{y}}
    \sum_{\boldsymbol{r}\in\mathbb{Z}_2^k}
    w_{\boldsymbol{r}}(\boldsymbol{y})^2
    |\boldsymbol{y}\rangle\langle\boldsymbol{y}|  =\mathbb{I}. 
\end{aligned}
\end{equation}

\end{proof}

% \begin{lemma}
% \label{lem:bernstein-inequality}
% Let $Y_1,\ldots,Y_N$ be independent, zero-mean random variables satisfying
% $|Y_i| \leq b$ for every $i=1,\ldots,N$.
% Then, for every $\epsilon>0$ and every
% $v \geq \operatorname{Var}\!\left(\sum_{i=1}^N Y_i\right)$, we have
% \begin{equation}
% \Pr\!\left(
%     \left|\sum_{i=1}^N Y_i\right|\geq \epsilon
% \right)
% \leq
% 2\exp\!\left[
%     -\frac{\epsilon^2}{2\left(v+b\epsilon/3\right)}
% \right].
% \label{eq:bernstein-inequality}
% \end{equation}
% \end{lemma}

% \begin{proof}
%     See~\cite{Bernstein_inequality}.
% \end{proof}

For each $k$ and every $k$-element subset $I$, we define $\mathcal R_I\coloneqq\{\boldsymbol r\in\mathbb Z_2^k:|\boldsymbol r|\text{ is even},\ g_{I,\boldsymbol r}\neq0\}$ with $|\mathcal R_I|\leq 2^{k-1}$. Its cardinality $|\mathcal R_I|$ is the number of Pauli settings required to estimate $\langle g_I\rangle_\sigma$. 
By construction, all terms outside $\mathcal R_I$ vanish, so $g_I=\sum_{\boldsymbol r\in\mathcal R_I}g_{I,\boldsymbol r}$ and $\sum_{\boldsymbol r\in\mathcal R_I}g_{I,\boldsymbol r}^2=\mathbb I$. 
We now establish a finite-sample guarantee for jointly estimating the stabilizer moments up to order $K$, together with a bound on the required number of copies.

\begin{theorem}
\label{thm:moment_sampling}
For each $k\in[K]$ with $ K \leq n$, we define $\mu_k\coloneqq\langle S^{(k)}\rangle_\sigma$ and sample $I_1,\ldots,I_l \subseteq [n]$ with $l=\left\lceil8\ln(4K/\delta)/\epsilon^2\right\rceil$ independently and uniformly from the $k$-element subsets of $[n]$. 
For each sampled subset $I_j$, we estimate $\langle g_{I_j}\rangle_\sigma$ using $|\mathcal R_{I_j}|$ local Pauli settings, each measured on $M=\left\lceil8(1+\epsilon/3)\ln(4K/\delta)/(l\epsilon^2)\right\rceil$ independent copies. We then construct the estimator $\tilde{\mu}_k\coloneqq\frac{1}{Ml}\sum_{j=1}^l\sum_{\boldsymbol r\in\mathcal R_{I_j}}\sum_{t=1}^M A_{I_j,\boldsymbol r,t}$, where $A_{I_j,\boldsymbol r,t}$ denotes the single-shot estimator defined in Eq.~\eqref{eq:single-shot-setting-estimator} for the $t$-th copy. Thus, the estimators satisfy $\Pr\!\left[\bigcap_{k = 1}^K|\widetilde\mu_k-\mu_k|\leq\epsilon\right] \geq 1 - \delta$ with a total number of $N_{\rm tot}=M\sum_{k=1}^{K}\sum_{j=1}^l|\mathcal R_{I_j}|$ copies. 
Using $\lceil x\rceil\leq x+1$ in the definitions of $l$ and $M$, the expected number of copies satisfies
\begin{equation}
\mathbb{E}[N_{\rm tot}] = Ml\sum_{k=1}^{K}\mathbb E_{|I|=k}[|\mathcal R_I|] \leq \left[\frac{8(2+\epsilon/3)\ln(4K/\delta)}{\epsilon^2}+1\right](2^K-1) = O\!\left[\frac{2^K\ln(4K/\delta)}{\epsilon^2}\right].
\label{eq:general-mean-sampling-cost}
\end{equation}
\end{theorem}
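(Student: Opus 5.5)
The plan is to split the estimation error for each fixed $k$ into two parts, a subset-sampling error and a shot-noise error, and allot an error of $\epsilon/2$ and a failure probability of $\delta/(2K)$ to each. A union bound over the $2K$ resulting events then gives the joint guarantee. Concretely, I write
\begin{equation*}
\tilde\mu_k-\mu_k=\Bigl(\tfrac{1}{l}\textstyle\sum_{j=1}^l\langle g_{I_j}\rangle_\sigma-\mu_k\Bigr)+\Bigl(\tilde\mu_k-\tfrac{1}{l}\textstyle\sum_{j=1}^l\langle g_{I_j}\rangle_\sigma\Bigr).
\end{equation*}

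\textbf{First term (subset sampling).} By definition, $S^{(k)}$ is the uniform average of $g_I$ over all $k$-subsets. Hence the $\langle g_{I_j}\rangle_\sigma$ are i.i.d. with mean $\mu_k$. They lie in $[-1,1]$ because each $g_I$ is unitary and Hermitian. Hoeffding's inequality gives
\begin{equation*}
\Pr\bigl[|\cdot|\geq\epsilon/2\bigr]\leq 2e^{-l\epsilon^2/8}\leq\delta/(2K),
\end{equation*}
where the last step uses the choice $l\geq 8\ln(4K/\delta)/\epsilon^2$.

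\textbf{Second term (shot noise).} I condition on the sampled subsets. By Lemma~\ref{lem:setting_term_parseval}, each $A_{I_j,\boldsymbol r,t}$ is unbiased for $\langle g_{I_j,\boldsymbol r}\rangle_\sigma$. Summing over $\boldsymbol r\in\mathcal R_{I_j}$ reproduces $\langle g_{I_j}\rangle_\sigma$, so the second term is a sum of independent zero-mean variables $(A-\mathbb E A)/(Ml)$, each bounded in magnitude by $b=2/(Ml)$.

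The key step is the variance bound. Since $A_{I,\boldsymbol r}^2=w_{\boldsymbol r}(\boldsymbol y)^2$, we get
\begin{equation*}
\mathbb E[A_{I,\boldsymbol r}^2]=\langle \mathbb I_I\otimes\textstyle\sum_{\boldsymbol y}w_{\boldsymbol r}(\boldsymbol y)^2|\boldsymbol y\rangle\langle\boldsymbol y|\rangle_\sigma=\langle g_{I,\boldsymbol r}^2\rangle_\sigma.
\end{equation*}
The Parseval identity $\sum_{\boldsymbol r\in\mathcal R_I}g_{I,\boldsymbol r}^2=\mathbb I$ from the lemma then gives $\sum_{\boldsymbol r}\mathrm{Var}(A_{I,\boldsymbol r})\leq 1$ for each sampled subset. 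The total variance is therefore at most
\begin{equation*}
v=\sum_{j=1}^l M\cdot\frac{1}{(Ml)^2}=\frac{1}{Ml}.
\end{equation*}
This bound is crucially independent of $|\mathcal R_I|$, and I expect it to be the main point requiring care. Bernstein's inequality at deviation $\epsilon/2$ gives the bound
\begin{equation*}
2\exp\!\left[-\frac{\epsilon^2/4}{2(v+b\epsilon/6)}\right]=2\exp\!\left[-\frac{Ml\,\epsilon^2}{8(1+\epsilon/3)}\right]\leq\frac{\delta}{2K},
\end{equation*}
using the stated choice of $M$. Since this conditional bound is uniform in the sampled subsets, it also holds unconditionally.

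\textbf{Union bound.} Combining the two terms over $k=1,\ldots,K$ gives $2K$ events, each failing with probability at most $\delta/(2K)$. With probability at least $1-\delta$, every $|\tilde\mu_k-\mu_k|\leq\epsilon$.

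\textbf{Expected cost.} Because $M$ and $l$ are deterministic, linearity of expectation gives
\begin{equation*}
\mathbb E[N_{\rm tot}]=Ml\sum_{k=1}^K\mathbb E_{|I|=k}|\mathcal R_I|.
\end{equation*}
We have $|\mathcal R_I|\leq 2^{k-1}$, so the sum is at most $2^K-1$. Using $\lceil x\rceil\leq x+1$,
\begin{equation*}
Ml\leq \frac{8(1+\epsilon/3)\ln(4K/\delta)}{\epsilon^2}+l\leq \frac{8(2+\epsilon/3)\ln(4K/\delta)}{\epsilon^2}+1.
\end{equation*}
This yields Eq.~\eqref{eq:general-mean-sampling-cost}.
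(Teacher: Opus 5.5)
Your proposal is correct and follows essentially the same route as the paper. Both use the same split into a subset-sampling term bounded by Hoeffding and a shot-noise term bounded by Bernstein with the Parseval variance bound $1/(Ml)$ from Lemma~\ref{lem:setting_term_parseval}, the same allocation of $\epsilon/2$ and $\delta/(2K)$ to each term, and the same union bound and cost computation; your explicit checks that $\mathbb{E}[A_{I,\boldsymbol r}^2]=\langle g_{I,\boldsymbol r}^2\rangle_\sigma$ and of the final bound on $Ml$ fill in steps the paper only states.
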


\begin{proof}
For each $k\in[K]$, suppose that $\langle g_{I_1}\rangle_\sigma, \cdots, \langle g_{I_l}\rangle_\sigma$ with $l = \lceil2\ln(2/\delta_1)/\epsilon_1^2 \rceil$ are known exactly. We construct the estimator $\widehat{\mu}_k\coloneqq\frac{1}{l}\sum_{j=1}^l\langle g_{I_j}\rangle_\sigma$. 
Since the subsets are sampled independently and uniformly, we have $\mathbb{E}[\widehat{\mu}_k]=\mu_k$. Because $\langle g_{I_j}\rangle_\sigma \in[-1,1]$, Hoeffding's inequality directly gives
\begin{equation}
    \label{eq:moment_hoeffding}
    \Pr\left[|\widehat{\mu}_k-\mu_k|\geq\epsilon_1\right]
    \leq2\exp\left(-{l\epsilon_1^2}/2\right)
    \leq\delta_1.
\end{equation}

We now consider the estimation error of each $\langle g_{I_j}\rangle_\sigma$. 
For each $\boldsymbol r\in\mathcal R_{I_j}$, we perform the corresponding Pauli measurement on $M = \lceil 2(1+2\epsilon_2/3)\ln(2/\delta_2)/(l\epsilon_2^2) \rceil$ independent copies. Then, we construct the estimator $\tilde{\mu}_k\coloneqq\frac{1}{Ml}\sum_{j=1}^l\sum_{\boldsymbol r\in\mathcal R_{I_j}}\sum_{t=1}^M A_{I_j,\boldsymbol r,t}$ that is unbiased $\mathbb{E}[\tilde{\mu}_k]=\widehat{\mu}_k$. Define $Y_{j\boldsymbol{r}t}\coloneqq(A_{I_j,\boldsymbol{r},t}-\mathbb{E}[A_{I_j,\boldsymbol{r},t}])/(Ml)$ such that $|Y_{j\boldsymbol{r}t}|\leq2/(Ml)$. Lemma~\ref{lem:setting_term_parseval} bounds the total variance as
\begin{equation}
\begin{aligned}
    \sum_{j,\boldsymbol{r},t}\operatorname{Var}(Y_{j\boldsymbol{r}t})
    =\frac{1}{M^2l^2} \sum_{j,\boldsymbol{r},t} \left( \mathbb{E}[A_{I_j,\boldsymbol{r},t}^2] - \mathbb{E}[A_{I_j,\boldsymbol{r},t}]^2\right) \leq\frac{1}{M^2l^2}
    \sum_{j,\boldsymbol{r},t}\mathbb{E}[A_{I_j,\boldsymbol{r},t}^2] = \frac{1}{Ml^2}\sum_{j=1}^l \sum_{\boldsymbol{r}\in\mathcal R_{I_j}}
    \left\langle g_{I_j,\boldsymbol{r}}^2\right\rangle_\sigma
    =\frac{1}{Ml}.
\end{aligned}
\end{equation}
Bernstein's inequality therefore yields
\begin{equation}
    \label{eq:moment_bernstein}
    \Pr\left[|\tilde{\mu}_k-\widehat{\mu}_k|
    \geq\epsilon_2\right]
    \leq2\exp\left[
    -\frac{Ml\epsilon_2^2}{2(1+2\epsilon_2/3)}
    \right]
    \leq\delta_2.
\end{equation}
The triangle inequality $|\tilde{\mu}_k-\mu_k|\leq|\tilde{\mu}_k-\widehat{\mu}_k|+|\widehat{\mu}_k-\mu_k|$ implies that an error of at least $\epsilon_1+\epsilon_2$ can occur only if one of the two contributions exceeds its respective threshold. Hence, the union bound and Eqs.~\eqref{eq:moment_hoeffding} and~\eqref{eq:moment_bernstein} give
\begin{equation}
\begin{aligned}
    \Pr\left[|\tilde{\mu}_k-\mu_k|\geq\epsilon_1+\epsilon_2\right]
    &\leq
    \Pr\left[|\tilde{\mu}_k-\widehat{\mu}_k|\geq\epsilon_2\right]
    +\Pr\left[|\widehat{\mu}_k-\mu_k|\geq\epsilon_1\right]
    \leq\delta_2+\delta_1.
\end{aligned}
\end{equation}
Thus, let $\epsilon_1 = \epsilon_2 = \epsilon/2$, and $\delta_1 = \delta_2 = \delta/2K$, each order-$k$ estimate fails with probability at most $\delta/K$. Applying the union bound gives 
\begin{equation}
\Pr\!\left[\bigcap_{k=1}^K|\widetilde\mu_k-\mu_k|\leq\epsilon\right] \geq 1 -  \Pr\!\left[\bigcup_{k=1}^K|\widetilde\mu_k-\mu_k|\geq\epsilon\right]
\geq 1 - \sum_{k=1}^{K}\Pr\!\left[|\widetilde\mu_k-\mu_k|\geq\epsilon\right]
\geq 1 - \sum_{k=1}^{K}\frac{\delta}{K}
=1 - \delta.
\end{equation}
In total, the protocol uses $N_{\rm tot}\coloneqq M\sum_{k=1}^{K}\sum_{j=1}^{l}|\mathcal R_{I_j}|$ copies. Taking the expectation over the sampled subsets yields
\begin{equation}
\begin{aligned}
\mathbb E[N_{\rm tot}] = Ml\sum_{k=1}^{K}\mathbb E_{|I|=k}[|\mathcal R_I|] \leq Ml\sum_{k=1}^{K}2^{k-1}=Ml(2^K-1).
\end{aligned}
\end{equation}
which concludes the proof.
\end{proof}

\subsection{Sample complexity for the 1D order-3 cluster state}
\label{app:collision-adaptive-moments}

Theorem~\ref{thm:moment_sampling} expresses the sampling cost using $\mathbb E_{|I|=k}[|\mathcal R_I|] \leq 2^{k - 1}$. For the 1D order-3 cluster state as shown in Fig.~\ref{fig:supp_1Dcluster}, the finite interaction range allows this support to be bounded by $\mathbb E_{|I|=k}[|\mathcal R_I|] \leq (1+{2k\over n})^{k - 1}$. For example, the product $g_1g_4\cdots g_n$ contains approximately $k=n/3$ generators but only one term $g_{I,\boldsymbol{r}}$ appears in its Pauli decomposition.

\begin{figure}[t]
    \centering
    \includegraphics[width=0.6\linewidth]{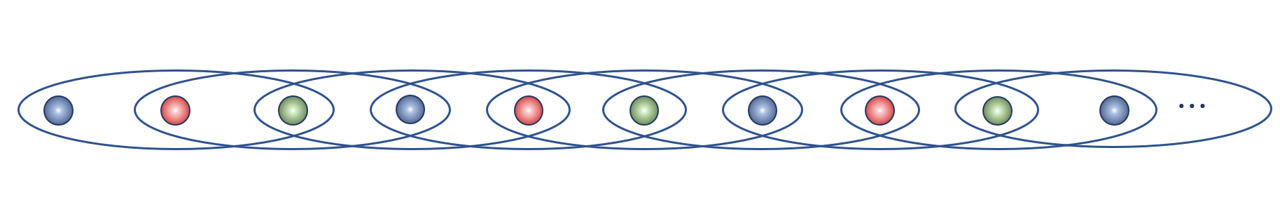}
    \caption{\label{fig:supp_1Dcluster}Structure of the 1D order-3 cluster state $|C\rangle$. Each vertex represents a qubit, each loop denotes a three-qubit hyperedge $\{i,i+1,i+2\}$, and the vertex colors indicate a 3-coloring.}
\end{figure}

\begin{lemma}
\label{lem:nearby-generator-collision}
Let $I=\{i_1<\cdots<i_k\}$ be a uniformly sampled $k$-element subset of $[n]$. Define $a(I)\coloneqq\sum_{j=1}^{k-1}\mathds{1}[i_{j+1}-i_j\leq2]$, which counts how many neighboring pairs $i_j,i_{j+1}$ in the ordered list are separated by at most two sites. Then
\begin{equation}
\mathbb E_{|I|=k}[2^{a(I)}]
\leq
\left(1+\frac{2k}{n}\right)^{k-1}.
\label{eq:mean-setting-factor-bound}
\end{equation}
\end{lemma}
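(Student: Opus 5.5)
The plan is to expand $2^{a(I)}$ as a product over the $k-1$ consecutive gaps and reduce everything to a counting bound for "forced small gaps". Write $\Delta_j\coloneqq i_{j+1}-i_j$ for $j\in[k-1]$. Then
\begin{equation*}
2^{a(I)}=\prod_{j=1}^{k-1}\bigl(1+\mathds{1}[\Delta_j\leq 2]\bigr)=\sum_{T\subseteq[k-1]}\prod_{j\in T}\mathds{1}[\Delta_j\leq2].
\end{equation*}
Taking the expectation over a uniformly random $k$-subset gives
\begin{equation*}
\mathbb E_{|I|=k}[2^{a(I)}]=\sum_{T\subseteq[k-1]}\Pr\bigl[\Delta_j\leq 2\ \ \forall j\in T\bigr].
\end{equation*}
It therefore suffices to prove, for every $T$ with $|T|=t$,
\begin{equation*}
\Pr\bigl[\Delta_j\leq2\ \ \forall j\in T\bigr]\leq\left(\frac{2k}{n}\right)^{t}.
\end{equation*}
The binomial theorem then gives $\sum_{t=0}^{k-1}\binom{k-1}{t}(2k/n)^t=(1+2k/n)^{k-1}$, which is the claim.

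For the probability bound I would count the $k$-subsets satisfying the constraint. First fix $T$ and the values $(\Delta_j)_{j\in T}\in\{1,2\}^t$; there are $2^t$ choices. For each such choice I use a compression map. For each $j\in T$, delete $i_{j+1}$ and shift all later elements left by $\Delta_j$. This maps $I$ to a $(k-t)$-subset of $[\,n-\sum_{j\in T}\Delta_j\,]\subseteq[n-t]$.

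The map is injective once $T$ and the gap values are fixed: reading the compressed set left to right, each deleted point can be reinserted at distance $\Delta_j$ after its predecessor, and the later points shifted back. Hence the number of admissible $I$ is at most $2^t\binom{n-t}{k-t}$.

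Dividing by $\binom nk$ gives
\begin{equation*}
\Pr\bigl[\Delta_j\leq2\ \ \forall j\in T\bigr]\leq 2^t\,\frac{\binom{n-t}{k-t}}{\binom nk}=2^t\prod_{s=0}^{t-1}\frac{k-s}{n-s}\leq\left(\frac{2k}{n}\right)^t,
\end{equation*}
where the last step uses $(k-s)/(n-s)\leq k/n$ for $k\leq n$.

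The step needing the most care is the compression and its injectivity. Several indices in $T$ may be consecutive, creating runs of deleted points, so the reinsertion must be done from left to right, with each shift taken relative to the already-reconstructed predecessor. The shifts must also keep the image inside $[n-t]$: this holds because every $\Delta_j\geq1$, and it is what produces the favorable factor $\binom{n-t}{k-t}$ rather than $\binom{n}{k-t}$.

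The naive deletion without shifting would only give $\binom{n}{k-t}/\binom nk$, which is not bounded by $(k/n)^t$ when $k$ is close to $n$. So the shifting is the essential point of the argument.
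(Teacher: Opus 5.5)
Your proof is correct and follows essentially the same route as the paper. Both arguments expand $2^{a(I)}$ as a sum over subsets of the $k-1$ consecutive gaps, bound each forced-small-gap probability by $2^{t}\binom{n-t}{k-t}/\binom{n}{k}\leq(2k/n)^{t}$, and resum with the binomial theorem. The only difference is cosmetic: the paper counts via stars-and-bars on the gap tuple $(\Delta_0,\ldots,\Delta_k)$ summing to $n-k$, whereas your compression map is that same bijection phrased directly on subsets, since it maps exactly onto the $(k-t)$-subsets of $[\,n-\sum_{j\in T}\Delta_j\,]$.
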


\begin{proof}
The expectation $\mathbb E_{|I|=k}[2^{a(I)}]$ depends only on how often consecutive selected indices are separated by at most two sites. We encode these separations by the gaps $\upDelta_0=i_1-1$, $\upDelta_j=i_{j+1}-i_j-1$ for $1\leq j\leq k-1$, and $\upDelta_k=n-i_k$, as illustrated by
\begin{equation}
\underbrace{1,2,\ldots,i_1-1}_{\upDelta_0},\
\boxed{i_1},\
\underbrace{i_1+1,\ldots,i_2-1}_{\upDelta_1},\
\boxed{i_2},\ \ldots,\
\boxed{i_j},\
\underbrace{i_j+1,\ldots,i_{j+1}-1}_{\upDelta_j},\
\boxed{i_{j+1}},\ \ldots,\
\boxed{i_k},\
\underbrace{i_k+1,\ldots,n}_{\upDelta_k}.
\end{equation}
The gaps are nonnegative integers satisfying $\sum_{j=0}^{k}\upDelta_j=n-k$. Conversely, any such tuple $(\upDelta_0,\ldots,\upDelta_k)$ uniquely determines the sampled subset through $i_1=\upDelta_0+1$ and so on. Hence, sampling $I$ uniformly induces a uniform sample over the $\binom nk$ admissible gap tuples $(\upDelta_0,\ldots,\upDelta_k)$, i.e., they satisfy $\sum_{j = 0}^k \upDelta_j = n - k$.

Because $a(I)=\sum_{j=1}^{k-1}\mathds{1}[\upDelta_j\leq1]$, we expand
$2^{a(I)}=\prod_{j=1}^{k-1}(1+\mathds{1}[\upDelta_j\leq1])$. Hence,
\begin{equation}
\begin{aligned}
\mathbb{E}_{|I|=k}[2^{a(I)}] = \frac{1}{\binom nk}\sum_{\text{admissible } (\upDelta_0,\ldots,\upDelta_k)}\prod_{j=1}^{k-1}(1+\mathds{1}[\upDelta_j\leq1]) = \sum_{R\subseteq\{\upDelta_1,\ldots,\upDelta_{k-1}\}}
\Pr[\upDelta\leq1\ \text{for every }\upDelta\in R]. 
\end{aligned}
\end{equation}
For a fixed $R$, suppose that $q$ of its gaps equal one and the remaining $|R|-q$ gaps equal zero. After choosing these $q$ gaps, the remaining $k+1-|R|$ gaps are nonnegative integers with total $n-k-q$. The number of such assignments is
$\binom{(n-k-q)+(k+1-|R|)-1}{(k+1-|R|)-1}=\binom{n-|R|-q}{k-|R|}$. Therefore,
\begin{equation}
\begin{aligned}
\Pr[\upDelta\leq1\ \text{for every }\upDelta\in R]
&=\frac{1}{\binom nk}\sum_{q=0}^{|R|}\binom{|R|}{q}\binom{n-|R|-q}{k-|R|}
&\leq2^{|R|}\frac{\binom{n-|R|}{k-|R|}}{\binom nk}
=2^{|R|}\frac{k!(n-|R|)!}{(k-|R|)!n!}
&=2^{|R|}\prod_{j=0}^{|R|-1}\frac{k-j}{n-j}
\leq\left(\frac{2k}{n}\right)^{|R|}.
\end{aligned}
\label{eq:short-gap-probability}
\end{equation}
The first inequality uses $\binom{n-|R|-q}{k-|R|}\leq\binom{n-|R|}{k-|R|}$ and $\sum_{q=0}^{|R|}\binom{|R|}{q}=2^{|R|}$, and the last uses $(k-j)/(n-j)\leq k/n$.

Applying Eq.~\eqref{eq:short-gap-probability} to the probability sum above gives
\begin{equation}
\mathbb E_{|I|=k}[2^{a(I)}]
\leq
\sum_{R\subseteq\{\upDelta_1,\ldots,\upDelta_{k-1}\}}
\left(\frac{2k}{n}\right)^{|R|}
=\left(1+\frac{2k}{n}\right)^{k-1}.
\end{equation}
\end{proof}

We now connect $2^{a(I)}$ to the measurement cost using the fact that only the product of nearby generators can increase the number of required settings.

\begin{lemma}
\label{lem:component-setting-bound}
Write $I=\{i_1<\cdots<i_k\}$ and split it into $c(I)$ blocks $J_1,\ldots,J_{c(I)}$, starting a new block whenever $i_{j+1}-i_j>2$. For example, $\{1,3,4,8,10\}$ is split into $\{1,3,4\}$ and $\{8,10\}$. By construction, $c(I)=k-a(I)$, where $a(I)$ is defined in Lemma~\ref{lem:nearby-generator-collision}.
With $\mathcal R_I$ as defined above, let $\mathcal Q_I$ be the set of labels $\boldsymbol{r}=(r_1,\ldots,r_k)$ having even parity within every block $J_\nu$.
Then, for the 1D order-3 cluster state, we have $\mathcal R_I\subseteq\mathcal Q_I$ and
\begin{equation}
|\mathcal R_I|
\leq|\mathcal Q_I|
=2^{a(I)}.
\label{eq:component-setting-bound}
\end{equation}
\end{lemma}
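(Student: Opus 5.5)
The plan is to exploit the factorization of $g_I$ across blocks that do not interact. Any two indices in different blocks $J_\nu, J_{\nu'}$ satisfy $|i-i'|>2$. By Eq.~\eqref{eq:hypergraph-generator-diagonal}, each generator $g_i$ of $|C\rangle$ acts as $X_i$ times a diagonal operator supported on $\{i-2,\ldots,i+2\}$. We therefore write $g_I=\prod_{\nu}g_{J_\nu}$, and for each block $g_{J_\nu}=X_{J_\nu}D_\nu$ with $D_\nu$ diagonal.

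The key observation concerns the phase function $d_I(\boldsymbol x,\boldsymbol y)=h(\boldsymbol x,\boldsymbol y)h(\boldsymbol x+\boldsymbol 1,\boldsymbol y)$ from the proof of Lemma~\ref{lem:setting_term_parseval}. Here $h$ is a product over hyperedges $\{j,j+1,j+2\}$, and a hyperedge contributes to $d_I$ only if it contains a flipped qubit. No hyperedge contains qubits from two different blocks, since that would require two indices at distance at most $2$. Hence every contributing hyperedge contains flipped qubits from exactly one block. This gives
\begin{equation}
d_I(\boldsymbol x,\boldsymbol y)=\prod_{\nu=1}^{c(I)} d_{J_\nu}(\boldsymbol x_{J_\nu},\boldsymbol z),
\end{equation}
where $d_{J_\nu}$ depends on $\boldsymbol x$ only through $\boldsymbol x_{J_\nu}$, together with $\boldsymbol y$. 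Let $\boldsymbol z$ collect all the variables that are not $\boldsymbol x_{J_\nu}$; one should be careful that other $x$'s may enter as unflipped neighbours. Two cases arise:
\begin{enumerate}
\item[(i)] Suppose a hyperedge containing a flipped qubit of $J_\nu$ also contains a qubit of $I$ from another block. That qubit would be within distance $2$ of $J_\nu$, which is impossible.
\item[(ii)] Otherwise, the factor's dependence on $\boldsymbol x$ is confined to $\boldsymbol x_{J_\nu}$; any qubits of $I$ it touches lie in $J_\nu$ itself.
\end{enumerate}
So $d_{J_\nu}$ is a function of $(\boldsymbol x_{J_\nu},\boldsymbol y)$ only.

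Each factor obeys $d_{J_\nu}(\boldsymbol x_{J_\nu}+\boldsymbol 1,\boldsymbol y)=d_{J_\nu}(\boldsymbol x_{J_\nu},\boldsymbol y)$, because swapping $\boldsymbol x$ and $\boldsymbol x+\boldsymbol 1$ merely exchanges the two $h$ factors. The argument in Lemma~\ref{lem:setting_term_parseval} then shows that the Fourier coefficients of $d_{J_\nu}$ in $\boldsymbol x_{J_\nu}$ vanish unless $\boldsymbol r_{J_\nu}$ has even weight. The Fourier transform of a product of functions of disjoint variable sets is the product of their transforms, so
\begin{equation}
w_{\boldsymbol r}(\boldsymbol y)=\prod_\nu w^{(\nu)}_{\boldsymbol r_{J_\nu}}(\boldsymbol y).
\end{equation}
This is nonzero only if every $\boldsymbol r_{J_\nu}$ has even parity, which proves $\mathcal R_I\subseteq\mathcal Q_I$.

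For the count, the labels with even parity in a block of size $|J_\nu|$ number $2^{|J_\nu|-1}$. Multiplying over blocks gives
\begin{equation}
|\mathcal Q_I|=2^{\sum_\nu(|J_\nu|-1)}=2^{k-c(I)}=2^{a(I)},
\end{equation}
using $c(I)=k-a(I)$, which holds because each new block starts exactly at a gap $>2$.

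The main obstacle is the careful bookkeeping that every hyperedge touching a flipped qubit avoids all qubits of $I$ outside that qubit's block. This needs the threshold $2$ to match the hyperedge span $3$: a hyperedge $\{j,j+1,j+2\}$ has diameter $2$, so two qubits of $I$ in the same hyperedge are at distance $\le 2$ and hence lie in the same block.
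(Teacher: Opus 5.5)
Your proof is correct and uses the same key fact as the paper: no hyperedge $\{q,q+1,q+2\}$ meets two blocks, so flipping the bits of a single block leaves $d_I$ invariant and forces even parity of $\boldsymbol r$ within each block. The paper skips your explicit factorization $d_I=\prod_\nu d_{J_\nu}$ and instead substitutes the block flip $\boldsymbol x\mapsto\boldsymbol x'$ directly into the Walsh sum to get $w_{\boldsymbol r}(\boldsymbol y)=(-1)^{\sum_{j:\,i_j\in J_\nu}r_j}w_{\boldsymbol r}(\boldsymbol y)$; your route is slightly longer but also gives the blockwise product form $w_{\boldsymbol r}=\prod_\nu w^{(\nu)}_{\boldsymbol r_{J_\nu}}$ as a by-product.
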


\begin{proof}
We first prove that $\mathcal R_I\subseteq\mathcal Q_I$. 
Order the computational basis as $|\boldsymbol{x},\boldsymbol{y}\rangle$, where $\boldsymbol{x}\in\mathbb Z_2^k$ gives the computational-basis bits on $I$ and $\boldsymbol{y}$ gives those on $I^c$. 
For the 1D order-3 cluster state, let $\boldsymbol{z}=(\boldsymbol{x},\boldsymbol{y})$ denote the full computational-basis string in physical-site order. The phase function entering Eq.~\eqref{eq:B3} is $h(\boldsymbol{x},\boldsymbol{y})=\prod_{q=1}^{n-2}(-1)^{z_qz_{q+1}z_{q+2}}$. For a fixed block $J_\nu$, every hyperedge $\{q,q+1,q+2\}$ contains selected qubits from at most one block, so we can flip all bits in $J_\nu$ without changing the phase function:
\begin{equation}
d_I(\boldsymbol{x},\boldsymbol{y})
=h(\boldsymbol{x},\boldsymbol{y})h(\boldsymbol{x}+\boldsymbol{1},\boldsymbol{y})
=\prod_{q=1}^{n-2}(-1)^{z_qz_{q+1}z_{q+2}+(z_q+\mathds{1}[q\in I])(z_{q+1}+\mathds{1}[q+1\in I])(z_{q+2}+\mathds{1}[q+2\in I])}
=d_I(\boldsymbol{x}',\boldsymbol{y}),
\qquad x_j' \coloneqq
\begin{cases}
x_j+1,&i_j\in J_\nu,\\
x_j,&i_j\notin J_\nu.
\end{cases}
\end{equation}

Moreover, $\boldsymbol{x}'$ differs from $\boldsymbol{x}$ only on the entries associated with $J_\nu$. Therefore, over $\mathbb Z_2$, we have
\begin{equation}
\begin{aligned}
\boldsymbol{r}\cdot\boldsymbol{x}'
&=
\sum_{j:\,i_j\notin J_\nu}r_jx_j'
+\sum_{j:\,i_j\in J_\nu}r_jx_j'
&=
\sum_{j:\,i_j\notin J_\nu}r_jx_j
+\sum_{j:\,i_j\in J_\nu}r_j(x_j+1)
&=
\sum_{j=1}^{k}r_jx_j
+\sum_{j:\,i_j\in J_\nu}r_j
&=
\boldsymbol{r}\cdot\boldsymbol{x}
+\sum_{j:\,i_j\in J_\nu}r_j
\pmod2.
\end{aligned}
\end{equation}
Thus, the block-wise flip changes the Walsh phase by the factor
$(-1)^{\sum_{j:\,i_j\in J_\nu}r_j}$:
\begin{equation}
\begin{aligned}
w_{\boldsymbol{r}}(\boldsymbol{y})
=2^{-k}\sum_{\boldsymbol{x}}
d_I(\boldsymbol{x},\boldsymbol{y})
(-1)^{\boldsymbol{r}\cdot\boldsymbol{x}}
=2^{-k}\sum_{\boldsymbol{x}}
d_I(\boldsymbol{x}',\boldsymbol{y})
(-1)^{\boldsymbol{r}\cdot\boldsymbol{x}'}
=(-1)^{\sum_{j:\,i_j\in J_\nu}r_j}
w_{\boldsymbol{r}}(\boldsymbol{y}).
\end{aligned}
\end{equation}
Thus, for any $\boldsymbol{y}$, $w_{\boldsymbol{r}}(\boldsymbol{y})$ can be nonzero only if $\sum_{j:\,i_j\in J_\nu}r_j=0\pmod2$. Applying this condition to every block proves $\mathcal R_I\subseteq\mathcal Q_I$.

For the cardinality of $\mathcal Q_I$, all but one of the bits $r_j$ in each block $J_\nu$ can be chosen freely. Thus, we have
\begin{equation}
\begin{aligned}
|\mathcal Q_I|
&=\prod_{\nu=1}^{c(I)}2^{|J_\nu|-1}
=2^{\sum_{\nu=1}^{c(I)}|J_\nu|-c(I)}
&=2^{k-c(I)}
=2^{a(I)},
\end{aligned}
\end{equation}
which proves Eq.~\eqref{eq:component-setting-bound}.
\end{proof}

\begin{corollary}
\label{cor:collision-adaptive-moment-sampling}
For the 1D order-3 cluster state, the estimator in Theorem~\ref{thm:moment_sampling} obeys the same simultaneous confidence guarantee, and its expected total cost satisfies
\begin{equation}
\mathbb E[N_{\rm tot}]
=Ml\sum_{k=1}^{K}\mathbb E_{|I|=k}[|\mathcal R_I|]
\leq Ml\min\!\left\{2^K-1,\frac{n}{2K}\left[\left(1+\frac{2K}{n}\right)^K-1\right]\right\}.
\label{eq:collision-adaptive-total-cost}
\end{equation}
\end{corollary}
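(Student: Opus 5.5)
The corollary follows by combining the two preceding lemmas with Theorem~\ref{thm:moment_sampling}. The statistical guarantee is essentially unchanged. The proof of Theorem~\ref{thm:moment_sampling} used only three facts: each $A_{I_j,\boldsymbol r,t}\in[-1,1]$, the estimator is unbiased over $\boldsymbol r\in\mathcal R_{I_j}$, and $\sum_{\boldsymbol r\in\mathcal R_I}g_{I,\boldsymbol r}^2=\mathbb I$. All three hold for any hypergraph state by Lemma~\ref{lem:setting_term_parseval}, and in particular for the 1D order-3 cluster state. Hence the Hoeffding step (over sampled subsets), the Bernstein step (over shots), and the union bound over $k\in[K]$ carry over verbatim. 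The same choices of $l$ and $M$ therefore give $\Pr[\bigcap_k|\tilde\mu_k-\mu_k|\le\epsilon]\ge1-\delta$. The identity $\mathbb E[N_{\rm tot}]=Ml\sum_{k}\mathbb E_{|I|=k}[|\mathcal R_I|]$ follows from linearity of expectation, because the $l$ subsets for each $k$ are sampled i.i.d.

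For the cost bound, I would bound $\mathbb E_{|I|=k}[|\mathcal R_I|]$ in two ways and take the minimum at the level of the sum; this is valid since both resulting bounds hold. The first bound is the general one, $|\mathcal R_I|\le2^{k-1}$, which gives $\sum_{k=1}^K2^{k-1}=2^K-1$. For the second bound, Lemma~\ref{lem:component-setting-bound} gives $|\mathcal R_I|\le2^{a(I)}$ pointwise. Lemma~\ref{lem:nearby-generator-collision} then gives $\mathbb E_{|I|=k}[|\mathcal R_I|]\le(1+2k/n)^{k-1}$.

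It remains to sum this second bound over $k$. Since $k\le K$, monotonicity gives $(1+2k/n)^{k-1}\le(1+2K/n)^{k-1}$. Summing the geometric series with ratio $x=1+2K/n$ yields
\[
\sum_{k=1}^K x^{k-1}=\frac{x^K-1}{x-1}=\frac{n}{2K}\left[\left(1+\frac{2K}{n}\right)^K-1\right].
\]
Taking the minimum of the two bounds gives Eq.~\eqref{eq:collision-adaptive-total-cost}.

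The only step needing care is this monotonicity replacement of $k$ by $K$ inside the base. It is what produces the closed geometric form, and it is valid because every term is positive and increasing in the base. No other step carries a real obstacle; the proof is assembling results already established.
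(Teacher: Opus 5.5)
Your proposal is correct and follows the paper's proof. Both chain $|\mathcal R_I|\le 2^{a(I)}$ with $\mathbb E_{|I|=k}[2^{a(I)}]\le(1+2k/n)^{k-1}$, replace $k$ by $K$ in the base, and sum the geometric series; you additionally spell out the $2^K-1$ branch and the carry-over of the confidence guarantee from Theorem~\ref{thm:moment_sampling}, which the paper leaves implicit.
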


\begin{proof}
The two preceding lemmas imply
$\mathbb E_{|I|=k}[|\mathcal R_I|]\leq\mathbb E_{|I|=k}[2^{a(I)}]\leq(1+2k/n)^{k-1}$.
Since $k\leq K$, summing the resulting geometric bound gives
\begin{equation}
\sum_{k=1}^{K}\mathbb E_{|I|=k}[|\mathcal R_I|]
\leq\sum_{k=1}^{K}\left(1+\frac{2K}{n}\right)^{k-1}
=\frac{n}{2K}\left[\left(1+\frac{2K}{n}\right)^K-1\right],
\end{equation}
which completes the proof.
\end{proof}

\section{Simulation of statistics for the 1D order-3 cluster state}
\label{sec:supp-local-depolarizing}
We now derive how to simulate the stabilizer moments for the 1D order-3 cluster state. Under local depolarizing noise, these moments are evaluated using a fixed-dimensional transfer matrix. Under global depolarizing noise, they admit direct expressions. 

For each sampled $I$, we use all local Pauli settings indexed by $\mathcal Q_I$, which cover both candidate targets considered below. Accordingly, the expected measurement cost in Fig.~\ref{fig:topology_certification} is evaluated using the upper bound in Eq.~\eqref{eq:collision-adaptive-total-cost}.

\subsection{Notation for the 1D order-3 cluster state}
\textbf{Experimentally prepared state.} We take the 1D order-3 cluster state as the ideal prepared state and consider its noisy realizations under local and global depolarizing noise. 
For $n$ qubits, let
\begin{equation}
E_3\coloneqq\bigl\{\{i,i+1,i+2\}:i=1,\ldots,n-2\bigr\}.
\label{eq:rank-adaptive-E3}
\end{equation}
The state is defined as
\begin{equation}
|C\rangle\coloneqq\prod_{e\in E_3}C_e|+\rangle^{\otimes n}=2^{-n/2}\sum_{\boldsymbol{v}\in\mathbb Z_2^n}(-1)^{f(\boldsymbol{v})}|\boldsymbol{v}\rangle, \qquad  f(\boldsymbol{v})\coloneqq\sum_{i=1}^{n-2}v_iv_{i+1}v_{i+2}\pmod 2.
\label{eq:rank-adaptive-C3}
\end{equation}
Let $\mathcal D\coloneqq\mathcal D_1^{\otimes n}$ denote the local depolarizing channel for $n$ qubits, where the local depolarizing channel for a single qubit is $\mathcal D_1(\rho)\coloneqq(1-p)\rho+\frac{p}{3}\sum_{P\in\{X,Y,Z\}}P\rho P$. 
Under local and global depolarizing noise, respectively, the prepared states are
\begin{equation}
\sigma_l\coloneqq\mathcal D\!\left(|C\rangle\langle C|\right),\qquad \sigma_g\coloneqq(1-\eta)|C\rangle\langle C|+\frac{\eta}{2^n}\mathbb I.
\label{eq:appD3}
\end{equation}

\textbf{Candidate target states.} %The local Pauli measurement outcomes collected for the connected target $|C\rangle$ can also be used to estimate the stabilizers of any hypergraph state with edge set $E\subseteq E_3$. Thus, a common statistics can be used to bound the fidelity of multiple target states. 
For clarity, we consider two targets: the connected chain $|C\rangle$ and, when $n$ is divisible by four, the disconnected chain $|C_\otimes\rangle\coloneqq\bigl(|C^{(n/4)}\rangle\bigr)^{\otimes4}$ on four consecutive blocks. Let $U$ and $U_\otimes$ denote their preparation unitaries, i.e., $|C\rangle = U|+\rangle^{\otimes n}$ and $|C_\otimes\rangle = U_\otimes|+\rangle^{\otimes n}$, and define
\begin{equation}
g_i\coloneqq UX_iU^\dagger,\qquad s_{\boldsymbol{x}}\coloneqq\prod_{i=1}^n g_i^{x_i},\qquad g_i^\otimes\coloneqq U_\otimes X_iU_\otimes^\dagger,\qquad s_{\boldsymbol{x}}^\otimes\coloneqq\prod_{i=1}^n(g_i^\otimes)^{x_i}.
\label{eq:candidate-stabilizing-operators}
\end{equation}
For $k\in[n]$, define $S^{(k)}\coloneqq L_k^{-1}\sum_{|\boldsymbol{x}|=k}s_{\boldsymbol{x}}$ and $S_\otimes^{(k)}\coloneqq L_k^{-1}\sum_{|\boldsymbol{x}|=k}s_{\boldsymbol{x}}^\otimes$.

\subsection{Statistics under local depolarizing noise}

Using $\operatorname{Tr}\!\left[O\mathcal D\!\left(|C\rangle\langle C|\right)\right]=\operatorname{Tr}\!\left[\mathcal D^\dagger(O)|C\rangle\langle C|\right]$, we evaluate the noisy expectation values in the Heisenberg picture.
Since every Hermitian operator $O$ admits a Pauli expansion, we first determine the action of $\mathcal D^\dagger$ on a Pauli string. 
Let $\boldsymbol{x},\boldsymbol{z}\in\mathbb{Z}_2^n$ and define the Pauli string $P_{\boldsymbol{x},\boldsymbol{z}} \coloneqq i^{\boldsymbol{x}\cdot\boldsymbol{z}}X^{\boldsymbol{x}}Z^{\boldsymbol{z}}=\bigotimes_{i=1}^nP_{x_i,z_i}$, where $P_{x_i,z_i}\coloneqq i^{x_i z_i}X_i^{x_i}Z_i^{z_i}\in\{\mathbb I_i,X_i,Y_i,Z_i\}$. In the phase $i^{\boldsymbol{x}\cdot\boldsymbol{z}}$, the dot product $\boldsymbol{x}\cdot\boldsymbol{z}=\sum_i x_i z_i$ is evaluated over the integers rather than over $\mathbb{Z}_2$. Since $\mathcal D$ acts independently on all qubits, we have
\begin{equation}
\begin{aligned}
\mathcal D^{\dagger}(P_{\boldsymbol{x},\boldsymbol{z}})
=\bigotimes_{i=1}^n\mathcal D_1^{\dagger}(P_{x_i,z_i})=\bigotimes_{i=1}^n\left[(1-p)P_{x_i,z_i}
+\frac{p}{3}\sum_{Q\in\{X,Y,Z\}}QP_{x_i,z_i}Q\right]=\lambda^{|\boldsymbol{x}\vee\boldsymbol{z}|}P_{\boldsymbol{x},\boldsymbol{z}},
\end{aligned}
\end{equation}
where $\lambda\coloneqq1-4p/3$. 
The last equality follows from $\sum_{Q\in\{X,Y,Z\}}QP_{x_i,z_i}Q=-P_{x_i,z_i}$ unless $P_{x_i,z_i}=\mathbb I$. 
Thus, $|\boldsymbol{x}\vee\boldsymbol{z}|=\sum_{i=1}^n\mathds{1}[(x_i,z_i)\neq(0,0)]$ counts the Pauli weight of $P_{\boldsymbol{x},\boldsymbol{z}}$. 
Next, we express the stabilizing operators in the Pauli basis.

\begin{lemma}
\label{lem:S-pauli-expansion}
For $\boldsymbol{x}\in\mathbb Z_2^n$, the stabilizers $s_{\boldsymbol{x}}=\sum_{\boldsymbol{z}}\alpha_{\boldsymbol{x}}(\boldsymbol{z})P_{\boldsymbol{x},\boldsymbol{z}}$ and $s_{\boldsymbol{x}}^\otimes=\sum_{\boldsymbol{z}}\alpha_{\boldsymbol{x}}^\otimes(\boldsymbol{z})P_{\boldsymbol{x},\boldsymbol{z}}$ have Pauli support only on $\{P_{\boldsymbol{x},\boldsymbol{z}}:\boldsymbol{z}\in\mathbb Z_2^n\}$. These coefficients are given by $\alpha_{\boldsymbol{x}}(\boldsymbol{z})=\langle C|P_{\boldsymbol{x},\boldsymbol{z}}|C\rangle$ and $\alpha_{\boldsymbol{x}}^\otimes(\boldsymbol{z})=\langle C_\otimes|P_{\boldsymbol{x},\boldsymbol{z}}|C_\otimes\rangle$. 
Consequently, their expectation values on the locally depolarized connected state are
\begin{equation}
\begin{aligned}
\langle s_{\boldsymbol{x}}\rangle_{\sigma_l}=\langle C|\mathcal D^\dagger(s_{\boldsymbol{x}})|C\rangle=\sum_{\boldsymbol{z}\in\mathbb Z_2^n}\lambda^{|\boldsymbol{x}\vee\boldsymbol{z}|}\left|\alpha_{\boldsymbol{x}}(\boldsymbol{z})\right|^2, \qquad
\langle s_{\boldsymbol{x}}^\otimes\rangle_{\sigma_l}=\langle C|\mathcal D^\dagger(s_{\boldsymbol{x}}^\otimes)|C\rangle=\sum_{\boldsymbol{z}\in\mathbb Z_2^n}\lambda^{|\boldsymbol{x}\vee\boldsymbol{z}|}\alpha_{\boldsymbol{x}}^\otimes(\boldsymbol{z})\alpha_{\boldsymbol{x}}(\boldsymbol{z}).
\end{aligned}
\label{eq:basic-noise-sum}
\end{equation}
\end{lemma}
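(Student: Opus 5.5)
Support of the Pauli expansion comes first. I take $s_{\boldsymbol{x}}=U X^{\boldsymbol{x}}U^\dagger$, with $U=\prod_{e\in E_3}C_e$ diagonal. This holds because the generators commute, so $\prod_i (UX_iU^\dagger)^{x_i}=UX^{\boldsymbol{x}}U^\dagger$. As in the proof of Lemma~\ref{lem:setting_term_parseval}, $s_{\boldsymbol{x}}=X^{\boldsymbol{x}}D_{\boldsymbol{x}}$ with $D_{\boldsymbol{x}}$ diagonal in the computational basis. Expanding $D_{\boldsymbol{x}}=\sum_{\boldsymbol{z}}c(\boldsymbol{z})Z^{\boldsymbol{z}}$ shows that $s_{\boldsymbol{x}}$ is a combination of $X^{\boldsymbol{x}}Z^{\boldsymbol{z}}\propto P_{\boldsymbol{x},\boldsymbol{z}}$, so only strings with $X$-part exactly $\boldsymbol{x}$ occur. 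The same argument with $U_\otimes$, which is also diagonal, applies to $s_{\boldsymbol{x}}^\otimes$.

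For the coefficients I use orthogonality: $\alpha_{\boldsymbol{x}}(\boldsymbol{z})=2^{-n}\operatorname{Tr}(P_{\boldsymbol{x},\boldsymbol{z}}s_{\boldsymbol{x}})$, since Pauli strings are Hermitian and trace-orthonormal up to $2^n$. I would then insert $s_{\boldsymbol{x}}=s_{\boldsymbol{x}}\cdot 2^n|C\rangle\langle C|$ only after trading the trace for the expectation value, and there is a cleaner route. Write $|C\rangle\langle C|=2^{-n}\sum_{\boldsymbol{y}}s_{\boldsymbol{y}}$, which is Eq.~\eqref{eq:fidelity_spinmoment}. Then
\begin{equation}
\langle C|P_{\boldsymbol{x},\boldsymbol{z}}|C\rangle=2^{-n}\sum_{\boldsymbol{y}}\operatorname{Tr}(P_{\boldsymbol{x},\boldsymbol{z}}s_{\boldsymbol{y}})=\sum_{\boldsymbol{y}}\alpha_{\boldsymbol{y}}(\boldsymbol{z}')\,\delta\text{-terms}.
\end{equation}
By the support statement, $\operatorname{Tr}(P_{\boldsymbol{x},\boldsymbol{z}}s_{\boldsymbol{y}})$ vanishes unless $\boldsymbol{y}=\boldsymbol{x}$, and for $\boldsymbol{y}=\boldsymbol{x}$ it equals $2^n\alpha_{\boldsymbol{x}}(\boldsymbol{z})$. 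This gives $\alpha_{\boldsymbol{x}}(\boldsymbol{z})=\langle C|P_{\boldsymbol{x},\boldsymbol{z}}|C\rangle$, which is real because $P_{\boldsymbol{x},\boldsymbol{z}}$ is Hermitian. The identical computation with $|C_\otimes\rangle\langle C_\otimes|=2^{-n}\sum_{\boldsymbol{y}}s^\otimes_{\boldsymbol{y}}$ gives $\alpha^\otimes_{\boldsymbol{x}}$. The only point needing care is the phase convention $i^{\boldsymbol{x}\cdot\boldsymbol{z}}$, and it is harmless: the argument uses only Hermiticity and orthonormality of the $P_{\boldsymbol{x},\boldsymbol{z}}$.

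For the noisy expectations I apply linearity together with the already-derived action $\mathcal D^\dagger(P_{\boldsymbol{x},\boldsymbol{z}})=\lambda^{|\boldsymbol{x}\vee\boldsymbol{z}|}P_{\boldsymbol{x},\boldsymbol{z}}$. This gives $\langle s_{\boldsymbol{x}}\rangle_{\sigma_l}=\sum_{\boldsymbol{z}}\alpha_{\boldsymbol{x}}(\boldsymbol{z})\lambda^{|\boldsymbol{x}\vee\boldsymbol{z}|}\langle C|P_{\boldsymbol{x},\boldsymbol{z}}|C\rangle$. The last factor is $\alpha_{\boldsymbol{x}}(\boldsymbol{z})$, real, so the summand is $\lambda^{|\boldsymbol{x}\vee\boldsymbol{z}|}|\alpha_{\boldsymbol{x}}(\boldsymbol{z})|^2$. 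For the disconnected target the prepared state is still $\sigma_l$, built from $|C\rangle$, so the factor becomes $\alpha^\otimes_{\boldsymbol{x}}(\boldsymbol{z})\alpha_{\boldsymbol{x}}(\boldsymbol{z})$.

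I expect no real obstacle. The main step is the support claim, which hinges on the diagonality of the hypergraph unitaries; the rest is routine.
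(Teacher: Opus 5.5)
Your proof is correct and follows the paper's argument: diagonality of the hypergraph unitary confines $s_{\boldsymbol{x}}$ and $s_{\boldsymbol{x}}^\otimes$ to the strings $P_{\boldsymbol{x},\boldsymbol{z}}$, the coefficients are identified against the stabilizer expansion $|C\rangle\langle C|=2^{-n}\sum_{\boldsymbol{y}}s_{\boldsymbol{y}}$, and $\mathcal D^\dagger$ is applied term by term. Your trace-orthogonality extraction is the same step as the paper's appeal to uniqueness of the Pauli expansion. In a write-up, drop the aside $s_{\boldsymbol{x}}=s_{\boldsymbol{x}}\cdot 2^n|C\rangle\langle C|$, which is false as an operator identity, and replace the placeholder ``$\delta$-terms'' display with the precise statement $\operatorname{Tr}(P_{\boldsymbol{x},\boldsymbol{z}}s_{\boldsymbol{y}})=2^n\delta_{\boldsymbol{x},\boldsymbol{y}}\alpha_{\boldsymbol{x}}(\boldsymbol{z})$.
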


\begin{proof}
We first focus on the coefficient for the disconnected target, and then the connected case follows identically. For a fixed $\boldsymbol{x}$, the diagonality of $U_\otimes$ implies that $s_{\boldsymbol{x}}^\otimes=U_\otimes X^{\boldsymbol{x}}U_\otimes^\dagger$ is supported only on computational basis matrix elements of the form $|\boldsymbol{u}+\boldsymbol{x}\rangle\langle\boldsymbol{u}|$, i.e., $\langle\boldsymbol{v}|s_{\boldsymbol{x}}^\otimes|\boldsymbol{u}\rangle=0$ whenever $\boldsymbol{v}\neq\boldsymbol{u}+\boldsymbol{x}$. 
Consequently, only Pauli strings $P_{\boldsymbol{x},\boldsymbol{z}}$ with the same fixed $\boldsymbol{x}$ can appear in its expansion.

To identify these coefficients, use the stabilizer expansion of the projector,
\begin{equation}
|C_\otimes\rangle\!\langle C_\otimes|=\frac{1}{2^n}\sum_{\boldsymbol{x}\in\mathbb Z_2^n}s_{\boldsymbol{x}}^\otimes=\frac{1}{2^n}\sum_{\boldsymbol{x},\boldsymbol{z}\in\mathbb Z_2^n}\alpha_{\boldsymbol{x}}^\otimes(\boldsymbol{z})P_{\boldsymbol{x},\boldsymbol{z}}.
\end{equation}
Meanwhile, its Pauli expansion is $|C_\otimes\rangle\!\langle C_\otimes|=2^{-n}\sum_{\boldsymbol{x},\boldsymbol{z}\in\mathbb Z_2^n}\langle C_\otimes|P_{\boldsymbol{x},\boldsymbol{z}}|C_\otimes\rangle P_{\boldsymbol{x},\boldsymbol{z}}$. Uniqueness of the Pauli expansion gives the stated expression for $\alpha_{\boldsymbol{x}}^\otimes(\boldsymbol{z})$, and the same argument gives the expression for $\alpha_{\boldsymbol{x}}(\boldsymbol{z})$. Applying $\mathcal D^\dagger$ term by term proves Eq.~\eqref{eq:basic-noise-sum}.
\end{proof}

\subsubsection{Statistics for the correct target: $\langle s_{\boldsymbol{x}}\rangle_{\sigma_l}$}

\begin{theorem}
\label{thm:moment-local-product}
The moment-generating polynomial can be expressed as a matrix product
\begin{equation}
\begin{aligned}
G(\xi)&\coloneqq\sum_{k=0}^{n}L_k\langle S^{(k)}\rangle_{\sigma_l}\xi^k = \frac{1}{4^n}\sum_{\boldsymbol{x},\boldsymbol{z},\boldsymbol{u},\boldsymbol{v}}\xi^{|\boldsymbol{x}|}\lambda^{|\boldsymbol{x}\vee\boldsymbol{z}|}(-1)^{\boldsymbol{z}\cdot(\boldsymbol{u}+\boldsymbol{v})+f(\boldsymbol{u})+f(\boldsymbol{u}+\boldsymbol{x})+f(\boldsymbol{v})+f(\boldsymbol{v}+\boldsymbol{x})} =\underbrace{\begin{bmatrix}\cdots&w_\xi(q_1)w_\xi(q_2)&\cdots\end{bmatrix}}_{\boldsymbol{p}_\xi^{\mathsf T}\;(1\times256)}\underbrace{T(\xi)^{n-2}}_{256\times256}
\underbrace{\begin{bmatrix}1\\\vdots\\1\end{bmatrix}}_{\boldsymbol{1}\;(256\times1)},
\end{aligned}
\label{eq:moment-transfer-matrix}
\end{equation}
where $q_i\coloneqq(x_i,z_i,u_i,v_i)\in\mathbb Z_2^4$ has $16$ possible configurations, so the pair $(q_i,q_{i+1})$ indexing a row or column has $256$ possible configurations. For each configuration, define the local weight $w_\xi(q_i)\coloneqq4^{-1}\xi^{x_i}\lambda^{x_i\vee z_i}(-1)^{z_i(u_i+v_i)}$. 
Then, $\boldsymbol{p}_\xi$ and $T(\xi)$ can be explicitly constructed. The nonzero entries are $\left[T(\xi)\right]_{(q_i,q_{i+1}),(q_{i+1},q_{i+2})}=w_\xi(q_{i+2})(-1)^{f_3(q_i,q_{i+1},q_{i+2})}$, where $f_3(q_i,q_{i+1},q_{i+2})\coloneqq u_iu_{i+1}u_{i+2}+(u_i+x_i)(u_{i+1}+x_{i+1})(u_{i+2}+x_{i+2})+v_iv_{i+1}v_{i+2}+(v_i+x_i)(v_{i+1}+x_{i+1})(v_{i+2}+x_{i+2})$. 
Hence, $G(\xi)$ can be evaluated by matrix product, rather than by explicitly summing over $2^{4n}$ terms. For $k=0,\ldots,n$, the stabilizer moments are given by the coefficients of $G(\xi)$ as
\begin{equation}
\langle S^{(k)}\rangle_{\sigma_l}
=\frac{[\xi^k]G(\xi)}{L_k}
=\frac{1}{(n+1)L_k}\sum_{j=0}^{n}G(\xi_j)
\exp\!\left(-\frac{2\pi\mathrm{i}jk}{n+1}\right),
\end{equation}
where $\xi_j\coloneqq\exp\!\left(2\pi\mathrm{i}j/(n+1)\right)$. All $\langle S^{(k)}\rangle_{\sigma_l}$ can be computed using Algorithm~\ref{alg:moment-transfer}. 
\end{theorem}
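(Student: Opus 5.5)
The proof has three steps: write the generating polynomial as an explicit sum over bit strings, factor the summand into local terms plus nearest-triple terms, and recover the moments by a discrete Fourier transform.

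\textbf{Step 1: the explicit sum.} Start from Lemma~\ref{lem:S-pauli-expansion}, which gives $\langle s_{\boldsymbol{x}}\rangle_{\sigma_l}=\sum_{\boldsymbol{z}}\lambda^{|\boldsymbol{x}\vee\boldsymbol{z}|}|\alpha_{\boldsymbol{x}}(\boldsymbol{z})|^2$. Summing over $|\boldsymbol{x}|=k$ gives $L_k\langle S^{(k)}\rangle_{\sigma_l}$, so $G(\xi)=\sum_{\boldsymbol{x},\boldsymbol{z}}\xi^{|\boldsymbol{x}|}\lambda^{|\boldsymbol{x}\vee\boldsymbol{z}|}|\alpha_{\boldsymbol{x}}(\boldsymbol{z})|^2$.

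To evaluate $\alpha_{\boldsymbol{x}}(\boldsymbol{z})=\langle C|P_{\boldsymbol{x},\boldsymbol{z}}|C\rangle$, use Eq.~\eqref{eq:rank-adaptive-C3} and $P_{\boldsymbol{x},\boldsymbol{z}}|\boldsymbol{u}\rangle=i^{\boldsymbol{x}\cdot\boldsymbol{z}}(-1)^{\boldsymbol{z}\cdot\boldsymbol{u}}|\boldsymbol{u}+\boldsymbol{x}\rangle$. This yields
\[
\alpha_{\boldsymbol{x}}(\boldsymbol{z})=2^{-n}\,i^{\boldsymbol{x}\cdot\boldsymbol{z}}\sum_{\boldsymbol{u}}(-1)^{\boldsymbol{z}\cdot\boldsymbol{u}+f(\boldsymbol{u})+f(\boldsymbol{u}+\boldsymbol{x})}.
\]
Taking the modulus squared and writing it as a product of this sum with its conjugate, with a second summation variable $\boldsymbol{v}$, removes the phase $i^{\boldsymbol{x}\cdot\boldsymbol{z}}$. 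Since the summand is real, this gives the quadruple-sum expression in Eq.~\eqref{eq:moment-transfer-matrix} with prefactor $4^{-n}$.

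\textbf{Step 2: factorization into a transfer matrix.} The summand splits into two kinds of factors:
\begin{itemize}
\item single-site factors $4^{-1}\xi^{x_i}\lambda^{x_i\vee z_i}(-1)^{z_i(u_i+v_i)}=w_\xi(q_i)$, one for each site;
\item nearest-triple factors $(-1)^{f_3(q_i,q_{i+1},q_{i+2})}$, one for each $i=1,\dots,n-2$, since $f$ is a sum of triple products over consecutive sites.
\end{itemize}
The sum over $q_1,\dots,q_n\in\mathbb Z_2^4$ of this product has the form of a one-dimensional chain with interactions of range $3$. The standard technique for such chains is to use as state the overlapping pair $(q_i,q_{i+1})$. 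The transfer entry from $(q_i,q_{i+1})$ to $(q_{i+1}',q_{i+2})$ is zero unless $q_{i+1}'=q_{i+1}$, and otherwise equals $w_\xi(q_{i+2})(-1)^{f_3}$.

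The initial row vector carries $w_\xi(q_1)w_\xi(q_2)$. Each of the $n-2$ applications of $T(\xi)$ attaches one new site weight together with one triple phase, and the final contraction with $\boldsymbol 1$ sums over the last pair. The point to check carefully is bookkeeping: every site weight and every triple phase must be counted exactly once. This is the main (if mild) obstacle, together with confirming that the phase cancellation in the modulus squared is exact.

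\textbf{Step 3: extracting the moments.} $G$ is a polynomial in $\xi$ of degree $n$. The identity $\sum_{j=0}^{n}\xi_j^{m-k}=(n+1)\delta_{mk}$ for $0\le m,k\le n$ isolates the coefficient $[\xi^k]G$ from the values $G(\xi_j)$ at the $(n+1)$-th roots of unity. Dividing by $L_k$ gives $\langle S^{(k)}\rangle_{\sigma_l}$. Each evaluation $G(\xi_j)$ costs $O(n)$ products of $256\times256$ matrices, which is the complexity claimed.
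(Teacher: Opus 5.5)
Your proposal is correct and follows the paper's proof step for step. Like the paper, you obtain the quadruple sum from Lemma~\ref{lem:S-pauli-expansion} and the explicit form of $\alpha_{\boldsymbol{x}}(\boldsymbol{z})$, factor it through $f(\boldsymbol{u})+f(\boldsymbol{u}+\boldsymbol{x})+f(\boldsymbol{v})+f(\boldsymbol{v}+\boldsymbol{x})\equiv\sum_{i}f_3(q_i,q_{i+1},q_{i+2})$ into a transfer matrix on overlapping pairs $(q_i,q_{i+1})$, and extract $[\xi^k]G$ by the discrete Fourier transform at the $(n+1)$-th roots of unity. One small correction: each evaluation of $G(\xi_j)$ needs $n-2$ vector--matrix products, not matrix--matrix products.
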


\begin{proof}
We first express $G(\xi)$ as a product of local factors and contract them using $T(\xi)$. We then recover its coefficients by a discrete Fourier transform. 
Firstly, the identity $P_{\boldsymbol{x},\boldsymbol{z}}|\boldsymbol{u}\rangle=i^{\boldsymbol{x}\cdot\boldsymbol{z}}(-1)^{\boldsymbol{z}\cdot\boldsymbol{u}}|\boldsymbol{u}+\boldsymbol{x}\rangle$ and Eq.~\eqref{eq:rank-adaptive-C3} give
\begin{equation}
\alpha_{\boldsymbol{x}}(\boldsymbol{z})=\langle C|P_{\boldsymbol{x},\boldsymbol{z}}|C\rangle = \frac{i^{\boldsymbol{x}\cdot\boldsymbol{z}}}{2^n}\sum_{\boldsymbol{u}\in\mathbb Z_2^n}(-1)^{f(\boldsymbol{u})+f(\boldsymbol{u}+\boldsymbol{x})+\boldsymbol{z}\cdot\boldsymbol{u}}.
\label{eq:general-pauli-coefficient}
\end{equation}
By the definition of $S^{(k)}$ and Lemma~\ref{lem:S-pauli-expansion}, we have
\begin{equation}
G(\xi)=\sum_{k=0}^{n}L_k\langle S^{(k)}\rangle_{\sigma_l}\xi^k=\sum_{\boldsymbol{x}\in\mathbb Z_2^n}\langle s_{\boldsymbol{x}}\rangle_{\sigma_l}\xi^{|\boldsymbol{x}|}=\sum_{\boldsymbol{x},\boldsymbol{z}\in\mathbb Z_2^n}\lambda^{|\boldsymbol{x}\vee\boldsymbol{z}|}\left|\alpha_{\boldsymbol{x}}(\boldsymbol{z})\right|^2\xi^{|\boldsymbol{x}|} =\frac{1}{4^n}\sum_{\boldsymbol{x},\boldsymbol{z},\boldsymbol{u},\boldsymbol{v}}\xi^{|\boldsymbol{x}|}\lambda^{|\boldsymbol{x}\vee\boldsymbol{z}|}(-1)^{f(\boldsymbol{u})+f(\boldsymbol{u}+\boldsymbol{x})+f(\boldsymbol{v})+f(\boldsymbol{v}+\boldsymbol{x}) + \boldsymbol{z}\cdot(\boldsymbol{u}+\boldsymbol{v})}.
\label{D11}
\end{equation}
By the definition of $f_3$,
\begin{equation}
\begin{aligned}
f(\boldsymbol{u})+f(\boldsymbol{u}+\boldsymbol{x})+f(\boldsymbol{v})+f(\boldsymbol{v}+\boldsymbol{x})\equiv\sum_{i=1}^{n-2}f_3(q_i,q_{i+1},q_{i+2})\pmod 2.
\end{aligned}
\end{equation}
Using this local decomposition, the sums in Eq.~\eqref{D11} can be arranged from left to right as
\begin{equation}
G(\xi)=\sum_{q_1,q_2\in\mathbb Z_2^4}w_\xi(q_1)w_\xi(q_2)\sum_{q_3\in\mathbb Z_2^4}w_\xi(q_3)(-1)^{f_3(q_1,q_2,q_3)}\cdots\sum_{q_n\in\mathbb Z_2^4}w_\xi(q_n)(-1)^{f_3(q_{n-2},q_{n-1},q_n)},
\label{eq:moment-nested-sums}
\end{equation}
where $w_\xi(q_i)\coloneqq4^{-1}\xi^{x_i}\lambda^{x_i\vee z_i}(-1)^{z_i(u_i+v_i)}$.
The first factor defines the boundary row vector $\boldsymbol{p}_\xi^{\mathsf T}$ with entries indexed by $(q_1,q_2)$. Applying these $n-2$ transfers and then summing the entries of the resulting row vector yields Eq.~\eqref{eq:moment-transfer-matrix}.

Since $G(\xi)$ can be efficiently calculated in a matrix-product form, we now recover the coefficients $[\xi^k]G(\xi)=L_k\langle S^{(k)}\rangle_{\sigma_l}$ from its values at distinct points.
Let $\xi_j\coloneqq\exp\!\left(\frac{2\pi \mathrm{i}j}{n+1}\right)$ for $j=0,\ldots,n$. The transfer-matrix contractions at these points give the discrete Fourier transform
\begin{equation}
G(\xi_j)=\sum_{k=0}^{n}L_k\langle S^{(k)}\rangle_{\sigma_l}\exp\!\left(\frac{2\pi \mathrm{i}jk}{n+1}\right),\qquad L_k\langle S^{(k)}\rangle_{\sigma_l}=\frac{1}{n+1}\sum_{j=0}^{n}G(\xi_j)\exp\!\left(-\frac{2\pi \mathrm{i}jk}{n+1}\right).
\label{eq:moment-fourier-reconstruction}
\end{equation}
In exact arithmetic, a single FFT recovers all $\langle S^{(k)}\rangle_{\sigma_l}$. 
\end{proof}

\begin{center}
\begin{minipage}{0.96\linewidth}
\refstepcounter{algorithm}\label{alg:moment-transfer}
\hrule height 0.8pt
\vspace{0.2em}
\noindent\textbf{Algorithm \thealgorithm}\quad Simulation of stabilizer moments under local depolarizing noise
\vspace{0.2em}
\hrule height 0.4pt
\vspace{0.2em}
\footnotesize
\begin{algorithmic}[1]
\REQUIRE Number of qubits $n\geq3$ and $\lambda = 1-4p/3$
\ENSURE Expectation values $\{\langle S^{(k)}\rangle_{\sigma_l}\}_{k=0}^{n}$ 
\STATE Construct the sparse transfer matrix in Eq.~\eqref{eq:moment-transfer-matrix}
\FOR{$j=0,\ldots,n$}
    \STATE $\xi_j\gets\exp[2\pi\mathrm{i}j/(n+1)]$, and $G(\xi_j)\gets\boldsymbol{p}_{\xi_j}^{\mathsf T}T(\xi_j)^{n-2}\boldsymbol{1}$
\ENDFOR
\STATE Compute $\langle S^{(k)}\rangle_{\sigma_l}  \gets[(n+1)L_k]^{-1}\sum_{j=0}^{n}G(\xi_j)\exp[-2\pi\mathrm{i}jk/(n+1)]$ for all $k=0,\ldots,n$ using one FFT
\STATE \textbf{return} $\{\langle S^{(k)}\rangle_{\sigma_l}\}_{k=0}^{n}$ 
\end{algorithmic}
\vspace{0.25em}
\hrule height 0.8pt
\end{minipage}
\end{center}

For large $n$, FFT reconstruction may lose numerical accuracy due to cancellation. One can instead retain the coefficients of each power of $\xi$ throughout the transfer-matrix contractions and divide the final coefficients by $L_k$ to obtain the moments directly.

\subsubsection{Statistics for the wrong target: $\langle s_{\boldsymbol{x}}^\otimes\rangle_{\sigma_l}$}

We take $n=100$ as an example. The prepared state remains $\sigma_l$ defined in Eq.~\eqref{eq:appD3}, obtained by applying local depolarizing noise to the connected chain $|C\rangle$, whereas the candidate target is the disconnected chain $|C_\otimes\rangle=\bigl(|C^{(25)}\rangle\bigr)^{\otimes4}$, i.e.,
\begin{equation}
\begin{aligned}
|C_\otimes\rangle=V|C\rangle=2^{-50}\sum_{\boldsymbol{y}\in\mathbb Z_2^{100}}(-1)^{f_\otimes(\boldsymbol{y})}|\boldsymbol{y}\rangle,\qquad
V\coloneqq\prod_{c\in\{25,50,75\}}C_{\{c-1,c,c+1\}}C_{\{c,c+1,c+2\}}.
\end{aligned}
\end{equation}
where its phase function is $f_\otimes(\boldsymbol{y})\coloneqq f(\boldsymbol{y})-\sum_{i\in\{24,25,49,50,74,75\}}y_iy_{i+1}y_{i+2}\pmod 2$. 
Specializing Eq.~\eqref{eq:general-pauli-coefficient} to $f_\otimes$, we obtain $\alpha_{\boldsymbol{x}}^\otimes(\boldsymbol{z})=2^{-100}i^{\boldsymbol{x}\cdot\boldsymbol{z}}\sum_{\boldsymbol{u}\in\mathbb Z_2^{100}}(-1)^{f_\otimes(\boldsymbol{u})+f_\otimes(\boldsymbol{u}+\boldsymbol{x})+\boldsymbol{z}\cdot\boldsymbol{u}}$. Applying Lemma~\ref{lem:S-pauli-expansion} then yields
\begin{equation}
\begin{aligned}
G_\otimes(\xi)&\coloneqq\sum_{k=0}^{100}L_k\langle S_\otimes^{(k)}\rangle_{\sigma_l}\xi^k=\sum_{\boldsymbol{x},\boldsymbol{z}\in\mathbb Z_2^{100}}\lambda^{|\boldsymbol{x}\vee\boldsymbol{z}|}\alpha_{\boldsymbol{x}}^\otimes(\boldsymbol{z})\alpha_{\boldsymbol{x}}(\boldsymbol{z})\xi^{|\boldsymbol{x}|} =\frac{1}{4^{100}}\sum_{\boldsymbol{x},\boldsymbol{z},\boldsymbol{u},\boldsymbol{v}\in\mathbb Z_2^{100}}\xi^{|\boldsymbol{x}|}\lambda^{|\boldsymbol{x}\vee\boldsymbol{z}|}(-1)^{\boldsymbol{z}\cdot(\boldsymbol{u}+\boldsymbol{v})+f_\otimes(\boldsymbol{u})+f_\otimes(\boldsymbol{u}+\boldsymbol{x})+f(\boldsymbol{v})+f(\boldsymbol{v}+\boldsymbol{x})},
\end{aligned}
\label{eq:cross-moment-generating-polynomial}
\end{equation}
Here, both $\alpha_{\boldsymbol{x}}^\otimes(\boldsymbol{z})$ and $\alpha_{\boldsymbol{x}}(\boldsymbol{z})$ are real by definition, so every nonzero contribution has $\boldsymbol{x}\cdot\boldsymbol{z}$ even and the product of the two phase factors $i^{\boldsymbol{x}\cdot\boldsymbol{z}}$ is $+1$.

Similarly, we can factor the summand into local terms as in Eq.~\eqref{eq:moment-nested-sums}
\begin{equation}
G_\otimes(\xi)=\boldsymbol{p}_\xi^{\mathsf T}T(\xi)^{23}T_\otimes(\xi)^2T(\xi)^{23}T_\otimes(\xi)^2T(\xi)^{23}T_\otimes(\xi)^2T(\xi)^{23}\boldsymbol{1},
\label{eq:cross-transfer-contraction}
\end{equation}
where the nonzero entries of $T_\otimes(\xi)$ are
\begin{equation}
\left[T_\otimes(\xi)\right]_{(q_i,q_{i+1}),(q_{i+1},q_{i+2})}\coloneqq w_\xi(q_{i+2})(-1)^{f_3(q_i,q_{i+1},q_{i+2})-\left[u_iu_{i+1}u_{i+2}+(u_i+x_i)(u_{i+1}+x_{i+1})(u_{i+2}+x_{i+2})\right]}.
\end{equation}
The transfer $T_\otimes(\xi)$ occurs at $i=24,25,49,50,74,75$, while $T(\xi)$ occurs at every other position.
Similarly, a single FFT recovers all $L_k\langle S_\otimes^{(k)}\rangle_{\sigma_l}$ from the $101$ values $G_\otimes(\xi_j)$, $j=0,\ldots,100$.
For numerical evaluation, the polynomial coefficients can instead be propagated through the transfer sequence in Eq.~\eqref{eq:cross-transfer-contraction}.

\subsection{Statistics under global depolarizing noise}

For the connected target $|C\rangle$, global depolarizing noise directly gives $\langle S^{(k)}\rangle_{\sigma_g}=1-\eta$ for all $k \geq 1$. 

We now consider the disconnected candidate $|C_\otimes\rangle$. Under global depolarizing noise, we have $\langle S_\otimes^{(k)}\rangle_{\sigma_g}=(1-\eta)\langle C|S_\otimes^{(k)}|C\rangle$ for $k\geq 1$. 
We first evaluate $\langle C|s_{\boldsymbol{x}}^\otimes|C\rangle$. Since $s_{\boldsymbol{x}}^\otimes=U_\otimes X^{\boldsymbol{x}}U_\otimes^\dagger$, we obtain
\begin{equation}
\langle C|s_{\boldsymbol{x}}^\otimes|C\rangle = \langle C_\otimes | V^\dagger s_{\boldsymbol{x}}^\otimes V|C_\otimes \rangle = \langle + |^{\otimes 100} U_\otimes^\dagger V^\dagger s_{\boldsymbol{x}}^\otimes V U_\otimes| + \rangle^{\otimes 100} = \langle+|^{\otimes100}V X^{\boldsymbol{x}}V|+\rangle^{\otimes100}.
\label{eq:global-cross-stabilizer-overlap}
\end{equation}
Recall that $V = \prod_{c\in\{25,50,75\}}C_{\{c-1,c,c+1\}}C_{\{c,c+1,c+2\}}$. 
For each cut, relabel its four sites $\{c-1,c,c+1,c+2\}$ as $1,2,3,4$ and write $V_c=C_{\{1,2,3\}}C_{\{2,3,4\}}$. Let $\boldsymbol{x}_c\in\mathbb Z_2^4$ denote the stabilizer label on these sites. Since the three cut neighborhoods are disjoint, the remaining $(100 - 3\times 4)$ qubits do not affect Eq.~\eqref{eq:global-cross-stabilizer-overlap}. For fixed $\boldsymbol{x}_{25},\boldsymbol{x}_{50},\boldsymbol{x}_{75}$, there are $\binom{88}{k-\sum_{c\in\{25,50,75\}}|\boldsymbol{x}_c|}$ assignments of $\boldsymbol{x}$ of weight $k$. 
Hence,
\begin{equation}
\begin{aligned}
\langle S_\otimes^{(k)}\rangle_{\sigma_g}
&=\frac{1-\eta}{\binom{100}{k}}
\sum_{(\boldsymbol{x}_{25},\boldsymbol{x}_{50},\boldsymbol{x}_{75})\in(\mathbb Z_2^4)^3}
\binom{88}{k-\sum_{c\in\{25,50,75\}}|\boldsymbol{x}_c|}
\prod_{c\in\{25,50,75\}}\langle+|^{\otimes4}V_cX^{\boldsymbol{x}_c}V_c|+\rangle^{\otimes4},\qquad k\geq1,
\end{aligned}
\label{eq:disconnected-target-global-statistics}
\end{equation}
where $\binom{88}{r}=0$ for $r\notin\{0,\ldots,88\}$. Here, $\langle+|^{\otimes4}V_cX^{\boldsymbol{x}_c}V_c|+\rangle^{\otimes4} $ equals $1$ for $\boldsymbol{x}_c\in\{0000,1001\}$ and $1/2$ otherwise.

\subsection{Direct fidelity estimation for the 1D order-3 cluster state}
\label{app:dfe-1d-order3}

In the standard DFE~\cite{prl_2011_directfidelity}, the characteristic function is given by $\chi_{|C\rangle\langle C|}(\boldsymbol{x},\boldsymbol{z}) \coloneqq \langle C | P_{\boldsymbol{x},\boldsymbol{z}} |C\rangle /\sqrt{2^n} = \alpha_{\boldsymbol{x}}(\boldsymbol{z})/\sqrt{2^n}$. Let $\epsilon_{\rm DFE}$ and $\delta_{\rm DFE}$ denote the accuracy and failure parameters, for $\ell=\lceil1/(\epsilon_{\rm DFE}^2\delta_{\rm DFE})\rceil$ sampled Pauli observables $P_{\boldsymbol{x},\boldsymbol{z}}$, each sampled observable is estimated using $m_{\boldsymbol{x},\boldsymbol{z}} =\left\lceil\frac{2\ln(2/\delta_{\rm DFE})}{\ell\epsilon_{\rm DFE}^2|\alpha_{\boldsymbol{x}}(\boldsymbol{z})|^2}\right\rceil$ copies. After that, writing the resulting estimator as $\widetilde Y$, the fidelity lies in the range $[\widetilde Y-2\epsilon_{\rm DFE},\widetilde Y+2\epsilon_{\rm DFE}]$ with probability $\geq 1-2\delta_{\rm DFE}$. We set $\epsilon_{\rm DFE}=\frac{\upDelta}{2}$ and $\delta_{\rm DFE}=\frac{\delta}{2}$. Then, the resulting interval $[\widetilde Y-\upDelta,\widetilde Y+\upDelta]$ satisfies $F_{\max}-F_{\min}=2\upDelta$ with probability at least $1-\delta$. 
To achieve this accuracy, the DFE protocol requires, on average,
\begin{equation}
\mathbb{E}[N_{\rm DFE}]
=\ell \sum_{\boldsymbol{x},\boldsymbol{z}} [\chi_{|C\rangle\langle C|}(\boldsymbol{x},\boldsymbol{z})]^2 m_{\boldsymbol{x},\boldsymbol{z}}
=\ell\sum_{\alpha_{\boldsymbol{x}}(\boldsymbol{z})\neq0}
\frac{|\alpha_{\boldsymbol{x}}(\boldsymbol{z})|^2}{2^n}
m_{\boldsymbol{x},\boldsymbol{z}}\geq
\frac{2\widetilde R_n}{\epsilon_{\rm DFE}^2}
\ln\frac{2}{\delta_{\rm DFE}}
=\frac{8\widetilde R_n}{\upDelta^2}\ln\frac{4}{\delta}, \qquad \widetilde R_n\coloneqq R_n/2^n. 
\end{equation}
Here, $R_n\coloneqq\sum_{\boldsymbol{x}\in\mathbb Z_2^n}\left|\left\{\boldsymbol{z}\in\mathbb Z_2^n:\alpha_{\boldsymbol{x}}(\boldsymbol{z})\neq0\right\}\right|$ denotes the total number of nonzero Pauli coefficients.

In the following, we show how to efficiently compute $\widetilde R_n$ for the 1D order-3 cluster state when $n$ is large. 

\begin{lemma}
\label{lem:dfe-pauli-coefficient}
For $\boldsymbol{x},\boldsymbol{z}\in\mathbb Z_2^n$, define $d(\boldsymbol{x})\in\mathbb Z_2^n$ and the matrices $B(\boldsymbol{x}),C(\boldsymbol{x})\in\mathbb Z_2^{n\times n}$ by
\begin{equation}
d_i(\boldsymbol{x})\coloneqq x_{i-2}x_{i-1}+x_{i-1}x_{i+1}+x_{i+1}x_{i+2},\qquad B_{ij}(\boldsymbol{x})\coloneqq\begin{cases}\displaystyle\sum_{k:\,\{i,j,k\}\in E_3}x_k,&i<j,\\0,&i\geq j,\end{cases}\qquad C(\boldsymbol{x})\coloneqq B(\boldsymbol{x})+B(\boldsymbol{x})^{\mathsf T}.
\label{eq:dfe-quadratic-matrices}
\end{equation}
Here, $x_j=0$ for $j\notin[n]$, and all arithmetic in $d(\boldsymbol{x})$, $B(\boldsymbol{x})$, and $C(\boldsymbol{x})$ is over $\mathbb Z_2$.
Then, we have
\begin{equation}
\alpha_{\boldsymbol{x}}(\boldsymbol{z})=\langle C|P_{\boldsymbol{x},\boldsymbol{z}}|C\rangle
=\frac{i^{\boldsymbol{x}\cdot\boldsymbol{z}}}{2^n}
\sum_{\boldsymbol{u}\in\mathbb Z_2^n}
(-1)^{\boldsymbol{u}^{\mathsf T}B(\boldsymbol{x})\boldsymbol{u}+[d(\boldsymbol{x})+\boldsymbol{z}]\cdot\boldsymbol{u} + f(\boldsymbol{x})}.
\label{eq:dfe-pauli-coefficient}
\end{equation}
\end{lemma}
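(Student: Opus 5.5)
Starting from Eq.~\eqref{eq:general-pauli-coefficient}, which already gives $\alpha_{\boldsymbol{x}}(\boldsymbol{z})=2^{-n}i^{\boldsymbol{x}\cdot\boldsymbol{z}}\sum_{\boldsymbol{u}}(-1)^{f(\boldsymbol{u})+f(\boldsymbol{u}+\boldsymbol{x})+\boldsymbol{z}\cdot\boldsymbol{u}}$, the whole lemma reduces to a polynomial identity over $\mathbb Z_2$. Namely, for fixed $\boldsymbol{x}$ we want
\begin{equation}
f(\boldsymbol{u})+f(\boldsymbol{u}+\boldsymbol{x})\equiv \boldsymbol{u}^{\mathsf T}B(\boldsymbol{x})\boldsymbol{u}+d(\boldsymbol{x})\cdot\boldsymbol{u}+f(\boldsymbol{x})\pmod 2 .
\end{equation}
We prove this hyperedge by hyperedge and then sum.

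For a single hyperedge $e=\{a,b,c\}=\{q,q+1,q+2\}$, expand the cubic
\begin{equation}
(u_a+x_a)(u_b+x_b)(u_c+x_c)=u_au_bu_c+(x_au_bu_c+x_bu_au_c+x_cu_au_b)+(x_ax_bu_c+x_ax_cu_b+x_bx_cu_a)+x_ax_bx_c .
\end{equation}
Adding $u_au_bu_c$ mod 2 cancels the cubic term. This leaves three pieces. The quadratic piece is, for each pair $\{i,j\}\subset e$, the term $u_iu_j$ multiplied by $x$ of the third vertex. The linear piece is, for each vertex $i\in e$, the term $u_i$ multiplied by the product of $x$ on the other two vertices. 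The constant piece is $x_ax_bx_c$.

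Next we sum over $q=1,\ldots,n-2$ and regroup each piece.

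\textbf{Constants.} These sum to $f(\boldsymbol{x})$ directly.

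\textbf{Quadratic terms.} Collect the coefficient of $u_iu_j$ for each $i<j$. This coefficient is $\sum_{k:\{i,j,k\}\in E_3}x_k$, which is precisely $B_{ij}(\boldsymbol{x})$. Since $B$ is strictly upper triangular, $\boldsymbol{u}^{\mathsf T}B\boldsymbol{u}=\sum_{i<j}B_{ij}u_iu_j$, with no diagonal issues from $u_i^2=u_i$. (The symmetric $C(\boldsymbol{x})$ is not needed here; presumably it serves later for rank computations.)

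\textbf{Linear terms.} The coefficient of $u_i$ is $\sum_{e\ni i}\prod_{j\in e\setminus\{i\}}x_j$. Vertex $i$ lies in the three hyperedges $\{i-2,i-1,i\}$, $\{i-1,i,i+1\}$ and $\{i,i+1,i+2\}$, when these exist. They contribute $x_{i-2}x_{i-1}$, $x_{i-1}x_{i+1}$ and $x_{i+1}x_{i+2}$ respectively, which is $d_i(\boldsymbol{x})$.

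The only point needing care is the boundary. The convention $x_j=0$ for $j\notin[n]$ must exactly kill the contributions of hyperedges absent from $E_3$ near sites $1,2,n-1,n$. Checking this takes a line: for example, at $i=1$ only $x_2x_3$ should survive, and indeed $x_{-1}x_0$ and $x_0x_2$ both vanish.

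Finally, substitute the identity into the exponent and pull out the $\boldsymbol{u}$-independent factor $(-1)^{f(\boldsymbol{x})}$, which is kept inside the exponent as written. This yields Eq.~\eqref{eq:dfe-pauli-coefficient}. No step is a genuine obstacle; the main risk is bookkeeping, so we will verify the boundary terms explicitly.
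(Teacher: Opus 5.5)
Your proposal is correct and follows the paper's proof essentially verbatim. Both reduce, via Eq.~\eqref{eq:general-pauli-coefficient}, to the mod-2 identity $f(\boldsymbol{u})+f(\boldsymbol{u}+\boldsymbol{x})\equiv\boldsymbol{u}^{\mathsf T}B(\boldsymbol{x})\boldsymbol{u}+d(\boldsymbol{x})\cdot\boldsymbol{u}+f(\boldsymbol{x})$, obtained by expanding each hyperedge cubic and regrouping quadratic, linear and constant terms; your explicit check that the convention $x_j=0$ for $j\notin[n]$ removes the absent boundary hyperedges is a detail the paper leaves implicit, and it is correct.
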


\begin{proof}
The lemma follows by substituting the identity
\begin{equation}
\begin{aligned}
f(\boldsymbol{u})+f(\boldsymbol{u}+\boldsymbol{x})
&\equiv\sum_{i=1}^{n-2}\bigl[(x_i u_{i+1}u_{i+2}+x_{i+1}u_i u_{i+2}+x_{i+2}u_i u_{i+1}) + (u_ix_{i+1}x_{i+2}+x_iu_{i+1}x_{i+2}+x_ix_{i+1}u_{i+2}) + x_ix_{i+1}x_{i+2}\bigr]\\
&=\boldsymbol{u}^{\mathsf T}B(\boldsymbol{x})\boldsymbol{u}+d(\boldsymbol{x})\cdot\boldsymbol{u}+f(\boldsymbol{x})\pmod 2.
\end{aligned}
\end{equation}
into Eq.~\eqref{eq:general-pauli-coefficient}.
\end{proof}

\begin{lemma}
\label{thm:dfe-rank-sum}
Let $r(\boldsymbol{x})\coloneqq\operatorname{rank}_{\mathbb Z_2}C(\boldsymbol{x})$. Then
\begin{equation}
R_n\coloneqq\sum_{\boldsymbol{x}\in\mathbb Z_2^n}\left|\left\{\boldsymbol{z}\in\mathbb Z_2^n:\alpha_{\boldsymbol{x}}(\boldsymbol{z})\neq0\right\}\right|=\sum_{\boldsymbol{x}\in\mathbb Z_2^n}2^{r(\boldsymbol{x})}.
\label{eq:Rn-rank-sum}
\end{equation}
\end{lemma}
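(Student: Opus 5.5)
For each fixed $\boldsymbol{x}$ we will count the $\boldsymbol{z}$ with $\alpha_{\boldsymbol{x}}(\boldsymbol{z})\neq0$ and show that this number is $2^{r(\boldsymbol{x})}$. The claim then follows by summing over $\boldsymbol{x}$. By Lemma~\ref{lem:dfe-pauli-coefficient}, up to the unimodular prefactor $i^{\boldsymbol{x}\cdot\boldsymbol{z}}(-1)^{f(\boldsymbol{x})}2^{-n}$, the coefficient is the quadratic exponential sum
\begin{equation*}
\Sigma(\boldsymbol{w})\coloneqq\sum_{\boldsymbol{u}\in\mathbb Z_2^n}(-1)^{Q(\boldsymbol{u})+\boldsymbol{w}\cdot\boldsymbol{u}},\qquad Q(\boldsymbol{u})\coloneqq\boldsymbol{u}^{\mathsf T}B(\boldsymbol{x})\boldsymbol{u},\qquad \boldsymbol{w}\coloneqq d(\boldsymbol{x})+\boldsymbol{z}.
\end{equation*}
Since $\boldsymbol{z}\mapsto\boldsymbol{w}$ is a bijection of $\mathbb Z_2^n$, it suffices to count the $\boldsymbol{w}$ with $\Sigma(\boldsymbol{w})\neq0$.

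\textbf{Step 1: the polar form.} Define $\beta(\boldsymbol{u},\boldsymbol{v})\coloneqq Q(\boldsymbol{u}+\boldsymbol{v})+Q(\boldsymbol{u})+Q(\boldsymbol{v})=\boldsymbol{u}^{\mathsf T}(B+B^{\mathsf T})\boldsymbol{v}=\boldsymbol{u}^{\mathsf T}C(\boldsymbol{x})\boldsymbol{v}$. This is an alternating bilinear form, because $C$ is symmetric with zero diagonal. Its radical is $\mathrm{Rad}=\ker C(\boldsymbol{x})$, of dimension $n-r$ with $r=r(\boldsymbol{x})$ even. Restricted to $\mathrm{Rad}$, the map $Q$ is additive, since the cross term $\beta$ vanishes there, and hence linear over $\mathbb Z_2$. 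So $Q(\boldsymbol{u})+\boldsymbol{w}\cdot\boldsymbol{u}$ is a linear functional on $\mathrm{Rad}$.

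\textbf{Step 2: the squared modulus.} We compute
\begin{equation*}
|\Sigma(\boldsymbol{w})|^2=\sum_{\boldsymbol{u},\boldsymbol{v}}(-1)^{Q(\boldsymbol{u})+Q(\boldsymbol{u}+\boldsymbol{v})+\boldsymbol{w}\cdot\boldsymbol{v}}=\sum_{\boldsymbol{v}}(-1)^{Q(\boldsymbol{v})+\boldsymbol{w}\cdot\boldsymbol{v}}\sum_{\boldsymbol{u}}(-1)^{\beta(\boldsymbol{u},\boldsymbol{v})}.
\end{equation*}
The inner sum equals $2^n\mathds{1}[\boldsymbol{v}\in\mathrm{Rad}]$. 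Therefore $|\Sigma(\boldsymbol{w})|^2=2^n\sum_{\boldsymbol{v}\in\mathrm{Rad}}(-1)^{Q(\boldsymbol{v})+\boldsymbol{w}\cdot\boldsymbol{v}}$. By Step 1 the exponent is a linear functional on $\mathrm{Rad}$, so the sum is $2^{n-r}$ if that functional vanishes identically and $0$ otherwise. Consequently $\Sigma(\boldsymbol{w})\neq0$ if and only if $\boldsymbol{w}\cdot\boldsymbol{v}=Q(\boldsymbol{v})$ for all $\boldsymbol{v}\in\mathrm{Rad}$.

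\textbf{Step 3: counting.} The linear functional $Q|_{\mathrm{Rad}}$ extends to $\mathbb Z_2^n$, so it equals $\boldsymbol{w}_0\cdot\boldsymbol{v}$ for some $\boldsymbol{w}_0$. The admissible $\boldsymbol{w}$ therefore form the coset $\boldsymbol{w}_0+\mathrm{Rad}^{\perp}$. Because the standard dot product is nondegenerate, $\dim\mathrm{Rad}^{\perp}=n-(n-r)=r$, and the coset has $2^r$ elements. Summing $2^{r(\boldsymbol{x})}$ over $\boldsymbol{x}$ gives Eq.~\eqref{eq:Rn-rank-sum}.

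The main subtlety is Step 1, since this is characteristic $2$ and $Q$ is not itself determined by $C$. One must check that $Q|_{\mathrm{Rad}}$ is linear rather than merely quadratic. This is exactly what lets the sum over $\mathrm{Rad}$ collapse to an indicator, and it is why the count depends only on the rank of $C(\boldsymbol{x})$ and not on the diagonal data that $B$ loses under symmetrization.
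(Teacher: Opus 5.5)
Your proposal is correct and follows essentially the same route as the paper: you square the exponential sum and shift $\boldsymbol{u}'=\boldsymbol{u}+\boldsymbol{y}$ so that the sum over $\boldsymbol{u}$ projects onto $\ker C(\boldsymbol{x})$. You then use that the quadratic phase is linear on that kernel and count the admissible $\boldsymbol{z}$ as an affine subspace of dimension $r(\boldsymbol{x})$. The differences are cosmetic: you absorb $d(\boldsymbol{x})$ into $\boldsymbol{w}$ and write the count as a coset of $\mathrm{Rad}^{\perp}$, whereas the paper keeps $d(\boldsymbol{x})$ inside $\varphi_{\boldsymbol{x}}$ and solves the full-row-rank linear system over a kernel basis.
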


\begin{proof}
We first characterize the support of the Pauli coefficients for a fixed $\boldsymbol{x}$. Equation~\eqref{eq:dfe-pauli-coefficient} gives
\begin{equation}
\begin{aligned}
|\alpha_{\boldsymbol{x}}(\boldsymbol{z})|^2 \propto \sum_{\boldsymbol{u},\boldsymbol{u}'\in\mathbb Z_2^n}(-1)^{\boldsymbol{u}^{\mathsf T}B(\boldsymbol{x})\boldsymbol{u}+\boldsymbol{u}'^{\mathsf T}B(\boldsymbol{x})\boldsymbol{u}'+[d(\boldsymbol{x})+\boldsymbol{z}]\cdot(\boldsymbol{u}+\boldsymbol{u}')+2f(\boldsymbol{x})} \underset{\boldsymbol{u}'=\boldsymbol{u}+\boldsymbol{y}}{=} \sum_{\boldsymbol{u},\boldsymbol{y}\in\mathbb Z_2^n}
(-1)^{\boldsymbol{u}^{\mathsf T}C(\boldsymbol{x})\boldsymbol{y}
+\boldsymbol{y}^{\mathsf T}B(\boldsymbol{x})\boldsymbol{y}
+[d(\boldsymbol{x})+\boldsymbol{z}]\cdot\boldsymbol{y}} = 2^n\sum_{\boldsymbol{y}\in\ker C(\boldsymbol{x})}
(-1)^{\boldsymbol{y}^{\mathsf T}B(\boldsymbol{x})\boldsymbol{y}
+[d(\boldsymbol{x})+\boldsymbol{z}]\cdot\boldsymbol{y}}.
\end{aligned}
\label{eq:dfe-walsh-square}
\end{equation}
The sum over $\boldsymbol{u}$ is $2^n$ when $C(\boldsymbol{x})\boldsymbol{y}=\boldsymbol{0}$ and vanishes otherwise. We then define the exponent function 
$\varphi_{\boldsymbol{x}}(\boldsymbol{y})\coloneqq\boldsymbol{y}^{\mathsf T}B(\boldsymbol{x})\boldsymbol{y}+d(\boldsymbol{x})\cdot\boldsymbol{y}$. It is linear on $\ker C(\boldsymbol{x})$ in the sense that for any $\boldsymbol{y},\boldsymbol{y}'\in\ker C(\boldsymbol{x})$, we have 
\begin{equation}
\varphi_{\boldsymbol{x}}(\boldsymbol{y}+\boldsymbol{y}')
=\varphi_{\boldsymbol{x}}(\boldsymbol{y})
+\varphi_{\boldsymbol{x}}(\boldsymbol{y}')
+\boldsymbol{y}^{\mathsf T}C(\boldsymbol{x})\boldsymbol{y}'
=\varphi_{\boldsymbol{x}}(\boldsymbol{y})
+\varphi_{\boldsymbol{x}}(\boldsymbol{y}').
\end{equation}
Hence $\varphi_{\boldsymbol{x}}(\boldsymbol{y})+\boldsymbol{z}\cdot\boldsymbol{y}$ is also linear on the kernel. 
If there exists $\boldsymbol{y}_0\in\ker C(\boldsymbol{x})$ such that $\boldsymbol{z}\cdot\boldsymbol{y}_0\neq\varphi_{\boldsymbol{x}}(\boldsymbol{y}_0)$, then $\varphi_{\boldsymbol{x}}(\boldsymbol{y}_0)+\boldsymbol{z}\cdot\boldsymbol{y}_0=1$ in $\mathbb Z_2$, so
\begin{equation}
|\alpha_{\boldsymbol{x}}(\boldsymbol{z})|^2 \propto  \sum_{\boldsymbol{y}\in\ker C(\boldsymbol{x})}(-1)^{\varphi_{\boldsymbol{x}}(\boldsymbol{y})+\boldsymbol{z}\cdot\boldsymbol{y}}=\sum_{\boldsymbol{y}\in\ker C(\boldsymbol{x})}(-1)^{\varphi_{\boldsymbol{x}}(\boldsymbol{y}+\boldsymbol{y}_0)+\boldsymbol{z}\cdot(\boldsymbol{y}+\boldsymbol{y}_0)}=-\sum_{\boldsymbol{y}\in\ker C(\boldsymbol{x})}(-1)^{\varphi_{\boldsymbol{x}}(\boldsymbol{y})+\boldsymbol{z}\cdot\boldsymbol{y}}=0.
\end{equation}
If no such $\boldsymbol{y}_0$ exists, every summand $(-1)^{\varphi_{\boldsymbol{x}}(\boldsymbol{y})+\boldsymbol{z}\cdot\boldsymbol{y}}$ equals one. Therefore, Eq.~\eqref{eq:dfe-walsh-square} gives
\begin{equation}
\left|\alpha_{\boldsymbol{x}}(\boldsymbol{z})\right|^2\propto\sum_{\boldsymbol{y}\in\ker C(\boldsymbol{x})}
(-1)^{\varphi_{\boldsymbol{x}}(\boldsymbol{y})+\boldsymbol{z}\cdot\boldsymbol{y}}
=\begin{cases}
|\ker C(\boldsymbol{x})|,
&\boldsymbol{z}\cdot\boldsymbol{y}=\varphi_{\boldsymbol{x}}(\boldsymbol{y})
\quad\text{for all }\boldsymbol{y}\in\ker C(\boldsymbol{x}),\\
0,&\text{otherwise}.
\end{cases}
\label{eq:dfe-kernel-sum}
\end{equation}
Therefore, $\alpha_{\boldsymbol{x}}(\boldsymbol{z})$ is nonzero exactly when $\boldsymbol{z}\cdot\boldsymbol{y}=\varphi_{\boldsymbol{x}}(\boldsymbol{y})$ for every $\boldsymbol{y}\in\ker C(\boldsymbol{x})$. 
For fixed $\boldsymbol{x}$, it remains to count the values of $\boldsymbol{z}$ satisfying this condition. 
Choosing a basis $\boldsymbol{y}_1,\ldots,\boldsymbol{y}_{n-r(\boldsymbol{x})}$ of the kernel reduces this condition to the independent equations
\begin{equation}
\boldsymbol{z}\cdot\boldsymbol{y}_j
=\varphi_{\boldsymbol{x}}(\boldsymbol{y}_j),
\qquad j=1,\ldots,n-r(\boldsymbol{x}).
\end{equation}
The row vectors $\boldsymbol{y}_j^{\mathsf T}$ are linearly independent, so the coefficient matrix has full row rank and the system is consistent for any right-hand side. Its solution set is therefore an affine subspace of dimension $r(\boldsymbol{x})$, giving exactly $2^{r(\boldsymbol{x})}$ admissible values of $\boldsymbol{z}$. Summing over $\boldsymbol{x}$ proves Eq.~\eqref{eq:Rn-rank-sum}.
\end{proof}

\begin{theorem}
\label{thm:dfe-rank-transfer}
For every $n\geq4$, the rank sum $R_n = \sum_{\boldsymbol{x}\in\mathbb{Z}_2^n} 2^{r(\boldsymbol{x})}$ is given by a matrix product
\begin{equation}
R_n=\boldsymbol p^{\mathsf T}T^{n-4}\boldsymbol q.
\label{eq:dfe-rank-transfer}
\end{equation}
Here, $\boldsymbol p\in\mathbb R^{1072}$, $T\in\mathbb R^{1072\times1072}$, and $\boldsymbol q\in\mathbb R^{1072}$ are independent of $n$. Their entries are explicitly specified in Eqs.~\eqref{eq:dfe-initial-boundary-vector}, \eqref{eq:dfe-rank-transition-matrix}, and~\eqref{eq:dfe-final-boundary-vector}, respectively. Therefore, the sample complexity of DFE can be evaluated efficiently using algorithm~\ref{alg:dfe-rank-transfer}. 
\end{theorem}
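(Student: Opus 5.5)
The plan is to compute $r(\boldsymbol{x})=\operatorname{rank}_{\mathbb Z_2}C(\boldsymbol{x})$ by an elimination that sweeps the sites from left to right. The sweep only needs a bounded amount of boundary information, so the sum $\sum_{\boldsymbol x}2^{r(\boldsymbol x)}$ can be accumulated by a finite automaton whose transition weights are $2^{\Delta r}$.

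\textbf{Structure of $C(\boldsymbol{x})$.} From Eq.~\eqref{eq:dfe-quadratic-matrices}, $B_{ij}(\boldsymbol x)\neq0$ only if $i$ and $j$ lie in a common hyperedge $\{q,q+1,q+2\}$, so $|i-j|\le2$. Moreover:
\begin{itemize}
\item $C_{i,i+1}(\boldsymbol x)=x_{i-1}+x_{i+2}$;
\item $C_{i,i+2}(\boldsymbol x)=x_{i+1}$;
\item $C$ is symmetric with zero diagonal.
\end{itemize}
Hence $C(\boldsymbol x)$ is the Gram matrix of an alternating bilinear form on $\mathbb Z_2^n$. It is pentadiagonal, and each entry depends only on $x$ within distance one of the window it couples. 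Its rank equals twice the number of hyperbolic pairs in a symplectic (Witt) decomposition.

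\textbf{Local reduction.} Process basis vectors $\boldsymbol e_1,\boldsymbol e_2,\dots$ in order and keep the form restricted to the processed span modulo the part already split off. The splitting rule is standard: whenever a processed vector $\boldsymbol v$ pairs nontrivially with some later vector $\boldsymbol w$, split off the hyperbolic plane $\langle \boldsymbol v,\boldsymbol w\rangle$. This adds $2$ to the rank. Then replace the remaining vectors $\boldsymbol u$ by $\boldsymbol u+\langle \boldsymbol u,\boldsymbol w\rangle \boldsymbol v+\langle \boldsymbol u,\boldsymbol v\rangle \boldsymbol w$. A processed vector that pairs trivially with everything not yet processed is radical and is discarded.

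Because $C$ has bandwidth two, after step $i$ only a bounded set of reduced vectors can still pair with $\boldsymbol e_{i+1},\boldsymbol e_{i+2},\dots$. These vectors are supported on $\{i-1,i\}$ in their future-coupling part. The state after step $i$ therefore consists of:
\begin{itemize}
\item the bits $x_{i-1},x_i$, which are needed to form future entries;
\item the residual linear functionals that the at most two surviving ``open'' vectors induce on $\boldsymbol e_{i+1},\boldsymbol e_{i+2}$;
\item the residual pairing among these open vectors.
\end{itemize}
This is a finite set, and only the reachable part matters. Enumerating the reachable states by breadth-first search from all choices of $(x_1,\dots,x_4)$ should give the $1072$ states.

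\textbf{Transfer matrix and boundary vectors.}
\begin{itemize}
\item The transition $T$ adjoins $\boldsymbol e_{i+1}$ for both values of $x_{i+1}$. It computes the new entries $C_{i,i+1},C_{i-1,i+1}$, performs the splittings that become forced, and records the weight $2^{\Delta r}$ with $\Delta r\in\{0,2\}$ (possibly $4$).
\item The vector $\boldsymbol p$ collects the states and weights $2^{r}$ of the partial eliminations of the first four sites. Four sites are needed because the interior formula $x_{i-1}+x_{i+2}$ requires the window to be complete.
\item The vector $\boldsymbol q$ closes the chain by setting $x_j=0$ for $j>n$ and returning $2^{\operatorname{rank}}$ of the residual form on the open vectors.
\end{itemize}
Summing over all paths then gives $R_n=\boldsymbol p^{\mathsf T}T^{n-4}\boldsymbol q$ by Lemma~\ref{thm:dfe-rank-sum}. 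This mirrors the contraction used in Theorem~\ref{thm:moment-local-product}.

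\textbf{Main obstacle.} The difficult step is proving that the elimination is \emph{Markovian with a finite state}. Concretely, the rank increment and the next state must depend only on the current state and $x_{i+1}$, and not on the earlier history. The argument I intend is the invariance of rank under symplectic changes of basis (Witt cancellation). The total rank equals the rank already split off plus the rank of the orthogonal complement, and that complement depends only on the residual Gram data. The bandwidth guarantees that this residual data never involves indices beyond $i+2$.

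The delicate case is when no pivot is available yet, so an open vector must be carried forward. I would handle it by always choosing the pivot deterministically as the lowest-index future partner. I would then check, as a finite verification, that the open set never exceeds two vectors, and so the state space closes.
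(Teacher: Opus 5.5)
\textbf{The proposal proves a different formula from the one stated.} The theorem does not just say that some finite transfer matrix computes $R_n$. It asserts the identity for the specific $\boldsymbol p$, $T$, $\boldsymbol q$ of Eqs.~\eqref{eq:dfe-initial-boundary-vector}, \eqref{eq:dfe-rank-transition-matrix} and~\eqref{eq:dfe-final-boundary-vector}, and your plan never engages with those objects.

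They come from ordinary row-by-row Gaussian elimination on $C(\boldsymbol x)$:
\begin{itemize}
\item The state is $s=(x_{i-1},x_i,x_{i+1},x_{i+2},\mathcal U_i)$. Here $\mathcal U_i\subseteq\mathbb Z_2^4$ is the set of vectors in the span of $C_{1,*},\dots,C_{i,*}$ that vanish on columns $1,\dots,i-2$, restricted to columns $i-1,\dots,i+2$.
\item Each row contributes weight $2^{\Delta r}$ with $\Delta r=\mathds{1}[\boldsymbol c_i(b)\notin\mathcal U_i\times\{0\}]\in\{0,1\}$, and $p_s=2^{\dim\mathcal U_2}$.
\item The number $1072=16\times 67$ is a $4$-bit window times \emph{all} $67$ subspaces of $\mathbb Z_2^4$. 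It is not a count of reachable states.
\end{itemize}
Your symplectic automaton has a different state space: at most two open vectors, their functionals on $e_{i+1},e_{i+2}$, a pairing bit, and some bits of $\boldsymbol x$. It also has rank increments in $\{0,2,\dots\}$, i.e.\ weights $2^{\Delta r}\in\{1,4,\dots\}$. Nothing in it corresponds to the $67$ subspaces of $\mathbb Z_2^4$, so the hope that breadth-first search ``should give the 1072 states'' is unfounded. Even completed, your argument would prove a different matrix-product formula.

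The idea you are missing is why $\mathcal U_i$ is a sufficient state. Rows $i+1,i+2,\dots$ vanish on columns $1,\dots,i-2$, so a later row can lie in the span of earlier rows only through combinations of rows $1,\dots,i$ that also vanish there. Given that, you would check four things:
\begin{itemize}
\item row $i+1$ restricted to columns $i-1,\dots,i+3$ is $\boldsymbol c_i(b)$ with $b=x_{i+3}$;
\item Eq.~\eqref{eq:dfe-boundary-update} gives exactly the next boundary subspace;
\item $\mathcal U_2$ and $\boldsymbol c^{(1)}_{\rm R},\boldsymbol c^{(2)}_{\rm R}$ account for the first two and last two rows;
\item paths through the automaton are in bijection with the strings $\boldsymbol x$.
\end{itemize}

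\textbf{Your own reduction is also not yet shown to be Markovian.} You defer exactly this to an unperformed ``finite verification'', and as described it is not closed.
\begin{itemize}
\item[(i)] Pivoting on the lowest-index \emph{unprocessed} partner $w$ and replacing $u\mapsto u+\langle u,w\rangle v+\langle u,v\rangle w$ changes future--future Gram entries. For example, if $x_3=1$ your rule splits $\langle e_1,e_2\rangle$ at the first step, after which $\langle e_3',e_4'\rangle=C_{3,4}+x_2$. Your state does not carry such corrections.
\item[(ii)] If open vectors are carried forward, a processed vector orthogonal to all unprocessed ones need not be radical. For $x_2=0$, $x_3=1$, the vector $e_1$ is orthogonal to $e_3,e_4,\dots$, yet $\langle e_1,e_2\rangle=1$ while $e_2$ is still open. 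Discarding $e_1$ then loses a hyperbolic plane.
\item[(iii)] Keeping only $x_{i-1},x_i$ and branching on $x_{i+1}$ cannot produce $C_{i,i+1}=x_{i-1}+x_{i+2}$. A look-ahead window is needed.
\end{itemize}
These defects can be repaired. Split off only planes $\langle k,o\rangle$ with $k$ orthogonal to every unprocessed vector, which leaves the future Gram matrix untouched, and carry enough bits of $\boldsymbol x$. That would give a valid, smaller transfer-matrix formula for $R_n$, but still not the identity the theorem asserts.
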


\begin{proof}
First, Eq.~\eqref{eq:dfe-quadratic-matrices} implies that $C(\boldsymbol{x})$ is a symmetric band matrix with zero diagonal. Its pattern and possibly nonzero entries are
\begin{equation}
C(\boldsymbol{x})=
\begin{pmatrix}
0 & * & * &   &   &   &   \\
* & 0 & * & * &   &   &   \\
* & * & 0 & * & * &   &   \\
  & * & * & 0 & * & * &   \\
  &   & \ddots & \ddots & \ddots & \ddots & \ddots \\
  &   &   & * & * & 0 & * \\
  &   &   &   & * & * & 0
\end{pmatrix},
\qquad
\begin{gathered}
C_{j,j+1}(\boldsymbol{x})=C_{j+1,j}(\boldsymbol{x})=x_{j-1}+x_{j+2},\quad 1\leq j\leq n-1,\\
C_{j,j+2}(\boldsymbol{x})=C_{j+2,j}(\boldsymbol{x})=x_{j+1},\quad 1\leq j\leq n-2.
\end{gathered}
\label{eq:C-matrix-order3}
\end{equation}
Here, blank entries are zero and each asterisk denotes the corresponding $\boldsymbol{x}$-dependent bit specified on the right. 
To calculate its rank, Gaussian elimination over $\mathbb Z_2$ begins with the first two rows: $C_{1,*}=(0,x_3,x_2,0,\ldots)$ and $C_{2,*}=(x_3,0,x_1+x_4,x_3,0,\ldots)$. 
Since both rows have no support beyond column $4$, their span is represented by the boundary subspace $\mathcal U_2=\operatorname{span}\!\left\{(0,x_3,x_2,0),(x_3,0,x_1+x_4,x_3)\right\} \subseteq \mathbb{Z}_2^4$. 
Its dimension is exactly the rank of the first two rows. Hence, the element of the initial vector $\boldsymbol{p}$ is given by
\begin{equation}
p_s=
\begin{cases}
2^{\dim\mathcal U_2},
&s=(x_1,x_2,x_3,x_4,\mathcal U_2),\quad
(x_1,x_2,x_3,x_4)\in\mathbb Z_2^4,\\
0,&\text{otherwise}.
\end{cases}
\label{eq:dfe-initial-boundary-vector}
\end{equation}
For $d=0,\ldots,4$, the number of $d$-dimensional subspaces $\mathcal{U}_2$ of $\mathbb Z_2^4$ is the Gaussian binomial coefficient $\genfrac{[}{]}{0pt}{}{4}{d}_2=\prod_{j=0}^{d-1}(2^{4}-2^j)/(2^{d}-2^j)$, which is $1$ for $d = 0$. 
Hence, $\mathbb Z_2^4$ has $\sum_{d=0}^4\genfrac{[}{]}{0pt}{}{4}{d}_2=1+15+35+15+1=67$ linear subspaces in total, and the vector $\boldsymbol{p}$ contains $2^4\times67=1072$ elements.

We next propagate this boundary data from left to right. Suppose that rows $1,\ldots,i$ have been processed, with $i\geq2$. Since no subsequent row has support in columns $1,\ldots,i-2$, only combinations of the processed rows that vanish in these columns can be used to eliminate row $i+1$. Hence, we define
\begin{equation}
\mathcal U_i\coloneqq\left\{(u_{i-1},u_i,u_{i+1},u_{i+2}):
\boldsymbol{u}\in\operatorname{span}\{C_{1,*},\ldots,C_{i,*}\},
u_1=\cdots=u_{i-2}=0\right\}\subseteq\mathbb Z_2^4.
\label{eq:dfe-boundary-subspace}
\end{equation}
The next Gaussian elimination step uses only the entries shown below.
\begin{equation}
\begin{array}{c|c|cccc|c|c}
\text{column index}
&1{:}i-2&i-1&i&i+1&i+2&i+3&i+4{:}n\\ \hline
\boldsymbol u\in\mathcal U_i
&\boldsymbol 0&u_{i-1}&u_i&u_{i+1}&u_{i+2}&0&\boldsymbol 0\\
C_{i+1,*}
&\boldsymbol 0&x_i&x_{i-1}+x_{i+2}&0&x_i+b&x_{i+2}&\boldsymbol 0
\end{array}
\label{eq:dfe-boundary-membership-test}
\end{equation}
At this step, $b=x_{i+3}$ is the only newly appended bit. 
Let $\boldsymbol c_i(b) = (x_i, x_{i-1}+x_{i+2}, 0, x_i+b, x_{i+2})$ denote the five possibly nonzero entries of $C_{i+1,*}$ in columns $i-1,\ldots,i+3$. 
The new row leaves the rank unchanged if $\boldsymbol c_i(b)$ can be expressed as a linear combination of vectors in $\mathcal U_i\times\{0\}$, and increases it by one otherwise. Accordingly, for $s=(x_{i-1},x_i,x_{i+1},x_{i+2},\mathcal U_i)$, the rank increment is $\upDelta r(s,b)\coloneqq\mathds{1}[\boldsymbol c_i(b)\notin\mathcal U_i\times\{0\}]$. 
To move the active window one site to the right, we discard the column $i - 1$, and the resulting boundary subspace is
\begin{equation}
\mathcal U_{i+1}=\left\{(v_2,v_3,v_4,v_5):\boldsymbol{v}\in\operatorname{span}\!\left[(\mathcal U_i\times\{0\})\cup\{\boldsymbol{c}_i(b)\}\right],\ v_1=0\right\}\subseteq\mathbb Z_2^4,
\label{eq:dfe-boundary-update}
\end{equation}
so the updated state is $s \rightarrow s' = \Phi_b(s)\coloneqq(x_i,x_{i+1},x_{i+2},b,\mathcal U_{i+1})$. The corresponding transition matrix is
\begin{equation}
T_{s,s'}\coloneqq
\sum_{\substack{b\in\mathbb Z_2:\,
\Phi_b(s)=s'}}
2^{\upDelta r(s,b)}.
\label{eq:dfe-rank-transition-matrix}
\end{equation}
Each row of $T$ contains two transitions, corresponding to $b=0$ and $b=1$. The matrix $T$ can be constructed by direct enumeration. 

At the right boundary, write $s=(x_{n-3},x_{n-2},x_{n-1},x_n,\mathcal U_{n-2})$. The final two rows $C_{n-1,*}$ and $C_{n,*}$, restricted to the possibly nonzero columns of the matrix, are $\boldsymbol{c}_{\rm R}^{(1)}=(x_{n-2},x_{n-3}+x_n,0,x_{n-2})$, and $\boldsymbol{c}_{\rm R}^{(2)}=(0,x_{n-1},x_{n-2},0)$. 
Their rank increments define
\begin{equation}
q_s=2^{\mathds{1}[\boldsymbol{c}_{\rm R}^{(1)}\notin\mathcal U_{n-2}]
+\mathds{1}[\boldsymbol{c}_{\rm R}^{(2)}\notin
\operatorname{span}(\mathcal U_{n-2}\cup\{\boldsymbol{c}_{\rm R}^{(1)}\})]}.
\label{eq:dfe-final-boundary-vector}
\end{equation}
The initial vector $\boldsymbol p$ selects $(x_1,x_2,x_3,x_4)$ and contributes $2^{\dim\mathcal U_2}$, which accounts for the rank of the first two rows. The $n-4$ transitions append $x_5,\ldots,x_n$ and contribute the rank increments of rows $3,\ldots,n-2$, while $\boldsymbol q$ contributes those of the final two rows. Thus, complete paths are in one-to-one correspondence with bit strings $\boldsymbol{x}$, and each path has weight $2^{r(\boldsymbol{x})}$. Expanding the matrix product therefore gives $\boldsymbol p^{\mathsf T}T^{n-4}\boldsymbol q=\sum_{\boldsymbol{x}}2^{r(\boldsymbol{x})}=R_n$, proving Eq.~\eqref{eq:dfe-rank-transfer}.
\end{proof}

\begin{center}
\begin{minipage}{0.96\linewidth}
\refstepcounter{algorithm}\label{alg:dfe-rank-transfer}
\hrule height 0.8pt
\vspace{0.45em}
\noindent\textbf{Algorithm \thealgorithm}\quad Transfer-matrix evaluation of $R_n$
\vspace{0.4em}
\hrule height 0.4pt
\vspace{0.45em}
\small
\begin{algorithmic}[1]
\REQUIRE Number of qubits $n\geq4$
\ENSURE $R_n=\sum_{\boldsymbol{x}}2^{r(\boldsymbol{x})}$
\STATE $\mathsf{UList}\gets$ the list of all $67$ linear subspaces of $\mathbb Z_2^4$, and $\mathsf{StateList}\gets\mathbb Z_2^4\times\mathsf{UList}$ and index its $1072$ elements
\STATE $\boldsymbol{p}\gets\boldsymbol{0}$
\FOR{each $(x_1,x_2,x_3,x_4)\in\mathbb Z_2^4$}
    \STATE $\mathcal U_2\gets\operatorname{span}\{(0,x_3,x_2,0),(x_3,0,x_1+x_4,x_3)\}$
    \STATE $s\gets(x_1,x_2,x_3,x_4,\mathcal U_2)$ and $p_s\gets p_s+2^{\dim\mathcal U_2}$
\ENDFOR
\STATE $T\gets0_{1072\times1072}$
\FOR{each $s=(x_{i-1},x_i,x_{i+1},x_{i+2},\mathcal U_i)\in\mathsf{StateList}$}
    \FOR{each $b\in\mathbb Z_2$}
        \STATE $\boldsymbol{c}_i(b)\gets(x_i,x_{i-1}+x_{i+2},0,x_i+b,x_{i+2})$ over $\mathbb Z_2$, and $\mathcal U_i^{+}\gets\mathcal U_i\times\{0\}$
        \STATE $\upDelta r\gets\mathds{1}[\boldsymbol{c}_i(b)\notin\mathcal U_i^{+}]$, and $\mathcal U_{i+1}\gets\{(v_2,v_3,v_4,v_5):\boldsymbol{v}\in\operatorname{span}(\mathcal U_i^{+}\cup\{\boldsymbol{c}_i(b)\}),\ v_1=0\}$
        \STATE $s'\gets(x_i,x_{i+1},x_{i+2},b,\mathcal U_{i+1})$, and $T_{s,s'}\gets T_{s,s'}+2^{\upDelta r}$
    \ENDFOR
\ENDFOR
\FOR{each $s=(x_{n-3},x_{n-2},x_{n-1},x_n,\mathcal U_{n-2})\in\mathsf{StateList}$}
    \STATE $\boldsymbol{c}_{\rm R}^{(1)}\gets(x_{n-2},x_{n-3}+x_n,0,x_{n-2})$, and $\boldsymbol{c}_{\rm R}^{(2)}\gets(0,x_{n-1},x_{n-2},0)$
    \STATE $q_s\gets2^{\mathds{1}[\boldsymbol{c}_{\rm R}^{(1)}\notin\mathcal U_{n-2}]+\mathds{1}[\boldsymbol{c}_{\rm R}^{(2)}\notin\operatorname{span}(\mathcal U_{n-2}\cup\{\boldsymbol{c}_{\rm R}^{(1)}\})]}$
\ENDFOR
\STATE \textbf{return} $R_n=\boldsymbol{p}^{\mathsf T}T^{n-4}\boldsymbol{q}$
\end{algorithmic}
\vspace{0.25em}
\hrule height 0.8pt
\end{minipage}
\end{center}

\section{\label{app:dual}An analytical bound from $2n+1$ stabilizing operators}
The dual LP associated with Eq.~\eqref{eq:LP1} is
\begin{equation}
    \begin{aligned}
    \label{eq:dual}
        {F}_{\min}^\star = \max_{y_0,\{y_{\boldsymbol{x}}\}}& \quad  y_{0} + \sum_{\boldsymbol{x}\in\mathcal{M}} \langle s_{\boldsymbol{x}}\rangle _\sigma y_{\boldsymbol{x}}\\
        {\rm s.t. } & \quad y_{0} + \sum_{\boldsymbol{x}\in \mathcal{M}} y_{\boldsymbol{x}} (-1)^{\boldsymbol{x}\cdot \boldsymbol{\lambda}} \leq \delta_{\boldsymbol{\lambda}, \boldsymbol{0}},\quad \forall \boldsymbol{\lambda}\in \mathbb{Z}_2^n,
    \end{aligned}
\end{equation}
where $\mathcal{M} \subseteq \mathbb{Z}_2^n$ labels the set of measured stabilizing operators. 

\begin{lemma}
    Let $n\geq3$ and suppose that the set $\{g_i\}_{i=1}^n \cup \{g_1g_2\cdots g_n\} \cup \{\prod_{j\neq i}g_j\}_{i=1}^n$ is measured. Equivalently, $|\mathcal{M}|=2n+1$ and $\mathcal{M}=\{e_1,\ldots,e_n\}\cup\{\boldsymbol{1}\}\cup\{\boldsymbol{1}-e_1,\ldots,\boldsymbol{1}-e_n\}\subset\mathbb{Z}_2^n$. Here, $e_i$ denotes the standard basis vector with a $1$ in the $i$-th position and $0$ elsewhere. Then the primal optimum in Eq.~\eqref{eq:LP1} obeys
    \begin{equation}
    \label{eq:app_A21}
        F_{\min} \geq \frac{4-n}{8}\left(1+\langle g_1\cdots g_n\rangle_\sigma\right)
        +\frac{1}{8}\left(\sum_{i=1}^n\langle g_i\rangle_\sigma
        +\sum_{i=1}^n\left\langle\prod_{j\neq i}g_j\right\rangle_\sigma\right)
        =\frac{4-n}{8}\left(1+\left\langle S^{(n)}\right\rangle_\sigma\right)
        +\frac{1}{8}\left(L_1\left\langle S^{(1)}\right\rangle_\sigma
        +L_{n-1}\left\langle S^{(n-1)}\right\rangle_\sigma\right).
    \end{equation}
\end{lemma}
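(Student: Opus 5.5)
The plan is to use weak duality for the LP~\eqref{eq:LP1} through its dual~\eqref{eq:dual}. Every dual-feasible point $(y_0,\{y_{\boldsymbol{x}}\}_{\boldsymbol{x}\in\mathcal{M}})$ gives a lower bound $y_0+\sum_{\boldsymbol{x}}\langle s_{\boldsymbol{x}}\rangle_\sigma y_{\boldsymbol{x}}\le F_{\min}^\star\le F_{\min}$. So it suffices to exhibit one explicit feasible dual point whose objective reproduces the right-hand side of Eq.~\eqref{eq:app_A21}. Reading off the coefficients in Eq.~\eqref{eq:app_A21}, I will choose
\begin{equation*}
y_0=y_{\boldsymbol{1}}=\frac{4-n}{8},\qquad y_{e_i}=y_{\boldsymbol{1}-e_i}=\frac{1}{8}\quad (i\in[n]).
\end{equation*}
With this choice the dual objective is exactly the stated bound, and the equality with the $S^{(k)}$ form follows from $L_1=L_{n-1}=n$ and the definition~\eqref{eq:fidelity_spinmoment}.

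The remaining work is to verify the dual constraints for all $\boldsymbol{\lambda}\in\mathbb{Z}_2^n$. By the permutation symmetry of the chosen weights, the left-hand side depends only on $m\coloneqq|\boldsymbol{\lambda}|=n_-$. I will compute each contribution separately:
\begin{itemize}
\item $\sum_i(-1)^{e_i\cdot\boldsymbol{\lambda}}=n-2m$;
\item $(-1)^{\boldsymbol{1}\cdot\boldsymbol{\lambda}}=(-1)^m$;
\item $\sum_i(-1)^{(\boldsymbol{1}-e_i)\cdot\boldsymbol{\lambda}}=\sum_i(-1)^{m-\lambda_i}=(-1)^m(n-2m)$.
\end{itemize}
Collecting terms, the constraint function is
\begin{equation*}
f(m)=\frac{1}{8}\left[(4-n)\bigl(1+(-1)^m\bigr)+(n-2m)\bigl(1+(-1)^m\bigr)\right]=\frac{\bigl(1+(-1)^m\bigr)(4-2m)}{8}.
\end{equation*}

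It then remains to check $f(m)\le\delta_{m,0}$ case by case:
\begin{itemize}
\item $m=0$ gives $f=1$, which is allowed since $\boldsymbol{\lambda}=\boldsymbol{0}$ is the target.
\item Odd $m$ gives $f=0$.
\item Even $m\ge2$ gives $f=(2-m)/2\le0$.
\end{itemize}
Hence the point is dual feasible, and weak duality finishes the proof. The constraints are tight at $m=0$, at all odd $m$, and at $m=2$, which suggests the bound is close to optimal for this $\mathcal{M}$. The hypothesis $n\ge3$ only ensures the $2n+1$ labels in $\mathcal{M}$ are distinct, so that the sum is well defined; for $n=2$ the operators $\boldsymbol{1}-e_i$ coincide with $e_j$.

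The main obstacle is guessing the dual point, but the target formula dictates it uniquely. The only care needed is in the sign bookkeeping for $(\boldsymbol{1}-e_i)\cdot\boldsymbol{\lambda}=m-\lambda_i \pmod 2$, which is what makes the $(1+(-1)^m)$ factor appear.
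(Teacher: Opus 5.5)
Your proposal is correct and follows the paper's proof essentially verbatim. The paper uses the same dual point $\tilde y_0=\tilde y_{\boldsymbol{1}}=(4-n)/8$, $\tilde y_{e_i}=\tilde y_{\boldsymbol{1}-e_i}=1/8$, checks saturation at $\boldsymbol{\lambda}=\boldsymbol{0}$ and the values $(2-k)/2$ (even $k$) and $0$ (odd $k$) for $|\boldsymbol{\lambda}|=k\geq1$, then concludes by weak duality; your closed form $\bigl(1+(-1)^m\bigr)(4-2m)/8$ is just a compact packaging of that same case analysis.
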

\begin{proof}
    We now construct a feasible dual point for the dual LP. Set
    \begin{equation}
        \tilde{y}_0 = {4 - n \over 8}, \quad \tilde{y}_{\boldsymbol{x} = e_i} = \tilde{y}_{\boldsymbol{x} = \boldsymbol{1} - e_i} = {1\over 8},\quad \tilde{y}_{\boldsymbol{x} = \boldsymbol{1}} = {4 - n \over 8}, \quad \forall i \in [n],
    \end{equation}
    whose objective value is the right-hand side of Eq.~\eqref{eq:app_A21}. For $\boldsymbol{\lambda}=\boldsymbol{0}$, the dual constraint is saturated:
    \begin{equation}
        \tilde{y}_0 + \sum_{\boldsymbol{x}\in\mathcal{M}} \tilde{y}_{\boldsymbol{x}} (-1)^{\boldsymbol{x}\cdot \boldsymbol{\lambda}} = {4 - n \over 8} + {n \over 8} + {n \over 8} + {4-n\over 8} = 1.
    \end{equation}
For $\boldsymbol{\lambda}$ of Hamming weight $k\geq1$, we have 
\begin{equation}
    \tilde{y}_0 + \sum_{\boldsymbol{x}\in\mathcal{M}} \tilde{y}_{\boldsymbol{x}} (-1)^{\boldsymbol{x}\cdot \boldsymbol{\lambda}} = {4 - n \over 8} + {n - 2k \over 8} + (-1)^k{n - 2k \over 8} + (-1)^{k}{4 - n\over 8} = \begin{cases}
        {2 - k\over 2}, & k\text{ is even}, \\
        {0}, & k \text{ is odd}.\\
    \end{cases} 
\end{equation}
The dual point is therefore feasible, and weak duality proves Eq.~\eqref{eq:app_A21}. 
\end{proof}

\section{Tightness of bounds from stabilizer moments\label{app:tightness}}

We show that, when all terms in $S^{(k)}$ have the same expectation value, replacing their individual constraints by the stabilizer-moment constraint $\langle S^{(k)}\rangle$ leaves both fidelity extrema unchanged.

\begin{lemma}
\label{lem:equiv_individual_moments}
Fix $\mathcal{K}\subseteq[n]$ and suppose that
$\langle s_{\boldsymbol{x}}\rangle_\sigma=c_k$ for every
$\boldsymbol{x}$ with $|\boldsymbol{x}|=k$ and $k\in\mathcal{K}$.
Consider the LPs with the individual constraints
\begin{equation}
\label{eq:appB1}
F_{\min/\max}^{\rm ind}
=\underset{\boldsymbol{p}\in\mathcal{P}_{2^n}}{\min/\max}
\left\{
p_{\boldsymbol{0}}:
\sum_{\boldsymbol{\lambda}\in\mathbb{Z}_2^n}
(-1)^{\boldsymbol{x}\cdot\boldsymbol{\lambda}}
p_{\boldsymbol{\lambda}}
=c_k,\ 
\forall\,\boldsymbol{x}\ \text{with }|\boldsymbol{x}|=k,\ k\in\mathcal{K}
\right\},
\end{equation}
or only the stabilizer-moment constraints
\begin{equation}
\label{eq:appB2}
F_{\min/\max}^{\rm mom}
=\underset{\boldsymbol{p}\in\mathcal{P}_{2^n}}{\min/\max}
\left\{
p_{\boldsymbol{0}}:
\frac{1}{L_k}
\sum_{\substack{\boldsymbol{x}\in\mathbb{Z}_2^n\\|\boldsymbol{x}|=k}}
\sum_{\boldsymbol{\lambda}\in\mathbb{Z}_2^n}
(-1)^{\boldsymbol{x}\cdot\boldsymbol{\lambda}}
p_{\boldsymbol{\lambda}}
=c_k,\ 
\forall\,k\in\mathcal{K}
\right\}.
\end{equation}
Then $F_{\min}^{\rm ind}=F_{\min}^{\rm mom}$ and $F_{\max}^{\rm ind}=F_{\max}^{\rm mom}$.
\end{lemma}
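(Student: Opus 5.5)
One inclusion between the two feasible sets is immediate, and the other follows from a symmetrization argument.

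\emph{Easy direction.} Every $\boldsymbol{p}$ feasible for the individual LP in Eq.~\eqref{eq:appB1} is also feasible for the moment LP in Eq.~\eqref{eq:appB2}. Averaging the $L_k$ equalities $\sum_{\boldsymbol{\lambda}}(-1)^{\boldsymbol{x}\cdot\boldsymbol{\lambda}}p_{\boldsymbol{\lambda}}=c_k$ over $|\boldsymbol{x}|=k$ gives exactly the moment constraint. Hence the individual feasible set is contained in the moment feasible set, and
\[
F_{\min}^{\rm mom}\leq F_{\min}^{\rm ind}, \qquad F_{\max}^{\rm ind}\leq F_{\max}^{\rm mom}.
\]

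\emph{Reverse direction by symmetrization.} Take any $\boldsymbol{p}$ feasible for the moment LP and attaining the relevant extremum. Let $S_n$ act on bit strings by permuting coordinates, and define the averaged distribution
\[
\bar p_{\boldsymbol{\lambda}}=\frac{1}{n!}\sum_{\pi\in S_n}p_{\pi\boldsymbol{\lambda}}.
\]
This is the classical counterpart of the $\rho_{\rm sym}$ twirl in the main text. We need four facts about $\bar{\boldsymbol{p}}$.

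First, $\bar{\boldsymbol{p}}\in\mathcal{P}_{2^n}$, since it is an average of probability vectors.

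Second, the objective is unchanged: $\bar p_{\boldsymbol{0}}=p_{\boldsymbol{0}}$, because $\pi\boldsymbol{0}=\boldsymbol{0}$ for every $\pi$.

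Third, $\bar{\boldsymbol{p}}$ still satisfies the moment constraints. The moment functional is
\[
\frac{1}{L_k}\sum_{|\boldsymbol{x}|=k}(-1)^{\boldsymbol{x}\cdot\boldsymbol{\lambda}}=V_{|\boldsymbol{\lambda}|}(k),
\]
which depends only on the Hamming weight $|\boldsymbol{\lambda}|$. It is therefore invariant under permutations of $\boldsymbol{\lambda}$, so the value of each moment constraint is preserved under averaging.

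Fourth, and this is the key step, $\bar{\boldsymbol{p}}$ satisfies every individual constraint. For a fixed $\boldsymbol{x}$ with $|\boldsymbol{x}|=k$, substitute $\boldsymbol{\lambda}\to\pi^{-1}\boldsymbol{\lambda}$ in each term and use $\boldsymbol{x}\cdot\pi^{-1}\boldsymbol{\lambda}=\pi\boldsymbol{x}\cdot\boldsymbol{\lambda}$. This gives
\[
\sum_{\boldsymbol{\lambda}}(-1)^{\boldsymbol{x}\cdot\boldsymbol{\lambda}}\bar p_{\boldsymbol{\lambda}}=\frac{1}{n!}\sum_{\pi}\sum_{\boldsymbol{\lambda}}(-1)^{\pi\boldsymbol{x}\cdot\boldsymbol{\lambda}}p_{\boldsymbol{\lambda}}.
\]
As $\pi$ ranges over $S_n$, $\pi\boldsymbol{x}$ runs uniformly over all weight-$k$ strings, each hit $n!/L_k$ times. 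The right-hand side therefore equals the moment average of $\boldsymbol{p}$ at order $k$, which is $c_k$. So $\bar{\boldsymbol{p}}$ is feasible for the individual LP with the same objective value $p_{\boldsymbol{0}}$. This gives
\[
F_{\min}^{\rm ind}\leq F_{\min}^{\rm mom}, \qquad F_{\max}^{\rm ind}\geq F_{\max}^{\rm mom}.
\]

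Combining the two directions yields equality of both extrema. The only places needing care are the index bookkeeping in the permutation substitution and the observation that both LPs are feasible. Feasibility is not an issue: the hypothesis that all $\langle s_{\boldsymbol{x}}\rangle_\sigma=c_k$ arises from the actual state $\sigma$ makes its diagonal $\boldsymbol{p}$ feasible for both LPs, and attainment of the extrema holds because each LP is over a compact polytope.
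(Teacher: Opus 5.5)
Your proposal is correct and follows essentially the same route as the paper: first the containment of the individual feasible set in the moment feasible set, then $S_n$-symmetrization of a moment-feasible $\boldsymbol{p}$, which fixes $p_{\boldsymbol{0}}$ and turns each individual constraint into the order-$k$ moment average $c_k$. Your third fact, that $\bar{\boldsymbol{p}}$ still satisfies the moment constraints, is redundant because it already follows from your fourth fact together with the easy direction, but including it does no harm.
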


\begin{proof}
Straightforwardly, every distribution feasible for Eq.~\eqref{eq:appB1} is also feasible for Eq.~\eqref{eq:appB2}. Conversely, let $\boldsymbol{p}$ be feasible for Eq.~\eqref{eq:appB2} and symmetrize it as
\begin{equation}
\overline{\boldsymbol{p}}
=\frac{1}{n!}\sum_{\pi\in S_n}\pi\cdot\boldsymbol{p},
\qquad
(\pi\cdot\boldsymbol{p})_{\boldsymbol{\lambda}}
\coloneqq p_{\pi^{-1}\boldsymbol{\lambda}}.
\end{equation}
This symmetrization preserves $p_{\boldsymbol{0}}$. For every
$\boldsymbol{x}$ with $|\boldsymbol{x}|=k$, it also gives
\begin{equation}
\begin{aligned}
\sum_{\boldsymbol{\lambda}\in\mathbb{Z}_2^n}
(-1)^{\boldsymbol{x}\cdot\boldsymbol{\lambda}}
\overline{p}_{\boldsymbol{\lambda}}
=\frac{1}{n!}\sum_{\pi\in S_n}
\sum_{\boldsymbol{\lambda}\in\mathbb{Z}_2^n}
(-1)^{\boldsymbol{x}\cdot\boldsymbol{\lambda}}
p_{\pi^{-1}\boldsymbol{\lambda}} 
=\frac{1}{n!}\sum_{\pi\in S_n}
\sum_{\boldsymbol{\lambda}\in\mathbb{Z}_2^n}
(-1)^{(\pi^{-1}\boldsymbol{x})\cdot\boldsymbol{\lambda}}
p_{\boldsymbol{\lambda}} 
=\frac{1}{L_k}
\sum_{\substack{\boldsymbol{y}\in\mathbb{Z}_2^n\\|\boldsymbol{y}|=k}}
\sum_{\boldsymbol{\lambda}\in\mathbb{Z}_2^n}
(-1)^{\boldsymbol{y}\cdot\boldsymbol{\lambda}}
p_{\boldsymbol{\lambda}}
=c_k.
\end{aligned}
\end{equation}
Thus, $\overline{\boldsymbol{p}}$ is also feasible for Eq.~\eqref{eq:appB1} with the same objective value $p_{\boldsymbol{0}}$. The two feasible sets therefore attain the same set of objective values, so both their minima and their maxima coincide.
\end{proof}

\section{Numerical procedure for Fig.~\ref{fig:measurement_cost}}
\label{app:fig2-procedure}

We describe how each point in the main measurement-cost panels of Fig.~\ref{fig:measurement_cost} is obtained. For each system size $n$ and prescribed fidelity accuracy $\upDelta$, we fix $\delta=0.01$ and choose the noise strength to give a true fidelity of $0.6$. The LP centres $\tilde{\mu}_k$ are set to the expectation values predicted by the noise model, while the tolerance $\epsilon$ is retained in every measurement constraint to account for finite statistics. Figure~\ref{fig:fig2-workflow} summarizes the procedure for both noise models.

\begin{figure}[htbp]
    \centering
    \begin{tikzpicture}[
        font=\small,
        process/.style={draw=red!65!black,fill=red!2,rounded corners=2pt,
            align=center,text width=7.6cm,inner xsep=7pt,inner ysep=6pt},
        model/.style={process,text width=5.6cm,minimum height=2.05cm},
        decision/.style={diamond,aspect=3,draw=red!65!black,fill=red!5,
            align=center,inner xsep=5pt,inner ysep=4pt},
        flow/.style={-{Latex[length=2.2mm]},draw=red!65!black,semithick},
        branch/.style={font=\footnotesize,fill=white,inner sep=2pt}
    ]
    \node[process] (input) at (0,0) {
        Fix $n$, $\upDelta$, $\delta=0.01$, and true fidelity $F=0.6$\\
        Select the global or local noise model.
    };
    \node[model] (global) at (-3.7,-2.55) {
        \textbf{Global depolarizing noise}\\
        Set $\eta=0.4/(1-2^{-n})$\\
        and $\tilde{\mu}_k=1-\eta$.
    };
    \node[model] (local) at (3.7,-2.55) {
        \textbf{Local depolarizing noise}\\
        Choose $p$ with $2^{-n}G(1)=0.6$\\
        and compute $\tilde{\mu}_k$ by the transfer matrix\\
        (Appendix~\ref{sec:supp-local-depolarizing}).
    };
    \node[process] (setup) at (0,-5.15) {
        Choose $\epsilon\in\{10^{-2},10^{-3},10^{-4}\}$\\
        and start with $K=1$.
    };
    \node[process] (lp) at (0,-7.05) {
        \textbf{Solve the LP in Eq.~\eqref{eq:main}}\\
        Use $J_K=\{S^{(1)},\ldots,S^{(K)}\}$ with tolerance $\epsilon$\\
        to obtain $F_{\min}$ and $F_{\max}$.
    };
    \node[decision,text width=3.8cm] (width) at (0,-9.65) {
        $F_{\max}-F_{\min}\leq2\upDelta$?
    };
    \node[process] (cost) at (0,-12.25) {
        \textbf{Stop at $K_{\rm stop}=K$}\\
        Evaluate $l$, $M$, and the appropriate cost bounds at $K_{\rm stop}$.
    };
    \node[process] (plot) at (0,-14.25) {
        Repeat for the three tolerances and plot\\
        the smallest valid expected-cost upper bound.
    };

    \draw[flow] (input.south) -- (0,-1.0) -| (global.north);
    \draw[flow] (input.south) -- (0,-1.0) -| (local.north);
    \draw[flow,-] (global.south) |- (0,-4.05);
    \draw[flow,-] (local.south) |- (0,-4.05);
    \draw[flow] (0,-4.05) -- (setup.north);
    \draw[flow] (setup) -- (lp);
    \draw[flow] (lp) -- (width);
    \draw[flow] (width) -- node[branch,right] {Yes} (cost);
    \draw[flow] (cost) -- (plot);
    \draw[flow] (width.east) -- node[branch,above] {No ($K<n$)} (6.1,-9.65)
        -- node[branch,right] {$K\leftarrow K+1$} (6.1,-7.05) -- (lp.east);
    \end{tikzpicture}
    \caption{Numerical procedure for the main panels of Fig.~\ref{fig:measurement_cost}. The global branch applies to arbitrary hypergraph target states, while the local branch uses the 1D order-3 cluster state. If the width criterion is not met for any $K\leq n$, the corresponding tolerance is excluded from the cost minimization. The measurement allocations and cost bounds are given in Appendix~\ref{app:large-deviation bounds}.}
    \label{fig:fig2-workflow}
    \label{fig:fig2a-workflow}
\end{figure}

\end{document}